\newcommand{\draft}{0}
\newcommand{\llncs}{0}
\ifnum\llncs=1\relax
  \documentclass[runningheads,orivec]{llncs}
\else
  
  \documentclass[11pt]{article}
  \usepackage[margin=1in]{geometry}
\fi

\usepackage[T1]{fontenc}
\usepackage{xspace,comment}
\usepackage{lmodern,amsmath,amssymb,mathtools,booktabs,array,microtype}
\usepackage{xcolor}

\ifnum\llncs=1\relax
\else
  \usepackage{amsthm}
\fi
\usepackage[hidelinks]{hyperref}
\usepackage{cleveref}

\ifnum\llncs=1\relax
  \counterwithin{theorem}{section}
  \makeatletter
  \@for\pke@env:=lemma,proposition,corollary,remark\do{%
    \expandafter\let\csname c@\pke@env\endcsname\c@theorem
    \expandafter\let\csname the\pke@env\endcsname\thetheorem
    \expandafter\let\csname theH\pke@env\endcsname\theHtheorem
  }
  \makeatother
  \spnewtheorem{observation}[theorem]{Observation}{\bfseries}{\itshape}
  \spnewtheorem{openproblem}{Open Problem}{\bfseries}{\upshape}
\else
  \theoremstyle{plain}
  \newtheorem{theorem}{Theorem}[section]
  \newtheorem{lemma}[theorem]{Lemma}
  \newtheorem{proposition}[theorem]{Proposition}
  \newtheorem{corollary}[theorem]{Corollary}
  \newtheorem{observation}[theorem]{Observation}
  \newtheorem{definition}{Definition}
  \theoremstyle{remark}
  \newtheorem{remark}[theorem]{Remark}
  \theoremstyle{definition}
  \newtheorem{openproblem}{Open Problem}

  \newcommand{\titlerunning}[1]{}
  \newcommand{\authorrunning}[1]{}
  \newcommand{\inst}[1]{\textsuperscript{#1}}
  \newcounter{affiliation}
  \makeatletter
  \newcommand{\institute}[1]{%
    \gdef\@date{\small
      \begingroup
      \setcounter{affiliation}{1}%
      \renewcommand{\and}{\par\stepcounter{affiliation}\textsuperscript{\arabic{affiliation}}}%
      \textsuperscript{\arabic{affiliation}}#1\par
      \endgroup}%
  }
  \makeatother
  \date{}
\fi
\crefname{observation}{observation}{observations}
\Crefname{observation}{Observation}{Observations}
\crefname{openproblem}{open problem}{open problems}
\Crefname{openproblem}{Open Problem}{Open Problems}

\usepackage{tikz}
\usetikzlibrary{arrows.meta,calc,fit}
\usepackage[normalem]{ulem}

\newcommand{\bits}{\{0,1\}}
\newcommand{\ket}[1]{\lvert #1\rangle}
\newcommand{\bra}[1]{\langle #1\rvert}
\newcommand{\negl}{\mathsf{negl}}
\newcommand{\poly}{\mathsf{poly}}
\newcommand{\Dec}{\mathsf{Dec}}
\newcommand{\PSPACE}{\mathsf{PSPACE}}
\newcommand{\KeyGen}{\mathsf{KeyGen}}
\newcommand{\Enc}{\mathsf{Enc}}
\newcommand{\Obf}{\mathsf{Obf}}
\newcommand{\Eval}{\mathsf{Eval}}
\newcommand{\blank}{\square}

\DeclareMathOperator{\SD}{SD}
\newcommand{\Image}{\operatorname{Im}}
\DeclareMathOperator{\supp}{supp}
\DeclareMathOperator{\Interp}{Interp}
\newcommand{\numberthis}{\refstepcounter{equation}\tag{\theequation}}

\numberwithin{equation}{section}
\ifnum\draft=1
    \newcommand{\amit}[1]{{\color{violet}[\textbf{Amit}: #1]}}
    \newcommand{\mo}[1]{{\color{blue}[\textbf{Mohammed}: #1]}}
\else
    \newcommand{\amit}[1]{}
    \newcommand{\mo}[1]{}
\fi

\titlerunning{Quantum Cryptomania with Classical Communication without One-Way Functions}
\authorrunning{Mohammed Barhoush and Amit Behera}
\ifnum\llncs=0
    \AddToHook{env/theorem/begin}{\crefalias{theorem}{theorem}}
    \AddToHook{env/lemma/begin}{\crefalias{theorem}{lemma}}
    \AddToHook{env/proposition/begin}{\crefalias{theorem}{proposition}}
    \AddToHook{env/corollary/begin}{\crefalias{theorem}{corollary}}
    \AddToHook{env/observation/begin}{\crefalias{theorem}{observation}}
    \AddToHook{env/remark/begin}{\crefalias{theorem}{remark}}
\fi

\begin{document}
\title{Quantum Cryptomania with Classical Communication without One-Way Functions
}
\author{Mohammed Barhoush\inst{1} \and Amit Behera\inst{2}}
\institute{
IonQ Inc.\\
\texttt{mohammed.barhoush@ionq.co}
\and
NTT Research\\
\texttt{amitbehera1767@gmail.com}
}
\maketitle
\begin{abstract}
In classical cryptography, one-way functions are necessary for most of cryptography. In the quantum world, however, Kretschmer at {TQC'21} and subsequent works have demonstrated that quantum versions of various primitives in minicrypt, such as signatures and private-key encryption, do not require quantum-computable one-way functions (\textsf{qOWF}s). However, the question of the necessity of \textsf{qOWF}s for primitives in Cryptomania like public-key encryption with classical public-keys and ciphertexts (\textsf{qPKE}) has remained open.  

We provide a comprehensive picture of the relationship between \textsf{qPKE} and \textsf{qOWF}. First, our main result is a black-box separation between \textsf{qOWF}s and \textsf{qPKE}. In particular, we show that relative to a unitary, real, and self-inverse oracle, there exists a \textsf{qPKE} scheme (with classical keys and ciphertexts), but \textsf{qOWF}s do not exist. Further, our oracles admit a point-function obfuscation construction, thus ruling out black-box construction of \textsf{qOWF} from point function obfuscations. 

On the positive side, we construct \textsf{qPKE}
from quantum indistinguishability obfuscation for quantum circuits with classical obfuscated descriptions and efficiently verifiable one-way puzzles, which is known to be a potentially weaker assumption than \textsf{qOWF}s, due to an existing black-box separation. Our construction is the quantum analog of the classical construction of \textsf{PKE} from one-way functions and indistinguishability obfuscation in the quantum computation with classical communication (QCCC) setting. 
\end{abstract}
\newpage
\tableofcontents

\newpage
\section{Introduction}
One-way functions (\textsf{OWF}s) form the foundation of classical computational cryptography~\cite{IL89}. They are necessary for most non-trivial primitives, including signatures, many-time encryption, and bit commitments. This is reflected in Impagliazzo's five worlds~\cite{I95}, where
\textsf{OWF}s form the basis of Minicrypt, the world that supports
basic computational cryptography. Furthermore, they are implied by more powerful cryptographic primitives, like public-key encryption (\textsf{PKE}), a central primitive of Cryptomania\footnote{{Cryptomania is the world in which secure key agreement over a public channel exists~\cite{I95}; \textsf{PKE} suffices for such key agreement.}}. \textsf{PKE} is one of the most fundamental primitives in cryptography, enabling secure communication without shared keys. Hence, understanding how \textsf{PKE} relates to other assumptions has been  extensively studied in classical cryptography. In particular, it is known that classical public-key encryption (\textsf{PKE}) implies private-key encryption, which in turn is equivalent to  \textsf{OWF}s~\cite{IL89,HILL99}.  {In the other direction}, \textsf{OWF}s may not be sufficient for \textsf{PKE}, as Impagliazzo and Rudich ruled out fully black-box constructions of classical \textsf{PKE} from \textsf{OWF}s~\cite{IR89}. This gives a familiar view of the two primitives: one-wayness is necessary for public-key encryption, but may not be sufficient.

Quantum computation changes the relationships between cryptographic
primitives. Classical proofs often rely on fixing random coins,
copying internal states, or rewinding a computation, and these steps
do not automatically extend to quantum algorithms. Indeed,  prior
constructions and oracle separations show that, unlike in the classical setting, commitments, one-time symmetric encryption, and one-time signatures can exist
relative to a quantum oracle even when quantum computable one-way functions (\textsf{qOWF}s) do not~\cite{Kre21,AQY22,MY22a,BBO+24}. 
It may appear that the source of these anomalies is that the quantum versions of these minicrypt primitives allow for quantum communication. However, even when restricted to entirely classical communication, the relationships remain different. Indeed, commitments, symmetric encryption, and digital signatures with classical communication do not imply \textsf{qOWF} in a black-box manner~\cite{BNY25,BBO+24,Bar26,BEHMV26}, unlike their classical counterparts. Much of the recent work has focused on the relationships of \textsf{qOWF} with the quantum versions of Minicrypt primitives~\cite{BBO+24,BNY25,Bar26,BEHMV26}, but its corresponding relationship with quantum Cryptomania, particularly public-key encryption with classical communication (denoted as \textsf{qPKE}), is somewhat less understood. Previous work has shown that known classical black-box barriers to constructing \textsf{qPKE} from \textsf{OWF}s can be adapted to the quantum setting~\cite{LLLL25,LLLL25b}. Nevertheless, in the reverse direction, the more basic question of whether \textsf{qOWF}s are necessary for \textsf{qPKE} has remained open.

{Allowing quantum public keys changes the picture.
Encryption with quantum public keys and classical ciphertexts can
be constructed from post-quantum \textsf{OWF}s~\cite{KMNY23},
and even from potentially weaker assumptions, such as pseudorandom
function-like states with proofs of destruction~\cite{BGH23}
and short pseudorandom states (\textsf{PRS}s)~\cite{BBO+24}.
Thus, the black-box barrier for classical public keys does not
extend to quantum public keys in general~\cite{LLLL25}.
However, quantum public keys do not provide the same form of
public availability as classical keys. A classical public key
can be published once and freely copied by anyone, whereas
quantum public keys generally require distributing quantum states
to each recipient~\cite{BGH23}.
This distinction motivates studying whether the assumptions
needed for public-key encryption can also be weakened while
retaining classical public keys and entirely classical
communication.}
This leads us to ask the following natural question.
\begin{openproblem}\label{op:sep}
    Does public-key encryption still require quantum-computable one-way functions in the quantum setting, even if the keys and communication remain classical?
\end{openproblem}

\subsection{Our results}

In this work, we answer \Cref{op:sep} in the negative. Our main result is  {a separation of} \textsf{qOWF} from \textsf{qPKE}. 

\begin{theorem}[Informal: oracle separations]\label{thm:inf-main-sep}
There exists a quantum real and self-inverse unitary oracle relative to which perfectly correct, \textsf{IND-CPA}-secure \textsf{qPKE} exists but {\textsf{qOWF}s do not}.
\end{theorem}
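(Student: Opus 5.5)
The plan is the standard two-oracle template used in prior separations of \textsf{qOWF}s from quantum Minicrypt primitives~\cite{Kre21,BBO+24,BEHMV26}. The oracle is $\mathcal{O}=(U,\mathcal{C})$. Here $U$ is a random ``trapdoor-like'' unitary family that enables \textsf{qPKE}. $\mathcal{C}$ is a $\PSPACE$-style (or $\mathsf{QCMA}/\mathsf{PP}$-style) classical oracle that is strong enough to invert every efficiently computable function, yet too weak to break the encryption.

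To keep $U$ real and self-inverse, each component will be a reflection $U_k = I - 2\ket{\psi_k}\bra{\psi_k}$ about a Haar-random real state. Alternatively, it can be a controlled swap between $\ket{x}$ and a random real state, i.e.\ a reflection exchanging two orthogonal real vectors. Both kinds of operator are real and satisfy $U^2=I$.

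The \textsf{qPKE} will be built as follows. Key generation samples a secret key $sk$ and queries $U$ to obtain the classical public key $pk$. A concrete choice is to let the oracle contain, for every $sk$, a random permutation-like real reflection $\ket{sk,0}\leftrightarrow\ket{pk_{sk}}$, together with a random ``encryption'' reflection indexed by $pk$ whose inversion requires knowledge of $sk$. In other words, I will encode a random trapdoor permutation into self-inverse reflections acting on superpositions. The key design point is that the map $sk\mapsto pk$ must not be a classical efficiently computable function. Instead, $pk$ is obtained by measuring a state prepared with $U$, for instance sampling a random basis element from a Haar-like real state. Without this, a $\PSPACE$ oracle inverting classical functions would immediately break the scheme. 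Perfect correctness will come from decryption with $sk$ undoing the reflection exactly. IND-CPA security will be shown against query-bounded adversaries that also hold $\mathcal{C}$. The argument will use concentration of measure for Haar states and a hybrid replacing the encryption reflections by independent ones, with a BBBV/compressed-oracle-style bound on the query weight placed on $sk$.

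For the nonexistence of \textsf{qOWF}s, I will follow Kretschmer's strategy. Any candidate $f^{\mathcal{O}}$ computed by a QPT machine has output distributions that are well approximated by $f$ run with an efficiently simulable substitute for $U$. Concentration of measure will show that, with overwhelming probability over $U$, the output distribution of $f^U(x)$ on each $x$ is close to its average over $U$. That average is computable in $\PSPACE$, and it can therefore be inverted using $\mathcal{C}$ via a distributional-inversion (Impagliazzo--Luby-style) argument. For real self-inverse reflections, I expect to need a real-Haar analogue of the Lipschitz concentration lemma, which should go through with the orthogonal group in place of the unitary group.

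I expect the main obstacle to be the tension between the two halves. The same $\mathcal{C}$ that averages out $U$ to invert \textsf{qOWF}s must not help decrypt. Handling this requires showing that the public key and ciphertext, although classical, carry only information about $U$ obtainable by measurement. Such information is also ``averaged away'' for $\mathcal{C}$, since $\mathcal{C}$ cannot query $U$. I would first try to make $\mathcal{C}$ independent of $U$ and argue security by a hybrid in which the challenge reflection is resampled. A point-function-obfuscation-style hiding lemma would then bound the adversary's distinguishing advantage by its total query mass on the secret subspace.
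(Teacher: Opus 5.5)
\textbf{There is a genuine gap, in the half you treat as routine.} Your claim is that concentration of measure makes the output of $f^U(x)$ close to its $U$-average on every $x$. This fails for any oracle supporting perfectly correct \textsf{qPKE} in which decryption recognizes the right secret key and gives essentially unbiased answers to wrong keys (as in your sketch, where only $sk$ undoes the reflection).

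Take the evaluator that, on input $(pk,sk)$, encrypts $\lambda$ random bits under $pk$ and decrypts them with $sk$. It outputs $pk$ if all bits are recovered and $\bot$ otherwise.
\begin{itemize}
\item On honest pairs it outputs $pk$ with certainty; on every other pair it outputs $\bot$ except with probability $2^{-\lambda}$. So it is deterministic on every input.
\item Its $U$-average on any fixed pair is essentially $\bot$, so it is far from concentrated on the inputs that matter.
\item With $\KeyGen$ as the quantum input sampler, any inverter returns a key that decrypts under $pk$. So this is a one-way function whenever the scheme is secure, for \emph{every} distribution on the oracle data.
\end{itemize}
The relevant inputs are sampled depending on $U$, so averaging over $U$ by a $\mathcal C$ that does not see $U$ cannot invert it. A $\mathcal C$ that does see $U$ would break the \textsf{qPKE} itself. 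The real obstacle is therefore not that $\mathcal C$ might help decrypt. It is that the encryption functionality itself supplies a nearly deterministic, oracle-dependent function. Even for uniform-input \textsf{qOWF}s you would need per-input agreement for an oracle carrying discrete trapdoor data, which Haar concentration does not supply.

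Your concrete instantiation is also broken. A self-inverse reflection exchanging $\ket{sk,0}$ and $\ket{pk_{sk}}$ maps $\ket{pk_{sk}}$ back to $\ket{sk,0}$, so one query on the public key reveals the secret key. The Haar-state alternative leaves open how decryption recognizes $sk$. Doing so with default answers for wrong keys recreates the function above.

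\textbf{The missing idea} is to change the oracle \emph{operations} so that some wrong keys pass such tests with intermediate probability. One must then replace concentration by an argument that handles evaluators whose outputs genuinely depend on the oracle. The paper does this as follows.
\begin{itemize}
\item It samples two tables $A,B$ and a uniform $\tau$ at each length, and interpolates all three operations between them, Lipschitz in $\tau$ (Lemma~\ref{lem:edge-properties}).
\item For $\tau\le 1/3$, the other table's s-key recovers encrypted bits with probability varying continuously from $1/2$ to $1$. So the test above fails every-input determinism for an inverse-polynomial fraction of $\tau$.
\item Outputs still need not concentrate. For instance, at a length where $N$ is polynomial, an evaluator can test whether $O_{\mathrm{samp}}$ returns two different s-keys for one p-key, which depends on whether $\tau\in(1/3,2/3)$.
\item The distribution is therefore conditioned evaluator by evaluator. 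If an evaluator is unreliable with probability at least an inverse polynomial $\theta$, the paper conditions on unreliability.
\item Otherwise, the interpolation barrier theorem (Theorem~\ref{thm:interpolated-barrier}, proved via random paths with stationary marginals) shows the evaluator computes a single function on all but an $O(q\theta)$ fraction of oracles. A QPT inverter with the table-oblivious $\PSPACE$ oracle inverts that function by averaging, after tomography of the short lengths.
\item The conditioning costs relative entropy at most $2r$ up to length $r$. The hidden-subset lemma with Pinsker's inequality transfers \textsf{qPKE} correctness and security to the conditioned distribution, and Borel--Cantelli yields a fixed oracle.
\end{itemize}
Neither the interpolation nor the conditioning-plus-barrier argument appears in your plan, and without them the \textsf{qOWF} half does not go through.
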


{Our separation also rules out quantum-computable pseudorandom
generators (\textsf{qPRG}) and quantum-computable pseudorandom functions
{(\textsf{qPRF}s), since both} {imply \textsf{qOWF}s}. In fact, we even rule out a weaker notion of these functions that allow quantum input sampling, as explored in \mbox{\cite{BNY25}}, as long as they are deterministic on every input except with negligible probability.}

Interestingly, our oracles (via a similar proof to that for our \textsf{qPKE} construction) also admit the construction of point-function obfuscation, which is a quantum algorithm that takes a classical encoding of a point function and outputs a classical encoding of the obfuscation that satisfies single-program virtual black-box (\textsf{VBB}) security without correlated auxiliary inputs (see \Cref{vbb:def}). Therefore, we also separate \textsf{qOWF}s from point-function obfuscation.

Finally, since our oracles are real, self-inverse, and unitary, our result {rules out black-box constructions} that use unitary, inverse, controlled, transpose, and complex-conjugate operations.

Our result gives a very interesting relationship between \textsf{qOWF}s and \textsf{qPKE}. {Recall that Li, Li, Li, and Liu} rule out black-box constructions of perfectly correct \textsf{qPKE} from \textsf{qOWF}s~\cite{LLLL25}. Together with \cref{thm:inf-main-sep}, this gives {a two-way black-box separation} between perfectly correct \textsf{qPKE} and \textsf{qOWF}s. This highlights the complexity of the quantum landscape as opposed to the hierarchical structure in the classical setting.  

\paragraph{Public-key encryption from \textsf{EV-OWPuzz} and \textsf{qiO}.} \textsf{qOWF} is not the only one-way primitive in quantum cryptography. There are several one-way notions believed to be weaker due to black-box separations, such as one-way state generators \cite{MY22} and {efficiently verifiable one-way puzzles} (\textsf{EV-OWPuzz}) \cite{CGG24}. Given that our separation result (\Cref{thm:inf-main-sep}) shows that \textsf{qOWF}s may not be necessary for \textsf{qPKE}, it is therefore natural to ask if \textsf{qPKE} can be built from weaker forms of one-way primitives that exist in the quantum world. 

Note that classically, a well-known construction shows that \textsf{OWF}s together with indistinguishability obfuscation (\textsf{iO}) give \textsf{PKE}~\cite{SW14}. \textsf{EV-OWPuzz} are a natural candidate to replace \textsf{OWF}s as a weaker quantum one-way primitive in a quantum analog of this construction, since they are already implied by, and hence necessary for, \textsf{qPKE}~\cite{CGG24}. {An \textsf{EV-OWPuzz} consists of} two \textsf{QPT} algorithms: a sampler $S$, which outputs a classical puzzle and answer $(p,a)$, and a verifier $V$, {which accepts honestly sampled pairs except with negligible probability}. One-wayness requires that, for every \textsf{QPT} algorithm $A$,
\[
 \Pr_{(p,a)\leftarrow {S}(1^\lambda)}
 \bigl[V(p,A(p))=1\bigr]
 \le \negl(\lambda).
\]

Chung, Goldin, and Gray~\cite[Section~3]{CGG24} asked whether
\textsf{EV-OWPuzz} suffice to construct secret-key encryption
with quantum computation and classical communication.
We study a related question for public-key encryption:
do \textsf{EV-OWPuzz} suffice when combined with indistinguishability obfuscation?

For our second result, we show that \textsf{EV-OWPuzz} suffices to construct \textsf{qPKE} when combined with quantum indistinguishability obfuscation (\textsf{qiO}), {for pseudodeterministic quantum circuits with classical obfuscated descriptions, under Definition~\ref{def:semantic-qio}. Obfuscation of this class of circuits is studied in~\cite{BGMS26}.}
\begin{theorem}[Informal: construction]
\textsf{EV-OWPuzz} and \textsf{qiO} imply \textsf{IND-CPA}-secure \textsf{qPKE} with classical communication and keys.
\end{theorem}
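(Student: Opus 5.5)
The plan is to mimic the Sahai--Waters construction of \textsf{PKE} from \textsf{OWF} and \textsf{iO}, with the \textsf{EV-OWPuzz} verifier $V$ taking the place of the injective one-way function/PRG check. Because puzzles come with a public verifier rather than an injective image, the relevant secret is an answer $a$ satisfying $V(p,a)=1$. Concretely:
\begin{itemize}
\item \emph{Key generation.} $\KeyGen(1^\lambda)$ samples $(p,a)\leftarrow S(1^\lambda)$, sets $sk=a$ and $pk=p$.
\item \emph{Encryption.} To encrypt a bit $m$, $\Enc(pk,m)$ outputs a classical obfuscation $\widetilde{C}\leftarrow \mathsf{qiO}(C_{p,m})$, where $C_{p,m}(x)$ runs $V(p,x)$ and outputs $m$ if it accepts and $\bot$ otherwise.
\item \emph{Decryption.} $\Dec(sk,\widetilde{C})$ evaluates $\widetilde{C}$ on $a$.
\end{itemize}
Correctness follows from the puzzle's completeness: $V(p,a)$ accepts except with negligible probability. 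To make $C_{p,m}$ pseudodeterministic, as Definition~\ref{def:semantic-qio} requires, I would first amplify $V$ by majority vote over polynomially many runs. Then on every input $x$ the circuit is near-deterministic, outputting $m$ if $\Pr[V(p,x)=1]\ge 2/3$ and $\bot$ if it is $\le 1/3$.

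Inputs with acceptance probability in the middle range $(1/3,2/3)$ break pseudodeterminism. I would handle them by a standard threshold-randomization trick: pick a random threshold, or first amplify the gap so that, for most $p$, only a negligible fraction of behaviors matter. Alternatively, if the paper's \textsf{qiO} definition (Definition~\ref{def:semantic-qio}) only requires indistinguishability between circuits that are functionally equivalent with overwhelming probability on each input, I would rely on that directly.

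For security I would use a hybrid argument with an intermediate hybrid $C_\bot$, the circuit that always outputs $\bot$. The goal is to show $\mathsf{qiO}(C_{p,0})\approx \mathsf{qiO}(C_\bot)\approx \mathsf{qiO}(C_{p,1})$. Classical Sahai--Waters replaces $p$ with a value outside the image of the PRG so that the circuits become exactly equivalent. Here no such image structure is available, so I expect this step to be the main obstacle. Plain indistinguishability obfuscation does not directly let us swap $C_{p,m}$ for $C_\bot$, because the two circuits differ on the hard-to-find accepting inputs. I would address this in one of two ways.
\begin{enumerate}
\item \emph{Differing-inputs lemma.} Show that \textsf{qiO}, together with one-wayness of the puzzle, yields a differing-inputs-style argument: any distinguisher between the two obfuscations can be turned into an extractor of an accepting $x$ for $p$, contradicting one-wayness. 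This requires a quantum extraction argument, e.g.\ via a hybrid over the input space or via puncturing.
\item \emph{Modified public key.} Make the public key a pair $(p_0,p_1)$ with the secret being an answer to one of them. Encryption obfuscates a circuit that releases $m$ only on an input accepted for $p_0$. The reduction then switches, undetectably, to a key that has a \emph{trapdoor-free} fake puzzle $p^\ast$ with no accepting answers, making the circuits exactly equivalent. This needs an auxiliary \textsf{EV-OWPuzz} property: honestly sampled puzzles are indistinguishable from unanswerable ones. I would try to derive that property from one-wayness via a Goldreich--Levin-style hardcore bit on the answer.
\end{enumerate}
I would try option 2 first. It keeps the \textsf{qiO} step to exactly functionally equivalent circuits, which the semantic definition clearly supports. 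The remaining work then concentrates on proving that puzzle indistinguishability lemma, using quantum Goldreich--Levin applied to the classical answers. Once that lemma is in hand, IND-CPA security follows by the two hybrids together with the triangle inequality, with losses negligible in $\lambda$.
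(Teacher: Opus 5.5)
\textbf{There is a genuine gap, and it appears in the scheme itself, before security is even considered.}

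\emph{Correctness fails.} Your ciphertext obfuscates $C_{p,m}$, which runs $V(p,\cdot)$. Definition~\ref{def:semantic-qio} gives correctness and security only for admissible families, i.e.\ circuits with $\sup_x\Pr[Q(x)\ne f(x)]$ negligible. A puzzle verifier may have exponentially many answers whose acceptance probabilities are spread over $(1/3,2/3)$. So after any amplification, and for any (random) threshold, some inputs stay non-deterministic. Guarantees that hold ``for most $p$'' or ``on all but a negligible fraction of inputs'' do not meet the $\sup_x$ requirement. Hence not even decryption correctness follows.

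\emph{Option~2 rests on a lemma you do not have.} Goldreich--Levin would at best give a hardcore bit of the sampled answer, i.e.\ $(p,r,\langle a,r\rangle)\approx_c(p,r,U_1)$. It does not give an efficiently samplable distribution of unanswerable puzzles that is indistinguishable from honest ones. That is a HILL-type pseudorandomness statement, and it is not known to follow from \textsf{EV-OWPuzz}: the quantum sampler cannot be viewed as a deterministic function of a seed.

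\emph{Option~1 points the right way but misses the real obstacle.} The Boyle--Chung--Pass extraction from indistinguishability obfuscation recovers a differing input bit by bit when the circuits differ at a single point. $V(p,\cdot)$ may accept many answers with intermediate probabilities, so there is no unique point to extract.

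\textbf{How the paper avoids this.} The puzzle is removed from the scheme and used only inside the security reduction.
\begin{itemize}
\item The public key is $O(P_u)$ for uniform $u$, the secret key is $u$, and a ciphertext is $O(x\mapsto b\wedge E(\mathsf{pk},x))$. Every obfuscated circuit here is admissible.
\item Security reduces to point hiding, $O(P_u)\approx_c O(Z)$ (Lemma~\ref{lem:ev-point-hiding}). Once the key is switched to $O(Z)$, both ciphertext circuits compute zero, so \textsf{qiO} security applies.
\item Point hiding comes from the suppression lemma (Lemma~\ref{lem:ev-isolation}). Amplify $V$, guess a level $j$ and a hash length, keep only the zero bucket of a pairwise-independent hash, then re-amplify with a midpoint threshold. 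With inverse-polynomial probability this yields a circuit $W_p$ computing a point function at some $a^\star$ with $v_p(a^\star)>1/3$.
\item Masking gives $Q_r(x)=W_p(x\oplus r)$, a point function at the uniform point $a^\star\oplus r$. \textsf{qiO} security lets you swap $O(P_u)$ for $O(Q_r)$. Bit-by-bit extraction with the circuits $Q_{r,i}$ then turns any distinguisher into a solver that outputs $a^\star$, contradicting one-wayness.
\end{itemize}
The inverse-polynomial success of suppression is harmless inside a reduction. It could never make your $C_{p,m}$ admissible for every encryption in the scheme.
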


\paragraph{Framework for different notions of \textsf{qOWF}} 
Next, we take a closer look at the very definition of \textsf{qOWF}. The definition most commonly used in the literature~\cite{AC02,CLS01,DMS00,GKT26,HY20,IT03a,IT03b,KKK+05,KQT25,MY23,OTU00} requires that determinism holds for every input, except with negligible probability, and that the one-wayness game is with respect to the uniform distribution on the input space. However, there are natural ways to relax the above definition with respect to the determinism requirement and the security requirement through the distribution of inputs considered in the one-wayness game. For instance, should evaluation be near-deterministic on every input or only on average, or should inversion hardness be measured on uniform inputs or on inputs produced by a specified quantum sampler? Indeed, Barhoush, Nishimaki, and Yamakawa studied the role of quantum input sampling, i.e., where the input of the function in the security experiment can be sampled by an arbitrary efficient quantum algorithm~\cite{BNY25}. They showed that allowing quantum samplers can yield a potentially weaker assumption.  Thus, one may suspect that our separation result (\Cref{thm:inf-main-sep}) is due to similar differences in the definition of \textsf{qOWF}, and not a fundamental barrier towards \textsf{qOWF} being necessary for \textsf{qPKE}.

Therefore, to better understand the role of these different potential definitions of \textsf{qOWF} in our separation result (\Cref{thm:inf-main-sep}), we provide a general framework for different notions of \textsf{qOWF}.
In our framework, a \textsf{qOWF} consists of a pair of QPT algorithms $(S,E)$, where $S$ is the input sampler, and $E$ is the evaluator. Let $f$ be the function that maps $x$ to the {most likely output of $E(x)$}. In each notion, one-wayness requires that, for every QPT inverter $A$,
\[
 \Pr_{\substack{
      X\leftarrow S(1^\lambda)\\
      Y\leftarrow E(X)}}
 \bigl[E(A(Y))=Y\bigr]
 \le \negl(\lambda).
\]


In this framework, we discuss two natural generalizations of the existing \textsf{qOWF} definition and finally consider an even weaker definition that combines the relaxations of the two generalizations.
\begin{enumerate}
\item \textbf{$\textsf{qOWF}_1$.}\label{it:def-1} The sampler $S$ is {an efficient quantum algorithm}, and
\[
 \forall x\in\{0,1\}^{m(\lambda)},\qquad
 \Pr[E(x)\ne f(x)]\le\negl(\lambda).
\]

In other words, this definition generalizes the \textsf{qOWF} definition by replacing the uniform distribution with arbitrary and \emph{potentially quantum} sampling distributions for the one-wayness experiment. Thus, it is a quantum analog of the general classical \textsf{OWF}s whose one-wayness is with respect to  {arbitrary efficiently classically samplable distributions}; see~\cite{PW11,ABH+26}. Classically, they are equivalent to traditional \textsf{OWF}s, since the randomness can be pulled out from any classical sampler, {but this argument does not extend to quantum samplers. Barhoush, Nishimaki, and Yamakawa~\cite{BNY25} give black-box separations demonstrating this distinction for pseudorandom generators}. 

\item \textbf{$\textsf{qOWF}_2$.}\label{it:def-2} The sampler $S$ is uniform, and
\[
 \Pr_{X\leftarrow\{0,1\}^{{m(\lambda)}}}
 [E(X)\ne f(X)]\le\negl(\lambda).
\]

This definition generalizes the existing \textsf{qOWF} definition by allowing \emph{determinism to hold on average over uniform inputs}, though the one-wayness game remains in terms of the uniform inputs. For instance, this notion allows the evaluator to be completely non-deterministic on a negligible fraction of inputs. However, on an overwhelming fraction of the inputs, the function is deterministic up to negligible error. {This notion of determinism has been used to define quantum-computable pseudorandom generators; see~\cite{Bar26,BEHMV26}}.

\item \textbf{$\textsf{qOWF}_3$.}\label{it:def-3} The sampler $S$ is arbitrary, and
\[
 \Pr_{X\leftarrow S(1^\lambda)}
 [E(X)\ne f(X)]\le\negl(\lambda).
\]
This definition is naturally obtained by combining the two somewhat incomparable relaxations allowed in the first two definitions. Unlike the above two definitions, this notion may allow the evaluator to behave non-deterministically on a majority of inputs, raising the question of whether this definition should even be called a function. Nevertheless, we study this notion as a natural generalization of the first two definitions for the theoretical exploration of the limits of our separation result.
\end{enumerate}

The distinction between these notions of \textsf{qOWF} suggests a more nuanced question: how does {the relationship between \textsf{qOWF} and \textsf{qPKE}} change across the different notions of \textsf{qOWF}?

We provide a clear picture of this question. Firstly, we note that our separation result is even stronger than the version stated in \Cref{thm:inf-main-sep}, and in particular we show the following fine-grained version of it.

\begin{theorem}[Informal: fine-grained version of oracle separations]\label{thm:inf-main-sep-finer}
There exists a quantum real and self-inverse unitary oracle relative to which perfectly correct, \textsf{IND-CPA}-secure \textsf{qPKE} exists but neither $\textsf{qOWF}_1$s nor $\textsf{qOWF}_2$s exist.
\end{theorem}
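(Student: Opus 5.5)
The plan is a Kretschmer-style oracle world: a $\PSPACE$ oracle together with a family $\mathcal{O}=\{O_\lambda\}$ of random, real, self-inverse unitaries. Each $O_\lambda$ is a reflection $I-2\Pi$ onto a subspace spanned by Haar-random \emph{real} states, so the oracle is automatically real and self-inverse. The guiding design principle is that every useful secret in the scheme is extracted by \emph{measuring} a random quantum state. Such outputs are inherently randomized and so cannot serve as the output of a (near-)deterministic function.

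\textbf{Step 1 (oracle and scheme).} Concretely, I would index random real states $\ket{\psi_{s}}$ by a classical seed $s$. Key generation samples $s$ and measures $\ket{\psi_s}$, obtained via the reflection oracle and the $\PSPACE$ oracle, to get a classical public key $pk$ that is a random sample correlated with $s$. A second reflection family, keyed by $(pk,m,r)$, supports an encryption procedure. Given $s$, one can coherently undo the sampling to decrypt; without $s$, the key $pk$ is just a measurement outcome of a state that looks Haar-random. Perfect correctness would come from designing the reflections so that decryption is an exact projection. I would prove \textsf{IND-CPA} security by a hybrid argument. The adversary sees only $pk$, the ciphertext, and polynomially many queries, so I would replace the relevant random states by fresh independent ones and bound the change using the BBBV/one-way-to-hiding bound for reflection oracles, which holds even with a $\PSPACE$ oracle. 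I would follow the same template for point-function obfuscation.

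\textbf{Step 2 (determinism forces oracle independence).} Let $(S,E)$ be a candidate $\textsf{qOWF}_1$ or $\textsf{qOWF}_2$ relative to $(\mathcal{O},\PSPACE)$. For a fixed input $x$, the map $\mathcal{O}\mapsto\Pr[E^{\mathcal{O}}(x)=y]$ is Lipschitz in the Haar measure with constant $\poly(\lambda)$, since $E$ makes polynomially many queries. Levy-type concentration then gives deviation probability $\exp(-\Omega(2^{n}/\poly))$ for oracle dimension $2^n$. I would choose the oracle's input length to be superpolynomial relative to $m(\lambda)$ on the relevant lengths, or use lazy length scaling, so that a union bound over all $2^{m}$ inputs $x$ survives. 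The conclusion I aim for is: with probability $1-\negl$ over $\mathcal{O}$, every (respectively, average) $x$ has $f_{\mathcal{O}}(x)=g(x)$, where $g(x)$ is the most likely output of $E(x)$ averaged over the oracle distribution. The function $g$ is independent of $\mathcal{O}$, and it is computable in $\PSPACE$ because the Haar averages are expressible via Weingarten-style exact formulas or by simulating $t$-designs of sufficient degree. Determinism is what makes the most-likely output stable under concentration. For $\textsf{qOWF}_2$ this holds only for most $x$ under the uniform measure, which is enough since the inverter there only needs to succeed on typical inputs.

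\textbf{Step 3 (inverter).} Given $y$, the inverter uses $\PSPACE$ to output a uniformly random element of $g^{-1}(y)$; this step needs no oracle queries. For $\textsf{qOWF}_1$, $Y\approx g(X)$ with $X\leftarrow S^{\mathcal{O}}$, and any $x'\in g^{-1}(y)$ satisfies $E^{\mathcal{O}}(x')=y$ with high probability, since determinism holds for every input, so the inverter wins. For $\textsf{qOWF}_2$, the sampler is uniform, so returning a random preimage under $g$ yields a typical point with noticeable probability, and on typical points $E$ agrees with $g$. A standard Borel--Cantelli argument over $\lambda$ and over countably many candidate constructions then gives the result for a single fixed oracle.

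\textbf{Main obstacle.} The hard step is Step 2 for $\textsf{qOWF}_1$: the inverter must succeed on points distributed according to a quantum oracle-dependent sampler $S^{\mathcal{O}}$. The preimage returned, however, is chosen via the oracle-independent $g$, and correctness needs agreement of $f_{\mathcal{O}}$ and $g$ \emph{everywhere}. I would first try to get doubly-exponential concentration by making the oracle's dimension grow faster than the input length. If that conflicts with efficiency of the scheme, I would fall back to a compression/simulation argument: replace $\mathcal{O}$ by a $\PSPACE$-simulable $t$-design on all queries of $E$, and argue that deterministic outputs cannot distinguish the two.
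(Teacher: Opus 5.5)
There is a genuine gap: your Step 1 and Step 2 cannot both hold. Your Step 2 concentration argument does not really use determinism. Levy-type concentration makes the acceptance probability of \emph{any} efficient classical-input computation close to its oracle average, simultaneously for all inputs, once you condition on the short lengths (which must be estimated by tomography; your plan omits this). That average is $\PSPACE$-computable. Apply this to the verifier $V(sk,pk)$ that encrypts random bits under $pk$ and checks whether decryption with $sk$ recovers them:
\begin{itemize}
\item Correctness forces the oracle-averaged acceptance $\bar a(sk,pk)$ to be close to $1$ for typical honest pairs.
\item An adversary holding only $pk$ therefore uses $\PSPACE$ to find $sk'$ with $\bar a(sk',pk)\geq 3/4$.
\item By your uniform concentration, $sk'$ then actually decrypts under the real oracle, so IND-CPA is broken.
\end{itemize}
Equivalently, your world breaks every \textsf{EV-OWPuzz}. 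But \textsf{qPKE} implies \textsf{EV-OWPuzz}~\cite{CGG24}, and hence $\textsf{qOWF}_3$ (Proposition~\ref{prop:def-three-ev}). So no secure \textsf{qPKE} with classical keys exists relative to Haar-type reflections plus $\PSPACE$. Your design principle (secrets come from measuring random states) does not help. The generic key-pair verification $f(sk,pk)=pk$ if valid, $\bot$ otherwise, takes its output from its input, so measurement randomness in key generation does not obstruct its determinism. Your proposed fixes for the main obstacle (stronger concentration, or $t$-design simulation) only strengthen this attack.

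The missing idea is that the separation must exploit determinism specifically, and in a way that fails for determinism averaged over a quantum sampler. The paper notes that a natural single-table oracle already yields a $\textsf{qOWF}_1$ via exactly this key-pair verification, whatever the oracle distribution. So the oracle \emph{operations} themselves must make verification non-deterministic on a noticeable set of oracles. The paper achieves this with two tables and a one-dimensional interpolation parameter $\tau$: for an inverse-polynomial fraction of $\tau$, testing $s_j^B$ against $A$-ciphertexts accepts with intermediate probability. Because the randomness driving this is low-dimensional, no concentration is available. The paper replaces it with three ingredients:
\begin{itemize}
\item The path-based barrier theorem (Theorem~\ref{thm:interpolated-barrier}). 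It shows that an evaluator reliable on most oracles computes one function on most oracles (Lemma~\ref{cor:whole-function}), and that function is recoverable with $\PSPACE$ after estimating the short lengths.
\item A conditioning procedure. Evaluators that are unreliable with probability at least $\theta$ are handled by conditioning the base distribution on their unreliability.
\item A relative-entropy security transfer. The accumulated relative entropy stays at most $2r$, and the hidden-subset lemma (Lemma~\ref{lem:hidden-subset-entropy}) turns this budget into a negligible loss for \textsf{qPKE} security.
\end{itemize}
Perfect correctness then comes from choosing a fixed oracle inside a positive-probability disjointness event. None of these components appears in your plan, and without them Step 1 fails.
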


Thus, this {gives black-box separations} between \textsf{qPKE} and the two notions $\textsf{qOWF}_1$ and $\textsf{qOWF}_2$, but what about the last and weakest notion, $\textsf{qOWF}_3$? Interestingly, we find that $\textsf{qOWF}_3$ has a different relation with \textsf{qPKE} and \textsf{EV-OWPuzz}. Our third result presents this relationship.

\begin{theorem}[Informal: equivalence with puzzles]
{$\textsf{qOWF}_3$s exist if and only if \textsf{EV-OWPuzz} exist}. Since \textsf{qPKE} implies \textsf{EV-OWPuzz}~\cite{CGG24}, we conclude that {\textsf{qPKE} implies $\textsf{qOWF}_3$s}.
\end{theorem}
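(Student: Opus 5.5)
We prove the two directions of the equivalence separately, and then obtain the consequence for \textsf{qPKE} by composing with the result of~\cite{CGG24} that \textsf{qPKE} implies \textsf{EV-OWPuzz}.

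\emph{From $\textsf{qOWF}_3$ to \textsf{EV-OWPuzz}.} Given a $\textsf{qOWF}_3$ $(S,E)$, define the puzzle sampler to run $X\leftarrow S(1^\lambda)$ and $Y\leftarrow E(X)$, and to output puzzle $p=Y$ and answer $a=X$. The verifier $V(p,a)$ runs $E(a)$ and accepts iff the output equals $p$.

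\begin{itemize}
\item \emph{Correctness.} Correctness needs $\Pr[E(X)=Y]\ge 1-\negl$ for two independent runs of $E$ on the same sampled $X$. Since $\Pr_{X\leftarrow S}[E(X)\neq f(X)]\le\negl$, a union bound over the two runs gives that both equal $f(X)$ except with negligible probability.
\item \emph{Security.} Security is literally the $\textsf{qOWF}_3$ one-wayness condition: an adversary producing $A(Y)$ with $V(Y,A(Y))=1$ wins exactly when $E(A(Y))=Y$.
\end{itemize}

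This direction is routine.

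\emph{From \textsf{EV-OWPuzz} to $\textsf{qOWF}_3$.} Given $(S_{\mathrm{puz}},V)$, we want an evaluator whose output is essentially determined by its input on the sampled distribution. The natural choice is to let the $\textsf{qOWF}_3$ input be the pair $x=(p,a)$, with the sampler $S$ running $S_{\mathrm{puz}}$. The evaluator computes $E(p,a)$ by running $V(p,a)$; it outputs $p$ if $V$ accepts and a fixed symbol $\bot$ otherwise.

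\begin{itemize}
\item \emph{Determinism on average.} On honestly sampled $(p,a)$, $V$ accepts except with negligible probability. Hence $E$ outputs $p=f(p,a)$ except with negligible probability over $X\leftarrow S$, which is exactly the $\textsf{qOWF}_3$ determinism requirement.
\item \emph{One-wayness.} Suppose $A$, given $Y=E(X)$, outputs $(p',a')$ with $E(p',a')=Y$ with non-negligible probability. Except with negligible probability $Y=p$, and $E(p',a')=p\neq\bot$ forces $p'=p$ and $V(p,a')=1$ in that run. So the puzzle adversary that runs $A(p)$ and outputs the second component $a'$ passes verification with non-negligible probability.
\end{itemize}

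The main point needing care is this last reduction. The event $E(p',a')=Y$ is evaluated with fresh randomness of $V$ inside $E$, so we must argue that the probability of $E$ accepting equals the probability that the puzzle game's independent run of $V$ accepts. This holds because both are a single run of $V$ on the same classical pair, so the reduction is probability-preserving. We also need to check that $Y=\bot$ outputs contribute only negligibly. Since $\bot$ arises only when $V$ rejects an honest pair, that event has negligible probability.

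Finally, chaining \textsf{qPKE} $\Rightarrow$ \textsf{EV-OWPuzz}~\cite{CGG24} with the second direction yields \textsf{qPKE} $\Rightarrow \textsf{qOWF}_3$.
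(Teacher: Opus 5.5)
Your proposal is correct and matches the paper's proof of Proposition~\ref{prop:def-three-ev}. In one direction the puzzle is $(E(X),X)$ and the verifier re-runs $E$; in the other the evaluator maps $(p,a)$ to $p$ or $\bot$ according to $V$, and the inverter-to-solver reduction loses only the probability of a $\bot$ challenge. The only differences are inessential: the paper runs $V$ $t=\lambda$ times inside the evaluator and also treats a most-likely-output inversion check, and neither is needed for the fresh-evaluation one-wayness defined in Section~\ref{sec:definitions}.
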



This equivalence may seem surprising since $\textsf{qOWF}_3$ still requires nearly deterministic evaluation on sampled inputs, whereas in an \textsf{EV-OWPuzz}, there is no such determinism requirement. Nevertheless, we show that since $\textsf{qOWF}_3$ allows average-input determinism for an arbitrary quantum sampleable distribution, we can build it from \textsf{EV-OWPuzz}. We note that {for a relativizing equivalence, it is necessary} that we allow both average-input correctness and security relative to quantum input sampling. Indeed, if we remove even one of these two relaxations, we get either $\textsf{qOWF}_1$ or $\textsf{qOWF}_2$, which by \Cref{thm:inf-main-sep-finer} are separated from \textsf{qPKE} and, hence, are also separated from \textsf{EV-OWPuzz}. 

We summarize all our results in Figure~\ref{fig:primitive-relations}. 
\begin{figure}[!htbp]
\centering
\begin{tikzpicture}[
  >=Latex,
  x=\linewidth,y=1cm,
  primitive/.style={
    draw=black!65,rounded corners=2pt,align=center,
    font=\footnotesize,inner sep=4pt,minimum height=9mm
  },
  implication/.style={->,line width=.8pt},
  ourimplication/.style={
    implication,draw=blue!70!black,text=blue!70!black
  },
  separation/.style={
    ->,densely dashed,line width=.9pt,draw=red!75!black
  },
  priorseparation/.style={
    separation,draw=red!55!white
  },
  resultref/.style={
    font=\scriptsize,align=center,fill=white,inner sep=1pt
  },
  cross/.style={
    fill=white,inner sep=0.5pt,text=red!75!black,font=\large
  }
]

\node[primitive,text width=.38\linewidth] (owf) at (0,3)
  {\textbf{\textsf{qOWF}}\\
   (Fixed-input determinism;\\uniform-input one-wayness)};

\node[primitive,text width=.22\linewidth] (d1) at (-.37,0)
  {\textbf{$\textsf{qOWF}_1$}\\
   Fixed-input\\determinism;\\
   Arbitrary-sampler\\one-wayness};

\node[primitive,text width=.12\linewidth] (pke) at (0,0)
  {\textbf{\textsf{qPKE}}};

\node[primitive,text width=.22\linewidth] (d2) at (.37,0)
  {\textbf{$\textsf{qOWF}_2$}\\
   Uniform-input\\determinism;\\
   Uniform-input\\one-wayness};

\node[fit=(pke),inner sep=2.5mm] (positive) {};

\node[anchor=south,font=\footnotesize\bfseries]
  at ([yshift=5mm]positive.north)
  {Quantum Cryptomania};

\node[primitive,text width=.24\linewidth] (ev) at (0,-2.2)
  {\textbf{\textsf{EV-OWPuzz}}};

\node[primitive,text width=.34\linewidth] (d3) at (0,-4.4)
  {\textbf{$\textsf{qOWF}_3$}\\
   Average-input determinism\\and one-wayness\\
   under an arbitrary sampler};

\draw[implication]
  (owf.south west) -- (d1.north);
\draw[implication]
  (owf.south east) -- (d2.north);

\coordinate (sep-level) at (0,1.65);
\coordinate (lll-start) at ([xshift=7mm]owf.south);
\coordinate (lll-entry) at ([yshift=5mm]pke.north east);
\coordinate (our-entry) at ([yshift=5mm]pke.north west);
\coordinate (our-target) at ([xshift=-7mm]owf.south);

\draw[priorseparation,rounded corners=4pt]
  (lll-start)
  -- node[cross,pos=.5,text=red!55!white] {$\times$}
     node[resultref,pos=.5,right=4pt] {\cite{LLLL25}}
  (lll-start |- sep-level) -- (.22,1.65)
  |- (lll-entry) -- (pke.north east);

\draw[separation,rounded corners=4pt]
  (pke.north west) -- (our-entry) -| (-.22,1.65)
  -- (our-target |- sep-level)
  -- node[cross,pos=.5] {$\times$}
     node[resultref,pos=.5,left=4pt]
       {Cor.\\\ref{cor:unitary-access}}
  (our-target);

\draw[separation]
  (pke.west)
  -- node[cross] {$\times$}
     node[resultref,above=7pt]
       {Cor.\\\ref{cor:unitary-access}}
  (d1.east);

\draw[separation]
  (pke.east)
  -- node[cross] {$\times$}
     node[resultref,above=7pt]
       {Cor.\\\ref{cor:unitary-access}}
  (d2.west);

\draw[implication]
  (pke.south)
  -- node[resultref,pos=.45,right=2pt] {\cite{CGG24}}
  ([xshift=-5mm]ev.north);

\draw[ourimplication]
  (ev.west) to[out=180,in=-90]
  node[resultref,pos=.55,left=2pt]
    {$+\,\textsf{qiO}$\\Thm.~\ref{thm:ev-qio-pke}}
  (pke.south west);

\draw[implication,rounded corners=6pt]
  (d1.south) -- (-.37,-4.4) -- (d3.west);
\draw[implication,rounded corners=6pt]
  (d2.south) -- (.37,-4.4) -- (d3.east);

\draw[ourimplication,<->,line width=.9pt]
  (ev.south)
  -- node[resultref,right=3pt]
       {Prop.~\ref{prop:def-three-ev}}
  (d3.north);

\end{tikzpicture}
\caption{Solid arrows denote implications, the double arrow denotes
equivalence, and crossed dashed arrows denote black-box impossibilities.
Blue arrows denote reductions proved in this work. Dark red separations
are ours; the light red separation is from prior work.
The \textsf{qOWF}-to-\textsf{qPKE} impossibility assumes perfect correctness of
\textsf{qPKE}~\cite{LLLL25}.}
\label{fig:primitive-relations}
\end{figure}
\subsection{Related works}

In the classical setting, \textsf{PKE} can be constructed from indistinguishability obfuscation and \textsf{OWF}s~\cite{SW14}. Whether \textsf{OWF}s alone suffice remains open, but we know that classical black-box constructions are ruled out by~\cite{IR89}. An optimal quadratic-query attack on classical key agreement in the random oracle model was later given in~\cite{BM09}.

Extending this impossibility to the quantum setting has required a separate line of work. Earlier results were conditional or applied to restricted constructions~\cite{ACC+22,BGVV23,LLLL24,LLLL25b}. Li, Li, Li, and Liu subsequently gave a polynomial-query attack on every perfectly correct public-key encryption scheme with classical keys in the quantum random oracle model, even when ciphertexts are quantum~\cite{LLLL25}. Their later work extends this to perfectly correct key agreement with arbitrarily many rounds of classical communication~\cite{LLLL26}. These attacks allow unbounded computation between queries and give black-box barriers to constructing public-key encryption or key agreement from \textsf{OWF}s. They do not rule out constructions with negligible correctness error in full generality. Our separation addresses the reverse question: whether \textsf{qPKE} implies \textsf{qOWF}s when the algorithms may be quantum.

Two earlier oracle separations are particularly relevant to this reverse direction. Kretschmer, Qian, and Tal construct a classical oracle relative to which $\textsf{P}=\textsf{NP}$, but trapdoor \textsf{qOWF}s, \textsf{qPKE} with classical keys and ciphertexts, and signatures with classical keys and signatures exist~\cite{KQT24}. Their oracle rules out classically computable \textsf{OWF}s, while retaining quantum-computable ones. Goldin, Morimae, Mutreja, and Yamakawa construct a unitary oracle relative to which $\textsf{BQP}=\textsf{QCMA}$ but non-interactive key exchange with classical communication exists. This rules out \textsf{qOWF}s, but the parties in their key-exchange construction retain quantum private states~\cite{GMM+24}. Our \textsf{qPKE} separation has classical public keys, secret keys, and ciphertexts, and rules out both $\textsf{qOWF}_1$s and $\textsf{qOWF}_2$s: every-input determinism with arbitrary quantum input sampling, and average determinism with uniform input sampling.

Our proof is also related to separations involving short \textsf{PRS}s and determinism. Barhoush rules out black-box constructions of \textsf{qPRG}s from short \textsf{PRS}s and from $\bot$-\textsf{qPRG}s, allowing unitary and inverse access. His black-box formulation requires the correctness of the target primitive to follow from the correctness of the source~\cite{Bar26}. Bouaziz--Ermann, Hhan, Muguruza, and Vu give a unitary oracle relative to which logarithmic-output pseudorandom function-like states exist, but \textsf{qPRG}s with negligible correctness error do not. Their proof uses a barrier theorem for products of Haar-distributed states~\cite{BEHMV26}. We use the same broad strategy of forcing a candidate to fail correctness or become simulatable. To apply this strategy to \textsf{qPKE}, we continuously change the $O_{\mathrm{samp}}$, $O_{\mathrm{enc}}$, and $O_{\mathrm{dec}}$ operations and show that a nearly deterministic evaluator computes the same function for many choices of the oracle. We then show that an oracle can be chosen for which encryption remains secure.

Quantum input sampling was studied by Barhoush, Nishimaki, and Yamakawa~\cite{BNY25}. They separate \textsf{qPRG}s with uniform seeds from \textsf{qPRF}s with quantum key sampling, using unitary oracles with inverse access. 

Kretschmer~\cite{Kre21} showed that there is a quantum oracle relative to which long \textsf{PRS}s exist but $\textsf{BQP}=\textsf{QMA}$, ruling out \textsf{qOWF}s. Black-box constructions of many-time signatures with classical secret keys and signatures from \textsf{PRS}s with at least linearly many output qubits are also ruled out~\cite{CM24}, even allowing quantum public keys; this separation does not allow inverse queries. Together with the signature construction from short \textsf{PRS}s, this rules out the corresponding black-box shortening of \textsf{PRS}s~\cite{CM24,BBO+24}. A separate oracle separation between linear-output \textsf{PRS}s and $c\log\lambda$-qubit \textsf{PRS}s, for sufficiently large $c$, is given in~\cite{BM24}.

\subsection{AI Acknowledgment}
The concepts and proof ideas were mainly generated by the authors. However, the proof of the barrier theorem (\Cref{thm:interpolated-barrier}) was produced through interaction with ChatGPT 6 Astra. We also used ChatGPT 6 Astra to help write the other proofs, improve presentation and check for mistakes. The authors take responsibility for any mistakes or issues.

\section{Technical Overview}
\label{sec:technical-overview}

We will give a brief overview of our separation and the construction of \textsf{qPKE} from puzzles and obfuscation. 

\subsection{Oracle Separation}
In this section, we discuss the main ideas behind our oracle separation result. For simplicity, we focus only on the separation of $\textsf{qOWF}_1$s and $\textsf{qOWF}_2$s from  \textsf{qPKE}.

Our separation exploits a key distinction between \textsf{qPKE} and \textsf{qOWF}s: \emph{determinism}. Key generation and encryption may produce different outputs each time they are run. Meanwhile, an \textsf{qOWF} evaluator must produce one fixed output on a given or at least an overwhelming fraction of the inputs. $\textsf{qOWF}_1$ requires this on every input, while $\textsf{qOWF}_2$ allows exceptions on a negligible fraction of uniformly distributed inputs.
This suggests designing oracles whose random outputs don't affect encryption, but since they can't be derandomized, any nearly deterministic evaluator can't use them non-trivially to construct an \textsf{qOWF}.

\paragraph{First attempt for the Oracles} We begin with a natural construction of oracles that admits a \textsf{qPKE} construction. At each length $n$, sample a random table, where each row indexed by $j\in\bits^n$ contains a uniformly random string $s_j\in\bits^{8n}$, and two random disjoint sets of ciphertext strings, one for each message bit. Each set has $2^{2n}$ elements from $\bits^{12n}$. We call $j$ a \emph{p-key} and $s_j$ an \emph{s-key}, reserving public key and secret key for the encryption scheme constructed from the oracle. The contents of each row are sampled independently of those of other rows. Based on such a random table, we consider three oracles:
\begin{itemize}
\item \textbf{Key sampling oracle $O_{\mathrm{samp}}$.} Prepare a uniform superposition over pairs $(j,s_j)$. Measuring this state samples a uniform p-key and its s-key. To make it unitary, we implement it as a swapping oracle that swaps the uniform superposition state with a dummy state, as used in previous works of~\cite{BMM+24,GZ25}.
\item \textbf{Ciphertext sampling oracle $O_{\mathrm{enc}}$.} Given a p-key $j$ and a message bit $b$, prepare a uniform superposition over the ciphertext set for $(j,b)$. Measuring this state samples a ciphertext string $c$.
\item \textbf{Decryption oracle $O_{\mathrm{dec}}$.} Given $(j,s_j,c)$ for an honestly sampled ciphertext string $c$, return its message bit. With an incorrect s-key, the measurement returns an unbiased bit.
\end{itemize}
Note that the three oracle operations are real, unitary, and self-inverse. These oracles naturally give a secure \textsf{qPKE} construction as follows. Key generation samples $(j,s_j)$ using $O_{\mathrm{samp}}$, publishes $j$, and keeps $(j,s_j)$ as the secret key. Encryption samples $c$ using $O_{\mathrm{enc}}(j,b)$ and outputs $(j,c)$; decryption uses $O_{\mathrm{dec}}(j,s_j,c)$. Correctness is immediate, and all keys and ciphertexts are classical. Security follows by standard quantum query hybrids. Let $J$ be the challenge public key. First, remove $(J,s_J)$ from the key-sampling state. Next, replace decryption at $J$ with the unbiased response. Finally, replace the challenge ciphertext in the ciphertext-sampling state with a fresh string, keeping the ciphertext given to the adversary unchanged. The exponentially large p-key and s-key spaces and ciphertext sets make these replacements indistinguishable to polynomial-query adversaries. In the final hybrid, the joint distribution of the challenge ciphertext and the oracle is independent of the message bit, thus completing the security proof.
Next, to rule out \textsf{qOWF}, we also add a \textsf{PSPACE} oracle, i.e., an oracle for a fixed \textsf{PSPACE}-complete language, which has \emph{no access to the random tables}, in order to preserve the security of our \textsf{qPKE} scheme. 

\paragraph{The \textsf{qPKE} construction also provides an \textsf{qOWF}.}
The construction above gives a secure \textsf{qPKE}, but unfortunately it already yields a $\textsf{qOWF}_1$. In particular, the oracle above allows us to efficiently verify if a given s-key, p-key pair is indeed a valid honestly sampled pair. This is because encrypting message one with the p-key and then decrypting the ciphertext with the s-key returns one with probability exactly $1$ if they are a sampled key pair and $1/2$ for every other pair. Hence, repeating this process sufficiently many times (say, $\lambda$ times) allows us to distinguish sampled key pairs from other pairs, which lets us evaluate the following $\textsf{qOWF}_1$:
 $f(j,s)=
 \begin{cases}
 (0,j),&s=s_j,\\
 \bot,&s\ne s_j.
 \end{cases}$ It is easy to check that the above verification process leads to deterministic evaluation (up to negligible error) for every input. In fact, the determinism holds even if we change the distribution on the oracle tables. For one-wayness, we consider the quantum input sampler that samples $(j,s_j)$ according to the \textsf{qPKE} scheme's key-generation algorithm. By the correctness of the \textsf{qPKE} scheme, the image of a typical $(j,s_j)$ under $f$ is $(0,j)$. This implies that any inverter against $f$ with respect to this quantum input sampler essentially computes the secret key $(j,s_j)$ given a sample public key $j$ for the \textsf{qPKE} scheme, thus breaking the \textsf{qPKE} security.  
Moreover, this correspondence holds even if we change the oracle distribution! Thus, for our separation, we must tweak the oracle operations themselves, not just the distribution over the oracle parameters (such as the random tables).

\paragraph{Oracles with two tables.} We tweak the oracle operation by introducing a second table, so that the above-mentioned evaluator for $f$ can fail fixed-input determinism while the \textsf{qPKE} construction remains correct and secure. In particular, at each length $n$, sample two independent tables $A,B$ with the same p-key indices, each assigning its own s-key and ciphertext sets to every p-key. Independently sample uniform $i\in\{1,2,3\}$ and $t\in[0,1]$, and set $\tau=(i-1+t)/3$, which is uniform in $[0,1]$. Informally, the oracle operations use an interpolation of the two tables determined by $\tau$. These choices are made once, independently at each length, and remain fixed throughout every execution.
At $\tau=0$, the oracle operations use only $A$; at $\tau=1$, they use only $B$. For intermediate values, sampling can use weighted superpositions of the states specified by $A$ and $B$. For example, at $\tau=1/2$, conditioned on obtaining p-key $j$, the s-key register is in the state $(\ket{s_j^A}+\ket{s_j^B})/\sqrt2$ assuming $s_j^A,s_j^B$ are distinct. The precise dependence of all three operations on $\tau$ is given in Section~\ref{sec:base-oracle}.
The \textsf{qPKE} security argument above also applies to this two-table construction using similar arguments as above\footnote{after ignoring the event of overlapping ciphertext sets at the challenge p-key (which happens with negligible probability)}; see Lemma~\ref{lem:fresh-record-pke}.
Crucially, fix a p-key $j$ with distinct s-keys and disjoint ciphertext sets across the two tables. For $\tau\in[0,1/3]$, both sampling operations use $A$, and honestly sampled keys decrypt correctly. Yet, decrypting an $A$-encryption of one with $s_j^B$ returns one with a probability $p(\tau)$ that varies continuously from $1/2$ to $1$\footnote{provided their ciphertext sets at $j$ are disjoint, which happens except with negligible probability}. Thus, decrypting a message under a p-key using a given s-key can no longer separate the sampled s-key from the s-key from the other table $B$. As a result, for any polynomial number of repetitions, there is an inverse-polynomial fraction of values of $\tau$ for which the above evaluator outputs $(0,j)$ and $\bot$ each with probability at least $1/3$ on the fixed input $(j,s_j^B)$; hence, the above evaluation algorithm for $f$ fails fixed-input determinism.

\paragraph{Ruling out all \textsf{qOWF}s by conditioning the oracle distribution.}
However, only ruling out the particular $\textsf{qOWF}_1$ $f$ is not enough; we need to rule out \emph{every} $\textsf{qOWF}_1$ and every $\textsf{qOWF}_2$. We construct it by \emph{conditioning}: sample only oracles conditioned on satisfying some event, such that no $\textsf{qOWF}_1$ or $\textsf{qOWF}_2$ exists conditioned on this event.\footnote{With the original one-table operations, the $\textsf{qOWF}_1$ above would remain valid whenever \textsf{qPKE} remained secure, regardless of the conditioning.} Bouaziz--Ermann et al.~\cite[Section~5.2]{BEHMV26} use this approach to rule out candidate quantum-evaluable pseudorandom generators, considering each candidate infinitely often and ensuring either a correctness failure or a successful attack. However, we need to be careful, since if we condition on an event with too small a weight, then after conditioning on such an event, the \textsf{qPKE} security may not hold. For example, conditioning on every s-key in both tables being the all-zero string makes every honestly generated secret key publicly known. Applying this restriction at infinitely many lengths destroys \textsf{qPKE} security.

There are two ways to rule out a candidate evaluator: either we condition on those oracle parameters for which it is not deterministic enough to satisfy the correctness condition of $\textsf{qOWF}_1$ or $\textsf{qOWF}_2$ and we condition on that event, or we show somehow that the evaluator can be inverted using the \textsf{PSPACE} oracle. Thus one option is that we categorize each evaluator into two categories, according to the determinism requirement with respect to both $\textsf{qOWF}_1$ and $\textsf{qOWF}_2$, averaged over the oracle randomness: 
\begin{enumerate}
    \item the first category where the failure probability of determinism, i.e., the probability over the oracles that the \emph{determinism} does not hold, is at least a threshold $t$, in which case we condition on the event that they are not deterministic.\footnote{The notion of determinism depends on whether we consider $\textsf{qOWF}_1$ or $\textsf{qOWF}_2$.} 
    \label{it:first-category}
    \item the second category where this failure probability is smaller than $t$ with respect to either $\textsf{qOWF}_1$ or $\textsf{qOWF}_2$.\label{it:barrier-theorem-agreement} 
\end{enumerate}
If we can prove that, using a PSPACE oracle, we can invert all \textsf{qOWF}s corresponding to the evaluators in the second category, then we can rule out all $\textsf{qOWF}_1$s and $\textsf{qOWF}_2$s.
However, the PSPACE oracle does not have access to the oracle data, namely $(A,B,\tau)$. So it seems like the only way we can rule out evaluators in the second category is to show that these evaluators do not use the tables in a non-trivial way! Our intuition here is that since any such evaluator in the second category is deterministic according to the requirements of either $\textsf{qOWF}_1$ or $\textsf{qOWF}_2$, it agrees on a function for an overwhelming fraction of the inputs. Hence, since the three oracles $O_{\mathrm{samp}},O_{\mathrm{enc}},O_{\mathrm{dec}}$ are randomized, any such evaluator satisfying such an agreement should not depend on the oracles too strongly, i.e., the evaluator computes the \emph{same function} under a large fraction of the oracles.

We formalize this intuition and prove a general result which can be seen next. This can be seen as a generalization of the barrier theorem proven by Bouaziz--Ermann et al.~\cite[Theorem~5.3 and Lemma~5.8]{BEHMV26}, who obtain a similar agreement albeit on output \emph{bits} in the common Haar random model. Hence we call it the general barrier theorem.
\paragraph{The Generalized Barrier Result.}
Consider a distribution on some space of tables $\mathcal{T}$, and say we consider a map encoding a given table $A$ to an oracle $\mathcal{O}(A)\in \cal{X}$, where $\cal X$ is a space of oracles. Finally, suppose there is a way to get interpolation between two tables $A,B\in \mathcal{T}$ to get an interpolated oracle, using an interpolation parameter $\tau\in [0,1]$ represented as $\Interp(A,B,\tau)$ such that $\Interp(A,B,0)=\mathcal{O}(A)$ and $\Interp(A,B,1)=\mathcal{O}(B)$. Moreover, suppose changing $\tau$ changes the interpolated oracle $\Interp(A,B,\tau)$ at a bounded speed $v_0$, i.e., for any $s,t\in [0,1]$ the distance (as per operator norm) between the oracles is bounded by $|\Interp(A,B,s)-\Interp(A,B,t)|\leq v_0|s-t|$. (Later, we show that our oracles also change in a bounded fashion as we vary $\tau$; see Lemma~\ref{lem:edge-properties}.) Then, we show:

\begin{theorem}[Formalized in Corollary~\ref{cor:barrier-large-set}]\label{thm:general-barrier-informal}
Let $\nu$ be a distribution on the set of tables $\cal T$ and draw $A,B$ independently from $\nu$, and independently sample $\tau$ from $[0,1]$ uniformly. Let $\mu$ be the resulting distribution $\Interp(A,B,\tau)$ obtained on the oracles. Suppose
\begin{enumerate}
    \item $S_1,\ldots,S_r$ are sets of oracles that are pairwise separated by distance at least $\delta>0$,
    \item a sample $O$ drawn from $\mu$ lies outside their union with probability $\varepsilon_{\mathrm{bad}}\leq1/2$,
\end{enumerate}
 then there exists a set $S_i$, such that the probability that a sample from $\mu$ falls in $S_i$ is at least $1-\left(1+\frac{6v_0}{\delta}\right)\varepsilon_{\mathrm{bad}}$.

\end{theorem}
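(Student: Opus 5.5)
The plan is to exploit the continuity of each interpolation path $\tau\mapsto\Interp(A,B,\tau)$. A single path cannot move from one set $S_i$ to another without spending a definite amount of ``time'' outside $\bigcup_i S_i$. Paths that share an endpoint table can then be glued, which lets us compare the labels of independent paths. Throughout, write $G=\bigcup_i S_i$. For fixed $(A,B)$ let
\[
e(A,B)=\Pr_{\tau}\bigl[\Interp(A,B,\tau)\notin G\bigr],\qquad p_i(A,B)=\Pr_\tau\bigl[\Interp(A,B,\tau)\in S_i\bigr],
\]
so that $\mathbb{E}_{A,B}[e(A,B)]=\varepsilon_{\mathrm{bad}}$ and $\mu(S_i)=\mathbb{E}_{A,B}[p_i(A,B)]$.

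\emph{Step 1 (crossing cost).} Suppose $\tau_1<\tau_2$, with $\Interp(A,B,\tau_1)\in S_i$ and $\Interp(A,B,\tau_2)\in S_j$ for some $i\neq j$. I will show that the bad set of parameters in $[\tau_1,\tau_2]$ has measure at least $\delta/v_0$. The argument is as follows. Take the last time $\sigma$ in the interval at which the path lies within $S_i$'s closure side, and the next time at which it is in some $S_k$ with $k\neq i$. Any two such points are at distance at least $\delta$, and distances grow at most at rate $v_0$, so the gap between them has length at least $\delta/v_0$ and consists entirely of bad parameters.

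The same reasoning applies verbatim to a concatenated path of parameter length $2$, because the glued map is still $v_0$-Lipschitz. I would make the ``last time'' argument rigorous via a sup/inf over the good times belonging to $S_i$. This measure-theoretic bookkeeping is the main technical obstacle: the sets $S_i$ need not be closed, so I would argue with approximating points rather than exact boundary times.

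\emph{Step 2 (labels and gluing).} Call a path \emph{pure with label $i$} if all of its good times lie in $S_i$ and at least one good time exists. Otherwise call it mixed, and give it label $\bot$. By Step 1, a mixed path with some good time has $e(A,B)\geq\delta/v_0$; a path with no good time has $e=1$. Hence Markov's inequality gives
\[
\Pr[L(A,B)=\bot]\leq \varepsilon_{\mathrm{bad}}(1+v_0/\delta).
\]
Now draw $A,B,C,D\sim\nu$ independently. The paths $(A,B)$ and $(A,D)$ share the point $\mathcal O(A)$ at $\tau=0$. Running the first path backwards and then the second forwards gives a Lipschitz path of length $2$. If both labels are in $\{1,\dots,r\}$ but differ, Step 1 forces $e(A,B)+e(A,D)\geq\delta/v_0$. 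By Markov,
\[
\Pr\bigl[L(A,B)\neq L(A,D)\text{ with both labels non-}\bot\bigr]\leq 2\varepsilon_{\mathrm{bad}}v_0/\delta.
\]
Likewise, $(A,D)$ and $(C,D)$ share $\mathcal O(D)$ at $\tau=1$. Only the two endpoint identities $\Interp(A,B,0)=\mathcal O(A)$ and $\Interp(A,B,1)=\mathcal O(B)$ are used, so no symmetry of $\Interp$ is needed.

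\emph{Step 3 (conclusion).} The pairs $(A,B)$ and $(C,D)$ are independent copies, so the two labels are identically distributed with law $q_i=\Pr[L=i]$. By the union bound,
\[
\Pr[L(A,B)\neq L(C,D)]\leq 4\varepsilon_{\mathrm{bad}}v_0/\delta+3\Pr[L=\bot],
\]
where the second term crudely covers the $\bot$ cases along the chain. The left-hand side is at least $1-\sum_i q_i^2\geq 1-\max_i q_i$. This yields an index $i$ with $q_i\geq 1-O(\varepsilon_{\mathrm{bad}}v_0/\delta)-O(\varepsilon_{\mathrm{bad}})$.

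Finally,
\[
\mu(S_i)\geq\mathbb{E}\bigl[(1-e(A,B))\mathbf 1[L(A,B)=i]\bigr]\geq q_i-\varepsilon_{\mathrm{bad}}.
\]
To land exactly on the constant $1+6v_0/\delta$, I would tighten the $\bot$ accounting. Handling the no-good-time paths separately, and using $\varepsilon_{\mathrm{bad}}\leq1/2$ to absorb lower-order terms, should bring the total down to $\bigl(1+\tfrac{6v_0}{\delta}\bigr)\varepsilon_{\mathrm{bad}}$. If needed, I would replace the crude union bound by the direct estimate $\Pr[\text{labels differ}]\leq\Pr[\text{combined bad measure}\geq\delta/v_0]$, applied to the full glued path $B\to A\to D\to C$ of length $3$.
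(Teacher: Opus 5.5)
Your edge-labelling argument is sound in structure and genuinely different from the paper's, but as written it does not prove the inequality in the statement. Carrying out your Steps~2--3 gives $1-\max_i q_i\le 4\varepsilon_{\mathrm{bad}}v_0/\delta+3\varepsilon_{\mathrm{bad}}(1+v_0/\delta)$. The step $\mu(S_i)\ge q_i-\varepsilon_{\mathrm{bad}}$ then yields only $\mu(S_i)\ge 1-4\varepsilon_{\mathrm{bad}}-7\varepsilon_{\mathrm{bad}}v_0/\delta$, whereas the claim is $1-\varepsilon_{\mathrm{bad}}-6\varepsilon_{\mathrm{bad}}v_0/\delta$. Your last paragraph only asserts that the constant can be recovered, and the remedies you list are not enough as stated. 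Suppose you use the direct estimate on $B\to A\to D\to C$, bound $1-\sum_i q_i^2$ by the crossing term plus the probability that $L(A,B)$ or $L(C,D)$ is $\bot$, and then subtract $\varepsilon_{\mathrm{bad}}$ again. This still leaves $1-3\varepsilon_{\mathrm{bad}}-O(\varepsilon_{\mathrm{bad}}v_0/\delta)$, which is weaker than the claim whenever $v_0/\delta$ is small. The leak is that the losses not multiplied by $v_0/\delta$ are paid separately and repeatedly: edges with no good time appear in the $\bot$ terms, and bad time on pure edges appears in $q_i-\varepsilon_{\mathrm{bad}}$. The target allows the bad mass $\varepsilon_{\mathrm{bad}}$ to be paid exactly once.

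The constant does follow from your framework with different bookkeeping. Put $x=\varepsilon_{\mathrm{bad}}v_0/\delta$, $Q=\sum_{i\le r}q_i$ (the probability that an edge is pure), and $q_{i^\star}=\max_i q_i$.

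\emph{Crossing bound.} For $i\ne j$, the event $\{L(A,B)=i,\ L(C,D)=j\}$ forces $e(A,B)+e(A,D)+e(C,D)\ge\delta/v_0$, whatever the middle edge does. Markov then gives $Q(Q-q_{i^\star})\le\sum_{i\ne j}q_iq_j\le 3x$, with no $\bot$ terms at all.

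\emph{Accounting.} Instead of $\mu(S_i)\ge q_i-\varepsilon_{\mathrm{bad}}$, write $\mu(S_{i^\star})=1-\varepsilon_{\mathrm{bad}}-\sum_{j\ne i^\star}\mu(S_j)$. Only pure edges with label $j\ne i^\star$ and edges meeting two sets contribute to this sum, each by at most $1$.

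\emph{Case split.} If $v_0<\delta$, no single edge meets two sets. Then $Q\ge1-\Pr[e=1]\ge1-\varepsilon_{\mathrm{bad}}\ge1/2$, so the loss is at most $3x/Q\le6x$. If $v_0\ge\delta$, every non-pure edge has $e\ge\delta/v_0$, so both $1-Q$ and the two-set probability are at most $x$. Either $x\ge1/6$ and the claim is vacuous, or $Q>5/6$ and the loss is below $3x/Q+x<6x$.

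\emph{Comparison with the paper.} The paper builds one stationary random path from five i.i.d.\ tables, a uniform phase $U$, and a uniform speed $T\in[2,3)$. Every fixed-time marginal of this path is exactly $\mu$ and its endpoints are independent. This gives $\varepsilon_{\mathrm{bad}}\ge\frac{\delta}{3v_0}\bigl(g^2-\sum_i\mu(S_i)^2\bigr)$ directly for point probabilities, so the corollary is two lines of algebra. Your route trades that construction for Markov bounds on per-edge bad time. The price is converting edge-label probabilities into $\mu(S_i)$, which is exactly where constants leak. The factor $3$ (three edges versus $T<3$) and the factor $2$ (from $\varepsilon_{\mathrm{bad}}\le1/2$) play the same roles in both arguments.

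The closure issue you flag in Step~1 is harmless. The closures of the $S_k$ stay $\delta$-separated, and continuity makes the first and last hitting times attained, exactly as in the paper.
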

In our case, we apply the barrier theorem to rule out each evaluator from the second category (see \Cref{it:barrier-theorem-agreement}) according to each of the two definitions. The sets of oracles $S_1,\ldots,S_r$ are defined (separately for each of the two definitions, as we describe in the next paragraph). In the case of $\textsf{qOWF}_1$, each set $S_i$ consists of all the oracles that yield the same function under the evaluator. Hence, for our application, it is going to be doubly exponential in the input length of the functions. Fortunately, the bound in the theorem is \emph{independent} of the number of sets, and thus does not become meaningless even for such a large number of sets. This is one of two axes of generalization compared to the barrier theorem in ~\cite{BEHMV26}, which we need for our separation. The second axis of generalization is that our result holds for an arbitrary metric space $\cal X$, space $\cal T$, and functions $\cal{O}$, $\Interp$ satisfying the conditions above, and with respect to any distribution $\nu$, without needing the specific properties of the Haar random distribution. 

\paragraph{Invoking the barrier theorem.} Next, we complete the separation proof, assuming the generalized barrier theorem. To apply the above generalized barrier theorem, we define the sets of oracles $S_1,\ldots,S_r$ for the two definitions separately. For $\textsf{qOWF}_1$, fix an evaluator $E$ and the security parameter, keeping the operations at short lengths fixed. Let $S_f$ contain the oracles for which $E$ outputs $f(x)$ with probability at least $5/6$ on every input $x$. Since different functions force an output-probability gap of at least $2/3$ at some input $x$, these sets are separated by $1/(3q)$ in the maximum operator-norm distance between corresponding oracle operations, where $q$ denotes the number of queries that the evaluator makes. In the second category (see \Cref{it:barrier-theorem-agreement}) with respect to $\textsf{qOWF}_1$, their union has probability at least $1-t$, so Theorem~\ref{thm:general-barrier-informal} gives a set $S_g$ of oracles with probability weight at least $1-O(qt)$ (Lemma~\ref{cor:whole-function}) such that the evaluator computes the same function on all inputs under any oracle in $S_g$.

Next, after estimating the fixed short operations, and then if the sampled oracle lies in the corresponding large set $S_g$ (which happens with probability at least $1-O(qt)$), the \textsf{PSPACE} oracle can find a preimage (Lemmas~\ref{lem:mean-recovery} and~\ref{lem:strict-inversion}) by averaging over oracles from the base distribution on the oracles, without needing the real oracles. Thus, at each step of our conditioning, which corresponds to a particular evaluator, either the evaluator is of the first category (see \Cref{it:first-category}) and we condition on an event that forces a correctness failure, or we show that the evaluator can be inverted except with a small probability $O(qt)$ (and we do not condition). With an appropriate choice of the parameters, these small probabilities indicating the potential failure to invert at each of the countably many steps have a finite sum. Thus, after conditioning over all countably many steps, we get a distribution that, by the Borel-Cantelli theorem, satisfies that with probability one, no $\textsf{qOWF}_1$ exists (Theorem~\ref{thm:oracle-selection}).

For $\textsf{qOWF}_2$, instead fix an input $x$ and let $S_{x,y}$ contain the oracles for which $\Pr[E^{\mathcal O}(x)=y]\geq5/6$; it is easy to see that these sets also satisfy the hypothesis for the generalized barrier theorem. This lets us invoke \Cref{thm:general-barrier-informal} separately for each reliable input, yielding an agreement guarantee with one function for most oracles, on most inputs. 
  For each evaluator $E$, we choose sufficiently spaced security parameters $1\leq\lambda_1<\lambda_2<\cdots$: at $\lambda_1$, we sample the new oracle data that $E(1^{\lambda_1})$ may query, after potential conditioning with respect to Definition~\ref{it:def-1}; at $\lambda_2$, we do the same with respect to Definition~\ref{it:def-2}, keeping previously sampled data unchanged; and we continue alternating. This way, after countably many such steps, we obtain a distribution under which, with probability one, each evaluator is either invertible at infinitely many security parameters or fails the required determinism at infinitely many security parameters, for each definition separately; see \Cref{thm:oracle-selection}.\footnote{If we consider $\textsf{qOWF}_3$ as discussed before, the above agreement guarantee cannot be achieved because, for $\textsf{qOWF}_3$, there is no suitable definition of the separated sets $S_1,\dots,S_r$ using which we can invoke \Cref{thm:general-barrier-informal}.}

\paragraph{\textsf{qPKE} security after conditioning.}
It remains to show that conditioning the oracle distribution as above preserves \textsf{qPKE} security. We measure the accumulated effect of the conditioning using relative entropy between the conditioned distributions and the base distribution, and choose the parameters so that this quantity stays polynomially bounded on all oracle lengths an efficient adversary can query; in particular, we show that the relative entropy of the conditioned distribution with respect to the base distribution, restricted to oracle lengths at most $m$, is bounded by $2m$. Next, we show that using Pinsker's inequality, this bound allows us to replace the data at a hidden random subset of p-keys containing the challenge p-key by fresh samples from the base distribution, with negligible effect on the adversary's success probability\footnote{The size of the random subset is chosen to be exponential yet negligible as a fraction of the $p$-keys.}; see \Cref{lem:conditioned-pke,lem:hidden-subset-entropy}. Note that the secret data at the challenge public key are now independent of the remaining oracle data, hence the earlier \textsf{qPKE} security argument (shown in \Cref{lem:fresh-record-pke}) applies (see Lemma~\ref{lem:conditioned-pke}).
Using Borel--Cantelli, this gives security with probability one under the conditioned distribution. Moreover, with some positive probability, the two tables' ciphertext sets are disjoint at every p-key and length. Therefore, we conclude that there must exist a fixed oracle with perfectly correct, secure \textsf{qPKE} but neither $\textsf{qOWF}_1$s nor $\textsf{qOWF}_2$s exist (Corollary~\ref{cor:selected-oracle-separation}).

\paragraph{Proving the barrier theorem.} We prove the general barrier theorem using a \emph{path-based framework}; we show that for any independently sampled oracles $O_1,O_2$ from $\mu$, we can connect them using a random path of oracles such that each oracle in the path has marginal distribution $\mu$; the detailed path construction is given in the proof of \Cref{lem:stationary-paths}. Then we show the contrapositive of the above theorem by proving that if no single set $S_i$ has probability weight large enough under $\mu$, then since the sets are mutually separated, the path must spend a sufficient amount of time, i.e., contain a large number of oracles outside the union of the sets $S_1,\ldots,S_r$. However, we know that each oracle in the path is marginally distributed according to $\mu$, and that most of the probability weight of $\mu$ lies in the union of the sets; the expected number of oracles in the path, or the expected amount of time spent outside the union of the sets, must be small, thus arriving at a contradiction. The formal argument is a bit more delicate as the random path is continuous, but the same intuition holds by considering the path to be indexed by continuous time, and then concluding the contradiction on the amount of time spent outside the union of the sets; see \Cref{sec:barrier} for more details.

\subsection{Public-key encryption from puzzles and \textsf{qiO}}

Our starting point is the classical construction of \textsf{OWF}s together with \textsf{iO} implying public-key encryption~\cite{SW14}. The goal is to replace the role of the \textsf{OWF} assumption with \textsf{EV-OWPuzz}.
\paragraph{The qPKE construction.}
We first describe the construction with classical \textsf{iO}. Given such an obfuscation, the idea behind the \textsf{PKE} scheme is to publish an obfuscation of $P_u$ for a uniformly random $u$ as the public key and keep $u$ as the secret key. Here $P_u$ denotes a function which outputs $1$ if the input is $u$ and $0$ elsewhere. The encryption of a message bit $b$ is to simply output an obfuscated program as the ciphertext, where the circuit to be obfuscated returns the message bit when the public-key program accepts its input and zero otherwise, so evaluating it at the secret key $u$ recovers the message. Hence, decryption is just evaluating the cipher-text program at the secret key. It is easy to see that correctness follows from the correctness of the obfuscation scheme. Further it is easy to see that for security, we only need the additional point-hiding property from the $\textsf{iO}$ scheme that the obfuscation of random point function $P_u$ for a uniformly random $u$ is indistinguishable from the obfuscation of the zero circuit $Z$. However, to prove such a property for a $\textsf{iO}$ scheme, we would need an additional hardness assumption like one-way functions, since this cannot follow from \textsf{iO} security unconditionally as \textsf{iO} can even exist when P=NP. A recent work of~\cite{MSY26} provides such a construction of \textsf{PKE} based on \textsf{iO} using the complexity theoretic assumption in the proof. Note that the same template would work to give $\textsf{qPKE}$ even if we can get such a point-hiding guarantee for \textsf{qiO}. Our approach is to prove such a guarantee for \textsf{qiO} but by only relying on the requirement of \textsf{EV-OWPuzz} instead of one-way functions.\footnote{Here we note that the definition of \textsf{qiO} does allow obfuscating quantum circuits implementing classical functions. This is crucial because we would like to use the quantum algorithms of the \textsf{EV-OWPuzz} inside the obfuscated program.}

\paragraph{The obstacle and the suppression lemma.}
In the classical case, the starting point of the proof is that \textsf{iO} is a single-point differing-inputs obfuscation (\textsf{diO}), i.e., a non-negligible distinguisher between obfuscations of two circuits differing on one input allows that input to be recovered from the circuit descriptions~\cite[Section~6.1]{BCP14}. In the proof, we can combine this property with an injective keyed one-way function (which can be built from \textsf{iO} and one-way functions) due to the deterministic and injective property, to ensure that finding a distinguishing input for two 1-point differing inputs, would lead to inverting the keyed one-way function to the unique answer. However, an \textsf{EV-OWPuzz} verifier may accept many answers with intermediate acceptance probabilities so there is no \emph{unique} answer, which would violate the condition of requiring the circuits to differ at a single point. Thus we need a way to take an \textsf{EV-OWPuzz} verifier and suppress all but a unique accepting answer for it.

 In this work, we show that such a suppression lemma indeed holds.
\begin{lemma}[Informal suppression lemma; formalized in Lemma~\ref{lem:ev-isolation}]
For any \textsf{EV-OWPuzz} scheme, given a puzzle $p$ with an answer that the verifier accepts with probability at least $2/3$, an efficient randomized procedure constructs a quantum circuit $W$ based on the puzzle $p$ such that, with inverse-polynomial probability, there is an answer $a^\star$ satisfying $\Pr[W(a^\star)=1]\geq1-\negl(\lambda)$, for every $a\neq a^\star$, $
 \Pr[W(a)=1]\leq\negl(\lambda)$ and the \textsf{EV-OWPuzz} verifier still accepts $a^\star$ for $p$ with probability greater than $1/3$.
\end{lemma}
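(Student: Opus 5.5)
We prove this with a Valiant--Vazirani style isolation applied to an amplified, thresholded version of the verifier. Let $V$ be the \textsf{EV-OWPuzz} verifier and let answers lie in $\bits^{\ell}$. First we amplify and threshold. Define $\tilde V_\theta(p,a)$ as follows: run $V(p,a)$ independently $k=\poly(\lambda)$ times and accept iff the empirical acceptance frequency is at least $\theta$. By Hoeffding, if $\Pr[V(p,a)=1]\ge\theta+\epsilon$ then $\tilde V_\theta$ accepts with probability $1-\negl$. If $\Pr[V(p,a)=1]\le\theta-\epsilon$, it accepts with probability $\negl$. Here $\epsilon=1/\poly$. The difficulty is answers whose acceptance probability lies in the window $(\theta-\epsilon,\theta+\epsilon)$.

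To handle this, we choose $\theta$ at random among $L=\poly(\lambda)$ evenly spaced thresholds in $[1/3+\epsilon,2/3-\epsilon]$, with windows of width $2\epsilon$ that are pairwise disjoint. Each answer's acceptance probability falls into at most one window. So for a fixed answer, the chance that a random threshold makes it ``ambiguous'' is at most $1/L$. This is not yet enough, because many answers could simultaneously be ambiguous.

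The next step combines this with hashing, which yields a useful global statement. Let $N_\theta$ be the set of answers with $\Pr[V=1]\ge\theta-\epsilon$, so $N_\theta$ is decreasing in $\theta$. Set $G_\theta=\{a:\Pr[V=1]\ge\theta+\epsilon\}$, which is nonempty because the hypothesized answer has acceptance probability $\ge 2/3$. Since the sizes $|N_\theta|$ are monotone and bounded by $2^\ell$, a pigeonhole argument over the $L$ thresholds shows the following. If $L\gg \ell$, then for at least a $1/2$ fraction of the thresholds, $|N_\theta|\le 2|G_\theta|$. This follows because $|G_{\theta_i}|\ge|N_{\theta_{i+1}}|$, and doubling can happen at most $\ell$ times.

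Now we isolate. We sample a guess $r\in\{0,\dots,\ell\}$ for $\lfloor\log|G_\theta|\rfloor$ and a pairwise-independent hash $h:\bits^\ell\to\bits^{r+2}$. We set
\[
W(a)=\tilde V_\theta(p,a)\wedge [h(a)=0].
\]
By the standard Valiant--Vazirani calculation, with probability $\Omega(1)$ exactly one element $a^\star$ of $N_\theta$ hashes to $0$. Since $|N_\theta|\le2|G_\theta|$, that element lies in $G_\theta$ with constant probability; a symmetry/averaging argument over which element survives justifies this. Then $W(a^\star)$ accepts with probability $1-\negl$. Every $a\ne a^\star$ either fails the hash (so $W$ rejects deterministically) or lies outside $N_\theta$ (so $W$ rejects with $\negl$ probability). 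Moreover $\Pr[V(p,a^\star)=1]\ge\theta+\epsilon>1/3$.

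The total success probability is $\Omega(1/\ell)$ from guessing $r$, times $1/2$ from the choice of $\theta$, times a constant from the hash, which is inverse-polynomial. Finally, $W$ is an efficient quantum circuit with classical description $(p,\theta,h,k)$.

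The main obstacle is the step that places the survivor in $G_\theta$ rather than merely in $N_\theta$. The plain Valiant--Vazirani argument only isolates an element of $N_\theta$, and elements of $N_\theta\setminus G_\theta$ would have non-negligible but non-overwhelming acceptance. The pigeonhole choice of $\theta$ is what makes $N_\theta\setminus G_\theta$ at most half of $N_\theta$. To get constant probability that the unique survivor lies in $G_\theta$, we use that each fixed element is the unique survivor with roughly equal probability under pairwise independence. Spelling this out requires a careful pairwise-independence bound, and that calculation is where the argument needs the most care.
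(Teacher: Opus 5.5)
Your plan follows the paper's proof: nested acceptance sets, a random threshold level at which the non-negligibly accepted set is at most a constant factor larger than the strongly accepted set, a guessed hash length, a pairwise-independent hash, and a final Hoeffding test. The paper first amplifies $V$ by an $O(\log m)$-fold majority and thresholds the amplified probability at levels $j/m$ with a factor-$4$ pigeonhole. Thresholding $v_p$ directly inside $[1/3,2/3]$ with a factor-$2$ pigeonhole, as you do, works equally well and gives $v_p(a^\star)>1/3$ immediately.

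The gap is the step you flag yourself, and your sketch of it fails for two reasons. First, pairwise independence gives no symmetry: nothing makes the elements of $N_\theta$ equally likely to be the unique survivor, and you do not need that. Second, your hash length is calibrated to the wrong set. With $2^r\le|G_\theta|<2^{r+1}$ and $M=2^{r+2}$, you can have $|N_\theta|=M-2$. Then the pairwise-independence estimate $\frac{|N_\theta|}{M}\bigl(1-\frac{|N_\theta|-1}{M}\bigr)$, which you quote as $\Omega(1)$, is only $O(1/M)$. The per-element bound for survivors in $G_\theta$ collapses to $3/M^2$.

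The fix is the paper's computation with $A=G_\theta$ and $B=N_\theta$. You guess $r$ uniformly from $\{0,\dots,\ell\}$ anyway, so analyze the guess with $2^r\le|N_\theta|<2^{r+1}$, which gives $M=2^{r+2}\le4|N_\theta|$. For each fixed $a\in G_\theta$, a union bound gives
\[
\Pr\bigl[h(a)=0\ \wedge\ h(b)\ne0\ \ \forall b\in N_\theta\setminus\{a\}\bigr]\ \ge\ \frac1M-\frac{|N_\theta|-1}{M^2}\ \ge\ \frac1{2M}.
\]
These events are disjoint for distinct $a$. Summing over $G_\theta$, the unique element of $N_\theta$ in the zero bucket lies in $G_\theta$ with probability at least $|G_\theta|/(2M)\ge|G_\theta|/(8|N_\theta|)\ge1/16$. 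With this replacement, the rest of your argument goes through.
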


To prove the point-hiding property for \textsf{qiO} based on the suppression lemma, note that on a successful trial, the circuit produced by suppression implements a point function at an answer $a^\star$. Masking its input by a uniform $r$ makes the accepting point $u=a^\star\oplus r$ uniform. Finally, an extension of the extraction argument from~\cite[Section~6.1]{BCP14} turns a distinguisher between obfuscations of a random point and zero circuits, with a non-negligible advantage, into a procedure recovering $a^\star=u\oplus r$ with inverse-polynomial probability. This contradicts the \textsf{EV-OWPuzz} security since $a^\star$ is still accepted by the \textsf{EV-OWPuzz} verifier with probability at least $1/3$, proving the point hiding guarantee of \textsf{qiO} (see Lemma~\ref{lem:ev-point-hiding}).

\paragraph{Proving the suppression lemma.}
To prove the suppression, we first amplify the verifier by standard independent repetition so that answers originally accepted with probability at least $2/3$ are accepted with overwhelming probability, while those accepted with probability at most $1/3$ are accepted with negligible probability. Next, we guess a hash output length, sample a random hash function, and retain only answers hashing to zero. With inverse-polynomial probability, this hash bucket contains one answer whose acceptance probability exceeds that of every other retained answer by an inverse-polynomial amount. Next, conditioned on this event, we utilize this inverse-polynomial gap to amplify the acceptance probability of the unique answer to $1-$ negligible while on all other inputs, the accepting probability plunges to negligible, via independent repetitions and a suitable choice of threshold on the number of accepting trials for acceptance. For more details on how to choose random hash function and the threshold efficiently, see the proof of Lemma~\ref{lem:ev-isolation}.

\section{Definitions}
\label{sec:definitions}

We collect the conventions and primitive definitions.

\subsection{Conventions}
\label{sec:conventions}

We use a lowercase prefix \textsf{q} for the quantum-computable variant of a classical primitive, as in \textsf{qOWF} and \textsf{qPKE}.

All algorithms and adversaries in the work are uniform quantum polynomial time (QPT). Their classical input and output lengths have fixed polynomial bounds on the security parameter. A function $\eta:\mathbb N\to[0,1]$ is negligible if, for every positive integer $c$, $\eta(\lambda)\leq\lambda^{-c}$ for all sufficiently large $\lambda$. We write $\negl(\lambda)$ for such a bound. We use $[d]=\{1,\ldots,d\}$, identify $[2^n]$ with $\bits^n$, and write $U_m$ for the uniform distribution on $\bits^m$.

We use \textsf{P}, \textsf{NP}, and \textsf{PSPACE} for deterministic polynomial time, nondeterministic polynomial time, and deterministic polynomial space, respectively. We also use bounded-error quantum polynomial time (\textsf{BQP}), quantum Merlin--Arthur (\textsf{QMA}), and quantum classical Merlin--Arthur (\textsf{QCMA}).

\paragraph{Oracle access and gate conventions.} Algorithms with oracle access in this work may query the oracle in superposition, and have forward, inverse, transpose, conjugate, and controlled access. 

\paragraph{Probability distances.}
For distributions $Q,P$, define their relative entropy by $D(Q\Vert P)=\int\log(dQ/dP)\,dQ$ when $Q$ has a density relative to $P$, and set it to infinity otherwise. Here $\log$ denotes the natural logarithm function. The statistical distance, or equivalently, the total variation distance, is $\SD(Q,P)=\sup_A|Q(A)-P(A)|$, where the supremum is over measurable events. 
We use the chain rule for relative entropy, its nonnegativity, and Pinsker's inequality, as recalled below. 
First, we recall the chain rule for relative entropy, expressed in terms of conditional distributions.
\begin{lemma}[Chain rule for relative entropy{~\cite[Section~2.2, Lemma~3]{CP15}}]
\label{lem:kl-chain-rule}
For joint distributions $\mathcal R_{UV}$ and $\mathcal S_{UV}$ with finite relative entropy,
\[
D(\mathcal R_{UV}\Vert\mathcal S_{UV})
=
D(\mathcal R_U\Vert\mathcal S_U)
+
\mathbb E_{u\leftarrow\mathcal R_U}
D\!\left(
\mathcal R_{V\mid U=u}
\middle\Vert
\mathcal S_{V\mid U=u}
\right).
\]
\end{lemma}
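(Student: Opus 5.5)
The plan is to prove the identity by factorizing the Radon--Nikodym derivative of the joint distributions into a marginal factor and a conditional factor, and then integrating its logarithm against $\mathcal R_{UV}$. Throughout, all spaces are standard Borel; in our applications they are countable products of finite sets and intervals. Under that assumption, regular conditional distributions $\mathcal R_{V\mid U=u}$ and $\mathcal S_{V\mid U=u}$ exist and are unique up to null sets of the respective marginals.

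\textbf{Step 1 (absolute continuity passes to the pieces).} Since $D(\mathcal R_{UV}\Vert\mathcal S_{UV})<\infty$, we have $\mathcal R_{UV}\ll\mathcal S_{UV}$ with density $g=d\mathcal R_{UV}/d\mathcal S_{UV}$. Projecting onto $U$ gives $\mathcal R_U\ll\mathcal S_U$, with density
\[
h(u)=\int g(u,v)\,d\mathcal S_{V\mid U=u}(v).
\]
On the set $\{h>0\}$, which has full $\mathcal R_U$-measure, I define
\[
k(u,v)=g(u,v)/h(u).
\]
I will check that $k(u,\cdot)$ is a version of $d\mathcal R_{V\mid U=u}/d\mathcal S_{V\mid U=u}$. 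To do so, I test against product sets $E\times F$ and use the disintegration formula, which gives
\[
\mathcal R_{UV}(E\times F)=\int_E h(u)\int_F k(u,v)\,d\mathcal S_{V\mid U=u}(v)\,d\mathcal S_U(u).
\]
A $\pi$--$\lambda$ argument then identifies $k(u,\cdot)\,d\mathcal S_{V\mid U=u}$ with $\mathcal R_{V\mid U=u}$ for $\mathcal R_U$-almost every $u$. This yields $g(u,v)=h(u)\,k(u,v)$, $\mathcal R_{UV}$-almost everywhere.

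\textbf{Step 2 (integrate the logarithm).} Taking logarithms gives
\[
\log g=\log h(u)+\log k(u,v)
\]
$\mathcal R_{UV}$-almost everywhere. I integrate both sides against $\mathcal R_{UV}$ and disintegrate the second term as an iterated integral: first over $v\sim\mathcal R_{V\mid U=u}$, then over $u\sim\mathcal R_U$. The inner integral is exactly $D(\mathcal R_{V\mid U=u}\Vert\mathcal S_{V\mid U=u})$, which gives the claimed formula.

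\textbf{Main obstacle: justifying the splitting of the integral.} The difficulty is that $\log h$ and $\log k$ can take negative values, so splitting $\int(\log h+\log k)\,d\mathcal R_{UV}$ into two separate integrals needs care. I will handle this through the negative parts, using the elementary bound $x\log x\ge -1/e$. Writing $\log^-$ for the negative part of $\log$, this gives
\[
\int \log^- h\,d\mathcal R_U=\int h\log^- h\,d\mathcal S_U\le 1/e,
\]
and similarly
\[
\int \log^- k\,d\mathcal R_{V\mid U=u}\le 1/e \quad\text{for each } u.
\]
Hence both negative parts are integrable. Tonelli's theorem then applies to the positive parts, so every term is well defined in $(-1/e,\infty]$. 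Because the left-hand side is finite, both right-hand terms are finite, and the additive split is valid.

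In the finite or countable setting, which covers most of our uses, all of this reduces to the one-line computation
\[
\log\frac{R(u,v)}{S(u,v)}=\log\frac{R(u)}{S(u)}+\log\frac{R(v\mid u)}{S(v\mid u)},
\]
summed against $R(u,v)$.
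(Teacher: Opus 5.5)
Your proof is correct. The paper gives no proof of its own and simply cites \cite{CP15}. In the discrete case your argument reduces to the one-line computation in your last paragraph, so in substance it is the standard textbook proof.

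What your version adds is the general measure-theoretic setting:
\begin{itemize}
\item you factor $d\mathcal R_{UV}/d\mathcal S_{UV}=h\cdot k$ using regular conditional distributions on standard Borel spaces;
\item you identify $k(u,\cdot)$ as the conditional density by a $\pi$--$\lambda$ argument;
\item you use $x\log x\ge -1/e$ to make the negative parts of $\log h$ and $\log k$ integrable.
\end{itemize}
The last point is exactly what justifies splitting $\int(\log h+\log k)\,d\mathcal R_{UV}$ into two integrals and applying Tonelli to the iterated integral.

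This generality is needed here. The paper applies the chain rule in Lemma~\ref{lem:hidden-subset-entropy} and in the proof of Theorem~\ref{thm:oracle-selection} to laws that contain the continuous interpolation parameters $\tau_n\in[0,1]$. A statement restricted to discrete distributions would not literally cover those uses, whereas yours does.

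Two small points are worth writing out. First, define $k$ on $\{0<h<\infty\}$; the set $\{h=\infty\}$ is $\mathcal S_U$-null because $\int h\,d\mathcal S_U=1$. Second, the $\pi$--$\lambda$ step needs a countable generating $\pi$-system that contains the whole space, and your standard Borel assumption provides one.
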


Next we note the following standard fact about the relative entropy. Note that no independence is required under $\mathcal R$.
\begin{lemma}[Relative entropy to a product distribution{~\cite[Section~2.2, Lemma~5]{CP15}}]
\label{lem:kl-product}
Let $\mathcal R$ be a joint distribution on $X_1,\ldots,X_m$, with marginals $\mathcal R_1,\ldots,\mathcal R_m$. For any product distribution $\mathcal S=\prod_{j=1}^m\mathcal S_j$,
\[
\sum_{j=1}^m D(\mathcal R_j\Vert\mathcal S_j)
\leq
D\!\left(\mathcal R\middle\Vert\prod_{j=1}^m\mathcal S_j\right).
\]
\end{lemma}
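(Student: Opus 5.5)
Since $\mathcal S=\prod_j\mathcal S_j$ is a product, I will peel off one coordinate at a time with the chain rule (\Cref{lem:kl-chain-rule}) and then bound each conditional term from below by the corresponding marginal term using convexity of relative entropy. I may assume $D(\mathcal R\Vert\mathcal S)<\infty$, since otherwise the inequality is trivial. Write $X_{<j}=(X_1,\ldots,X_{j-1})$. Applying \Cref{lem:kl-chain-rule} inductively, with $U=X_{<j}$ and $V=X_j$, and using that under the product distribution $\mathcal S$ the conditional law of $X_j$ given $X_{<j}$ is just $\mathcal S_j$, I obtain
\[
D(\mathcal R\Vert\mathcal S)=\sum_{j=1}^m \mathbb E_{x_{<j}\leftarrow\mathcal R_{X_{<j}}} D\!\left(\mathcal R_{X_j\mid X_{<j}=x_{<j}}\middle\Vert \mathcal S_j\right).
\]
The finiteness hypothesis of the chain rule holds at every stage, because $D(\mathcal R_{X_{\le j}}\Vert \mathcal S_{X_{\le j}})\le D(\mathcal R\Vert\mathcal S)$ by the data-processing (monotonicity under marginalization) property, which itself follows from the chain rule together with nonnegativity.

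It remains to show, for each $j$, that
\[
\mathbb E_{x_{<j}} D\!\left(\mathcal R_{X_j\mid x_{<j}}\middle\Vert\mathcal S_j\right)\ge D(\mathcal R_j\Vert\mathcal S_j).
\]
Since the mixture $\mathbb E_{x_{<j}}\mathcal R_{X_j\mid x_{<j}}$ equals the marginal $\mathcal R_j$, this is convexity of $Q\mapsto D(Q\Vert\mathcal S_j)$ applied to a mixture. I will derive it cleanly by expanding the two-variable relative entropy $D(\mathcal R_{X_{<j}X_j}\Vert \mathcal R_{X_{<j}}\otimes\mathcal S_j)$ in two ways via \Cref{lem:kl-chain-rule}:
\begin{itemize}
\item Conditioning on $X_{<j}$ first, the leading term vanishes, and the expansion equals exactly the left-hand side above.
\item Conditioning on $X_j$ first, the expansion equals $D(\mathcal R_j\Vert\mathcal S_j)$ plus an expected relative entropy between conditionals of $X_{<j}$.
\end{itemize}
That second term is nonnegative, which gives the bound. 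Summing over $j$ then proves the lemma.

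The main obstacle is the measure-theoretic bookkeeping in the general, non-discrete case. The $\mathcal R_j$ need not be absolutely continuous with respect to anything convenient, and the regular conditional distributions must exist. I would handle this by noting that finiteness of $D(\mathcal R\Vert\mathcal S)$ gives $\mathcal R\ll\mathcal S$, and hence every marginal and every conditional is absolutely continuous where needed. In the intended application the spaces are finite or standard Borel, so regular conditionals exist and the chain rule applies as stated.
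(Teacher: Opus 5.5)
The paper gives no proof of this lemma; it imports it from~\cite{CP15}, so there is no in-paper argument to match. Your argument is correct and is the standard one. You apply the chain rule against the product reference, then use ``conditioning increases divergence'', which you derive neatly by expanding $D(\mathcal R_{X_{\le j}}\Vert\mathcal R_{X_{<j}}\otimes\mathcal S_j)$ in both orders.

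One bookkeeping point is missing. \Cref{lem:kl-chain-rule} is stated only for finite divergence, so you should check that this auxiliary divergence is finite before expanding it. This follows easily. For $\mathcal R$-almost every $x_{<j}$, the conditional $\mathcal R_{X_j\mid X_{<j}=x_{<j}}$ is absolutely continuous with respect to $\mathcal S_j$. The auxiliary divergence therefore equals the $j$-th summand of your first decomposition. Since all summands are nonnegative, it is at most $D(\mathcal R\Vert\mathcal S)<\infty$.

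A shorter route avoids conditional distributions altogether. If $D(\mathcal R\Vert\prod_j\mathcal S_j)<\infty$, then also $\mathcal R\ll\prod_j\mathcal R_j$. Split $\log\frac{d\mathcal R}{d\prod_j\mathcal S_j}=\log\frac{d\mathcal R}{d\prod_j\mathcal R_j}+\sum_j\log\frac{d\mathcal R_j}{d\mathcal S_j}(x_j)$ and integrate against $\mathcal R$; each term has integrable negative part. This gives the exact identity $D(\mathcal R\Vert\prod_j\mathcal S_j)=D(\mathcal R\Vert\prod_j\mathcal R_j)+\sum_j D(\mathcal R_j\Vert\mathcal S_j)$, whose first term is nonnegative.

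That version identifies the slack exactly, as the total correlation of $\mathcal R$, and works on arbitrary measurable spaces. Your version needs regular conditional distributions, which is harmless here because every space in the paper is standard Borel.
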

Finally, we recall Pinsker's inequality. 
\begin{lemma}[Pinsker's inequality{~\cite[Theorem~5.15]{watrous}}]
\label{lem:pinsker}
For distributions $\mathcal R,\mathcal S$, it holds that
\[
\operatorname{SD}(\mathcal R,\mathcal S)
\leq
\sqrt{\frac{D(\mathcal R\Vert\mathcal S)}{2}}.
\]
\end{lemma}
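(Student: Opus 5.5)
This is the standard Pinsker inequality, $\SD(\mathcal R,\mathcal S)\le\sqrt{D(\mathcal R\Vert\mathcal S)/2}$ with natural logarithms. The plan is to reduce it to the two-point case by data processing, and then settle that case by a one-variable calculus argument.

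First, if $D(\mathcal R\Vert\mathcal S)=\infty$ there is nothing to prove. So assume $\mathcal R$ has a density $g=d\mathcal R/d\mathcal S$. Let $A^\star=\{g>1\}$. Then $\SD(\mathcal R,\mathcal S)=\mathcal R(A^\star)-\mathcal S(A^\star)$, since the supremum over events is attained at $A^\star$ (or at its complement, which gives the same value). Write $p=\mathcal R(A^\star)$ and $q=\mathcal S(A^\star)$, and consider the measurable map $x\mapsto \mathbf 1[x\in A^\star]$. By the data-processing inequality for relative entropy, $D(\mathcal R\Vert\mathcal S)\ge d(p\Vert q)$, where
\[
d(p\Vert q)=p\log\frac pq+(1-p)\log\frac{1-p}{1-q}.
\]
I would derive data processing from \Cref{lem:kl-chain-rule}. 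Apply it to the joint laws of $(U,X)$, with $U=\mathbf 1[X\in A^\star]$, in both orders. Ordering $X$ first, the conditional term vanishes because $U$ is a function of $X$. Ordering $U$ first, nonnegativity of the conditional term gives $D(\mathcal R\Vert\mathcal S)\ge D(\mathcal R_U\Vert\mathcal S_U)$.

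It then suffices to show $d(p\Vert q)\ge 2(p-q)^2$ for all $p,q\in[0,1]$. Fix $p$ and set $h(q)=d(p\Vert q)-2(p-q)^2$ on $(0,1)$. Then $h(p)=0$, and
\[
h'(q)=-\frac pq+\frac{1-p}{1-q}+4(p-q)=(q-p)\Bigl(\frac{1}{q(1-q)}-4\Bigr).
\]
Since $q(1-q)\le 1/4$, the bracket is nonnegative. Hence $h$ is nonincreasing for $q<p$ and nondecreasing for $q>p$, so $h\ge 0$. The endpoint cases $q\in\{0,1\}$ either make $d$ infinite or force $p=q$, which is trivial.

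Combining the two steps gives $\SD(\mathcal R,\mathcal S)^2=(p-q)^2\le d(p\Vert q)/2\le D(\mathcal R\Vert\mathcal S)/2$.

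The only delicate step is justifying data processing in the measure-theoretic setting. Conditional distributions given a binary variable are elementary, so \Cref{lem:kl-chain-rule} applies cleanly, and the rest is routine.
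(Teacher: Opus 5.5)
Your proof is correct. The paper gives no argument of its own here: it simply cites Watrous. Your route is the standard proof of Pinsker's inequality. You coarse-grain to the binary partition $\{A^\star,\ \text{its complement}\}$ by monotonicity of relative entropy, which you derive cleanly from \Cref{lem:kl-chain-rule} by expanding the joint law in both orders. You then verify $d(p\Vert q)\ge 2(p-q)^2$ via the derivative $h'(q)=(q-p)\bigl(\frac{1}{q(1-q)}-4\bigr)$.

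Your density-based formulation also gives the general measurable-space version that the paper actually needs. A finite-alphabet statement would not suffice, because the paper applies Pinsker to laws involving the continuous interpolation parameter $\tau$.

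One detail should be tidied: the case $p\in\{0,1\}$. For example, $p=1$ with $q\in(0,1)$ does occur. Then the comparison point $q=p$ is not in $(0,1)$, so $h(p)=0$ is not available on your domain. Conclude $h\ge 0$ instead from the monotonicity of $h$ together with $\lim_{q\to p}h(q)=0$. Alternatively, for $p=0$, note that it forces $\SD(\mathcal R,\mathcal S)=0$.
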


\subsection{One-way functions}

A quantum computable one-way function or \textsf{qOWF} is given by a QPT evaluator $E(x)$ with $m(\lambda)$-bit inputs and $\ell(\lambda)$-bit classical outputs. Define $f(x)$ to be the most likely output of $E(x)$, with ties broken lexicographically, and let
\[
 e(x)=\Pr[E(x)\ne f(x)].
\]

\begin{definition}[$\textsf{qOWF}_1$: Every-input determinism]
\label{def:one}
A $\textsf{qOWF}_1$ $E$ is associated with a QPT input sampler $S$ such that it satisfies one-wayness as defined below with respect to $S$. Moreover, it satisfies that there is a negligible function $\eta(\lambda)$ such that
\[
 \max_{x\in\bits^{m(\lambda)}}e(x)\leq\eta(\lambda).
\]
\end{definition}

\begin{definition}[$\textsf{qOWF}_2$: Average determinism on uniform inputs]
\label{def:two}
A $\textsf{qOWF}_2$ $E$ satisfies the one-wayness property as defined below with respect to   $S=U_{m(\lambda)}$, and also satisfies
\[
 \mathbb E_{X\leftarrow U_{m(\lambda)}}e(X)\leq\eta(\lambda)
\]
for some negligible function $\eta(\lambda)$.
\end{definition}

We introduce the following \emph{new notion} for this paper, which is implied by the previous two notions. It permits arbitrary quantum input sampling while requiring determinism only on average over that sampler. We refer to it explicitly as $\textsf{qOWF}_3$ to distinguish it from the stronger notions above.

\begin{definition}[$\textsf{qOWF}_3$: Average determinism with arbitrary sampling]
\label{def:three}
A $\textsf{qOWF}_3$ $E$ is associated with a QPT sampler $S$, such that it satisfies the one-wayness as defined below with respect to $S$. Moreover, it satisfies
\[
 \mathbb E_{X\leftarrow S(1^\lambda)}e(X)\leq\eta(\lambda)
\]
for one negligible $\eta$.
\end{definition}

\paragraph{One-wayness.} Given a QPT sampler $S$ of classical inputs, we say one-wayness holds if, for every uniform QPT inverter $I$,
\[
\Pr_{\substack{
X\leftarrow S(1^\lambda)\\
Y\leftarrow E(X)\\
X'\leftarrow I(Y,1^\lambda)
}}
\bigl[E(X')=Y\bigr]
\leq \operatorname{negl}(\lambda).
\]
Here we assume that all malformed inputs are mapped to $\bot$ by $E$.

\subsection{Pseudorandom Generators}

We recall the notion of quantum pseudorandom generators with uniform input sampling.

\begin{definition}[Quantum pseudorandom generator]
\label{def:qprg}
A \textsf{qPRG} is a QPT algorithm $G$
with classical seeds of length $r(\lambda)$ and classical outputs
of polynomial length $s(\lambda)\ge r(\lambda)+1$, satisfying:

\begin{enumerate}
\item \textbf{Correctness.}
There exist deterministic functions
$g_\lambda:\bits^{r(\lambda)}\to\bits^{s(\lambda)}$ such that
\[
 \Pr_{X\leftarrow\bits^{r(\lambda)}}
 \bigl[G(X)\ne g_\lambda(X)\bigr]
 \le \negl(\lambda).
\]

\item \textbf{Pseudorandomness.}
For every QPT distinguisher $D$,
\[
 \left|
 \Pr_{X\leftarrow\bits^{r(\lambda)}}
 \bigl[D(G(X))=1\bigr]
 -
 \Pr_{Y\leftarrow\bits^{s(\lambda)}}
 \bigl[D(Y)=1\bigr]
 \right|
 \le \negl(\lambda).
\]
\end{enumerate}
\end{definition}

\subsection{Efficiently verifiable one-way puzzles}

We recall a microcrypt primitive known as \textsf{EV-OWPuzz}, which was  introduced in \cite{CGG24}.

\begin{definition}[Efficiently verifiable one-way puzzle]\label{def:ev-puzzle}
An efficiently verifiable one-way puzzle (\textsf{EV-OWPuzz}) consists of two QPT algorithms: a sampler $\mathsf{Samp}$ and a verifier $V$. The sampler outputs a classical puzzle and answer $(p,a)\leftarrow\mathsf{Samp}(1^\lambda)$, and the verifier $b\gets V(p,a)$ outputs a bit. Write $v_p(a)=\Pr[V(p,a)=1]$. Then, 
\[
\mathbb{E}_{(p,a)\leftarrow\mathsf{Samp}(1^\lambda)}[1-v_p(a)]\leq\negl(\lambda).
\]
And, for every QPT algorithm $A$, 
$$\Pr_{(p,a)\leftarrow\mathsf{Samp}(1^\lambda)}[V(p,A(p))=1]\leq \negl(\lambda).$$ 
\end{definition}

\subsection{Public-key encryption}

We recall the notions of public-key encryption with quantum computation and classical communication. 

\begin{definition}[Public-key encryption]\label{def:pke}
A public-key encryption scheme with quantum computation and classical communication (\textsf{qPKE}) consists of three QPT algorithms with classical inputs and outputs:
\begin{itemize}
\item $\mathsf{Gen}(1^\lambda)$ outputs a public key $\mathsf{pk}$ and a secret key $\mathsf{sk}$.
\item $\mathsf{Enc}(\mathsf{pk},m)$ takes a message $m\in\mathcal M_\lambda$ and outputs a ciphertext $c$.
\item $\mathsf{Dec}(\mathsf{sk},c)$ outputs a message in $\mathcal M_\lambda$ or $\bot$.
\end{itemize}
Here $\mathcal M_\lambda$ is the message space, whose elements have polynomially bounded length.

\emph{Correctness.} For every $m\in\mathcal M_\lambda$,
\[
 \Pr_{\substack{
   (\mathsf{pk},\mathsf{sk})\leftarrow\mathsf{Gen}(1^\lambda)\\
   c\leftarrow\mathsf{Enc}(\mathsf{pk},m)}}
 \bigl[\mathsf{Dec}(\mathsf{sk},c)=m\bigr]
 \ge 1-\negl(\lambda).
\]

\emph{Security (indistinguishability under chosen-plaintext attacks, \textsf{IND-CPA}).} For every QPT adversary $A=(A_1,A_2)$,
\[
 \left|
 \Pr\!\left[
 b'=b
 \;\middle|\;
 \begin{array}{l}
   (\mathsf{pk},\mathsf{sk})
      \leftarrow\mathsf{Gen}(1^\lambda)\\
   (m_0,m_1,\rho)
      \leftarrow A_1(\mathsf{pk})\\
   b\leftarrow\{0,1\}\\
   c\leftarrow\mathsf{Enc}(\mathsf{pk},m_b)\\
   b'\leftarrow A_2(\mathsf{pk},c,\rho)
 \end{array}
 \right]
 -\frac12
 \right|
 \le \negl(\lambda).
\]
The adversary chooses equal-length messages $m_0,m_1\in\mathcal M_\lambda$ and may retain a quantum state $\rho$ between its two stages. 
\end{definition}

\subsection{Point-function obfuscation}

For a point $a\in\bits^\lambda$, write $P_a(x)=\mathbf 1\{x=a\}$, the function that maps every input to $0$ except for $a$ which is mapped to $1$. Also, let $O_a\ket{x,z}=\ket{x,z\oplus P_a(x)}$. We now define a point-function obfuscator using virtual black-box guarantee.

\begin{definition}[Point-function obfuscation]
\label{vbb:def}
A point-function obfuscator consists of two uniform QPT algorithms with classical inputs and outputs:
\begin{itemize}
\item $\Obf(1^\lambda,a)$ takes a point $a\in\bits^\lambda$ and outputs a classical program $p$.
\item $\Eval(p,x)$ takes a program $p$ and an input $x\in\bits^\lambda$, and outputs a bit.
\end{itemize}

\emph{Correctness.} There is a negligible function $\eta$ such that, for every $\lambda$,
\[
 \max_{a\in\bits^\lambda}
 \mathbb E_{p\leftarrow\Obf(1^\lambda,a)}
 \max_{x\in\bits^\lambda}
 \Pr[\Eval(p,x)\ne P_a(x)]
 \leq\eta(\lambda).
\]
The maximum over evaluation inputs is taken before averaging over programs.

\emph{Security (single-program virtual black-box).} For every uniform QPT adversary $\mathcal A$ with a one-bit output, there exist a uniform QPT simulator $\mathcal S$ and a negligible function $\varepsilon$ such that, for every $\lambda$ and every point $a\in\bits^\lambda$,
\[
 \left|\Pr[\mathcal A(
                  \Obf(1^\lambda,a))=1]
       -\Pr[\mathcal S^{O_a}(1^\lambda)=1]\right|
 \leq\varepsilon(\lambda).
\]
The same simulator works for all points and accesses the hidden point only through quantum queries to $O_a$. 
\end{definition}

\subsection{Quantum indistinguishability obfuscation}
\label{sec:qio-definition}

We consider quantum indistinguishability obfuscation (\textsf{qiO}) for quantum circuits that have an efficient classical encoding and with classical input/outputs.

\begin{definition}[Quantum Indistinguishability Obfuscation]\label{def:semantic-qio}
A \textsf{qiO} obfuscator consists of QPT algorithms $(O,E)$ with the following syntax:
\begin{itemize}
\item $O(1^\lambda,Q)$ takes a classical description of a quantum circuit $Q$ and returns a classical string $c$.
\item {$E(c,x)$ takes the obfuscated program $c$ and a classical input $x$ of the corresponding circuit's input length, and outputs a classical string.}
\end{itemize}

{A family $Q$ is an admissible implementation of a total classical function $f$ if}
\[
\sup_x\Pr[Q(x)\ne f(x)]\leq\epsilon(\lambda)
\]
for a negligible function $\epsilon$. The obfuscator has the following properties.
\begin{enumerate}
\item \textbf{Correctness.} For every admissible family, the error $\Pr_{c\leftarrow O(1^\lambda,Q),E}[E(c,x)\ne f(x)]$ is negligible, uniformly over all inputs $x$.
\item \textbf{Security.} If two equal-size circuit families with the same input and output lengths admissibly implement the same total classical function, their obfuscations are computationally indistinguishable, even in the presence of arbitrary polynomial-length classical auxiliary information.
\end{enumerate}
\end{definition}

\begin{remark}[Simultaneous correctness]\label{rem:qio-simultaneous}
Any obfuscator satisfying Definition~\ref{def:semantic-qio} can be amplified to satisfy the following stronger correctness guarantee: for every admissible family $Q$ implementing $f$ and all sufficiently large $\lambda$,
\[
\Pr_{c\leftarrow O(Q)}\!\left[
 \max_x\Pr_E[E(c,x)\ne f(x)]>2^{-\lambda}
\right]\leq2^{-\lambda}.
\]
To obtain this guarantee, start with an obfuscator $(O_0,E_0)$ satisfying Definition~\ref{def:semantic-qio}, and compile a program with $n$ input bits and $\ell\geq1$ output bits independently an odd number $L=O(n+\lambda+\log(\ell+1))$ of times. Regard the tuple as one classical encoding, and evaluate by taking the majority separately at each output bit, using independent workspace for each evaluation. For each fixed input and output bit, the joint compilation-and-evaluation error is at most $1/10$ for all sufficiently large $\lambda$. A Chernoff bound makes the error of its majority at most $e^{-c_0L}$ for an absolute constant $c_0>0$. Taking a union bound over output bits, applying Markov's inequality over the compiled tuple, and then taking a union bound over inputs gives
\[
\Pr_{\mathbf c}\!\left[
 \max_x\Pr_E[E(\mathbf c,x)\ne f(x)]>2^{-\lambda}
\right]
\leq 2^{n+\lambda}\ell e^{-c_0L}\leq2^{-\lambda},
\]
where the last inequality holds for a sufficiently large constant in $L$. A polynomial hybrid preserves obfuscation security, and the new algorithms remain polynomial time on all syntactically valid inputs. Thus we may assume the stronger correctness without loss of generality.
\end{remark}

\section{Relations among \textsf{qOWF} definitions and related primitives}
\label{sec:def-three-ev}

We study the relationships among the three \textsf{qOWF} definitions and the related primitives. In summary, we show that $\textsf{qOWF}_1$s and $\textsf{qOWF}_2$s both imply $\textsf{qOWF}_3$, \textsf{qPRG} imply $\textsf{qOWF}_2$, and $\textsf{qOWF}_3$ is equivalent to \textsf{EV-OWPuzz}. We also explain the implications from encryption and point-function obfuscation to puzzles.

\subsection{Relation among \textsf{qOWF} notions.} 

We show the following straightforward result. 

\begin{proposition}
        Every $\textsf{qOWF}_1$ or $\textsf{qOWF}_2$ is also a $\textsf{qOWF}_3$. 
\end{proposition}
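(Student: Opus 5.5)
The plan is to check the two conditions of Definition~\ref{def:three} directly, keeping the same evaluator $E$ and, in each case, choosing the natural sampler. Definition~\ref{def:three} asks for three things. First, a QPT sampler $S$. Second, one-wayness with respect to $S$. Third, the average determinism bound $\mathbb E_{X\leftarrow S(1^\lambda)}e(X)\leq\eta(\lambda)$. The function $f$ and the error $e(x)$ depend only on $E$, so they do not change when we pass between notions.

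\emph{Case $\textsf{qOWF}_1$.} Take $S$ to be the QPT sampler from Definition~\ref{def:one}. The one-wayness experiment of Definition~\ref{def:three} is literally the same experiment, with the same $S$, $E$, and quantification over uniform QPT inverters $I$. So one-wayness carries over unchanged. For determinism, every-input determinism gives $e(x)\leq\max_{x'}e(x')\leq\eta(\lambda)$ pointwise. Averaging over any distribution of $X$, in particular the output distribution of $S(1^\lambda)$, then yields $\mathbb E_{X\leftarrow S(1^\lambda)}e(X)\leq\eta(\lambda)$.

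\emph{Case $\textsf{qOWF}_2$.} Take $S=U_{m(\lambda)}$, which is a QPT sampler since it just outputs $m(\lambda)$ uniform bits. One-wayness with respect to $U_{m(\lambda)}$ is exactly the requirement in Definition~\ref{def:two}. The average determinism bound in Definition~\ref{def:three}, with $S=U_{m(\lambda)}$, is verbatim the condition $\mathbb E_{X\leftarrow U_{m(\lambda)}}e(X)\leq\eta(\lambda)$ of Definition~\ref{def:two}.

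There is no genuine obstacle; the argument is definitional unwinding. The only point to confirm is that the one-wayness experiment and the definition of $f$ (most likely output, ties broken lexicographically) are the same across the three definitions, which they are by construction. We should also note that the uniform sampler is indeed an admissible QPT sampler.
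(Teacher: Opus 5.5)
Your proposal is correct and matches the paper's own proof: for $\textsf{qOWF}_1$ you keep the same sampler and use that every-input determinism implies average determinism, and for $\textsf{qOWF}_2$ you observe that $S=U_{m(\lambda)}$ is one of the admissible samplers in Definition~\ref{def:three}.
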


\begin{proof}
$\textsf{qOWF}_1$ has the same security requirement as  $\textsf{qOWF}_3$ and has a stricter correctness requirement.

Furthermore, a uniform sampler is one of the samplers permitted for $\textsf{qOWF}_3$, giving the implication from $\textsf{qOWF}_2$.
\end{proof} 

\subsection{\textsf{qPRG} and \texorpdfstring{$\textsf{qOWF}_2$}{\textsf{OWF}-2}.}

\begin{proposition}
\label{lem:qprg-to-owf}
Suppose a QPT generator $G$ takes a uniform $r(\lambda)$-bit seed and has classical output length $s(\lambda)\geq r(\lambda)+1$. Suppose its output is computationally indistinguishable from $U_{s(\lambda)}$, and its determinism error averaged over uniform seeds is negligible. Then a $\textsf{qOWF}_2$ exists.
\end{proposition}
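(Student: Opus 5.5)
The plan is to take the generator itself as the evaluator, $E=G$, with input length $m(\lambda)=r(\lambda)$ and the uniform sampler $U_{r(\lambda)}$, as \Cref{def:two} requires. Determinism is immediate. Let $f(x)$ be the most likely output of $G(x)$. If $g_\lambda(x)\neq f(x)$ then $\Pr[G(x)\neq g_\lambda(x)]\geq 1/2$, so $e(x)\leq 2\Pr[G(x)\neq g_\lambda(x)]$ holds pointwise. Taking expectations over uniform $x$ gives $\mathbb E_X e(X)\leq 2\eta(\lambda)$, which is negligible. Malformed inputs can be mapped to $\bot$ without affecting anything.

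For one-wayness we use the standard reduction from inversion to distinguishing. Suppose a QPT inverter $I$ succeeds with probability $\epsilon(\lambda)$ that is non-negligible. That is, for $X\leftarrow U_r$, $Y\leftarrow G(X)$ and $X'\leftarrow I(Y)$, we have $G(X')=Y$ with probability $\epsilon$. The distinguisher $D$, on input $z$, runs $X'\leftarrow I(z)$, then runs $G(X')$ once, and outputs $1$ iff the result equals $z$.

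On pseudorandom input $z=G(X)$, the event tested by $D$ is distributed exactly as in the inversion experiment, because $D$ uses a fresh evaluation of $G(X')$. Hence $\Pr[D=1]=\epsilon$.

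On uniform input $z\leftarrow U_s$, we bound $\Pr[D(z)=1]$ by fixing the inverter's output $x'$ and summing. This gives
\[
\Pr[D(z)=1]=\sum_z 2^{-s}\sum_{x'}\Pr[I(z)=x']\Pr[G(x')=z]\leq 2^{-s}\sum_{x'}\sum_z\Pr[G(x')=z]=2^{r-s}\leq 1/2 .
\]

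This bound alone is too weak, since $\epsilon$ may be smaller than $1/2$. The main obstacle is therefore the need to shrink the uniform-side acceptance probability. Two fixes come to mind. The first is to repeat $G(X')$ many times and accept only on majority agreement. The second, simpler one is to first stretch the generator. Composing or concatenating copies, e.g.\ $G'(x_1,\dots,x_k)=(G(x_1),\dots,G(x_k))$, raises the stretch to at least $k$. The resulting $G'$ is still pseudorandom by a hybrid argument, and its average determinism error is at most $k\eta$ by a union bound. Taking $k=\lambda$ then gives $2^{r'-s'}\leq 2^{-\lambda}$.

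I would take the second route: define the $\textsf{qOWF}_2$ as $E=G'$ with $k=\lambda$. The same computation then gives distinguishing advantage at least $\epsilon-2^{-\lambda}$, which is non-negligible and contradicts pseudorandomness. Hence $G'$ is a $\textsf{qOWF}_2$.
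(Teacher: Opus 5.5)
Your proposal is correct and follows the paper's proof: concatenate $\lambda$ independent copies $(G(x_1),\ldots,G(x_\lambda))$, bound the average determinism error by a union bound, prove pseudorandomness by a block-by-block hybrid, and turn an inverter into a distinguisher that checks a fresh evaluation, which accepts a uniform string with probability at most $2^{\lambda r-\lambda s}\leq 2^{-\lambda}$. Your pointwise bound $e(x)\leq 2\Pr[G(x)\neq g_\lambda(x)]$ only spells out the step from $g_\lambda$-based correctness to the most-likely-output error in Definition~\ref{def:two}, which the paper leaves implicit.
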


\begin{proof}
Take $t=\lambda$ independent seed blocks and define $H(x_1,\ldots,x_t)=(G(x_1),\ldots,G(x_t))$ using fresh invocations. The average error is at most $t$ times that of $G$, hence negligible. A polynomial hybrid replacing one block at a time shows that $H(U_{tr})$ is indistinguishable from $U_{ts}$.

If $I$ inverts $H$, distinguish a string $y$ by returning one exactly when a fresh evaluation of $H(I(y))$ equals $y$. On $H(U_{tr})$, this is exactly its inversion probability. On a uniform string it is at most
\[
 2^{-ts}\sum_{y,x}\Pr[I(y)=x]\Pr[H(x)=y]
 \leq 2^{-ts}\sum_x\sum_y\Pr[H(x)=y]
 =2^{tr-ts}\leq2^{-\lambda}.
\]
Verification is independent of the inverter conditional on its classical output. A nonnegligible inversion probability would therefore distinguish the two output distributions.
\end{proof}

The negligible average-correctness convention is used in the \textsf{qPRG} definitions of Bouaziz--Ermann et al.~\cite[Definition~3.8]{BEHMV26} and Barhoush~\cite[Section~3.2, Definition~2]{Bar26}. Consequently, Theorem~\ref{thm:separation} rules out these \textsf{qPRG}s as well. The every-input convention in Barhoush, Nishimaki, and Yamakawa~\cite[Section~2.1]{BNY25} is stronger. 

\subsection{\texorpdfstring{$\textsf{qOWF}_3$}{\textsf{qOWF}-3} and \textsf{EV-OWPuzz}.}

$\textsf{qOWF}_3$ captures exactly the existence of efficiently verifiable one-way puzzles. 

\begin{proposition}[$\textsf{qOWF}_3$ and efficiently verifiable puzzles]\label{prop:def-three-ev}
\textsf{EV-OWPuzz}s exist if and only if $\textsf{qOWF}_3$s exist. 
\end{proposition}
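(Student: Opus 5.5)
For the direction from $\textsf{qOWF}_3$ to \textsf{EV-OWPuzz}, let $(S,E)$ be a $\textsf{qOWF}_3$. Define the sampler $\mathsf{Samp}(1^\lambda)$ to draw $X\leftarrow S(1^\lambda)$ and $Y\leftarrow E(X)$, and output $(p,a)=(Y,X)$. The verifier $V(y,x)$ runs a fresh evaluation $E(x)$ and accepts iff the output equals $y$.

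Correctness of the puzzle amounts to the following bound. The expected rejection probability is
\[
\mathbb E_{X}\sum_y\Pr[E(X)=y]\bigl(1-\Pr[E(X)=y]\bigr)\leq\mathbb E_X\,2e(X),
\]
and this is negligible by average determinism under $S$. Here we use that for any distribution, $\sum_y q_y(1-q_y)\leq 2(1-q_{\max})$.

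One-wayness of the puzzle is literally the one-wayness experiment of Definition~\ref{def:three}. An adversary $A$ outputs $X'$ given $Y$, and $V$ accepts iff a fresh $E(X')=Y$. So the success probabilities coincide, and this direction is immediate.

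For the converse, let $(\mathsf{Samp},V)$ be an \textsf{EV-OWPuzz}. The obstacle is that the verifier's acceptance probability $v_p(a)$ may lie strictly between $0$ and $1$, so a naive evaluator $E(p,a)=(p,V(p,a))$ is not deterministic. Fortunately determinism is only needed on average over the sampler, and on honest samples $v_p(a)$ is close to $1$. I would therefore use the following construction:
\begin{itemize}
\item the sampler is $S=\mathsf{Samp}$, viewing $x=(p,a)$ as the input;
\item the evaluator is $E(p,a)=p$ if $V(p,a)=1$ and $\bot$ otherwise.
\end{itemize}
Optionally, $E$ can first amplify by running $V$ $\lambda$ times and taking a majority; the plain version already suffices.

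Average determinism follows because on honest inputs, $\Pr[E(p,a)\ne p]=1-v_p(a)$. The most likely output is $p$ whenever $v_p(a)\geq1/2$, and in general $e(x)\leq 1-v_p(a)$. Hence $\mathbb E_{S}e(X)\leq\mathbb E[1-v_p(a)]$, which is negligible.

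For one-wayness, suppose an inverter $I$, given $Y\leftarrow E(X)$, finds $X'=(p',a')$ with $E(X')=Y$ with non-negligible probability. Except with negligible probability $Y=p$, and $E(X')=p$ forces $p'=p$ and $V(p,a')=1$. The case $Y=\bot$ occurs with negligible probability and can be discarded; this is also why outputs $\bot$ must be treated carefully, since an inverter receiving $\bot$ trivially succeeds. The puzzle adversary is then $A(p)$, which runs $I(p)$ and outputs the second component. Conditioned on $I$'s classical output, the fresh verification in the $\textsf{qOWF}_3$ experiment has the same distribution as the puzzle verification, so $A$ succeeds with essentially the same probability, contradicting puzzle security.

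The main subtlety I expect is this bookkeeping: the $\bot$ outputs and the independence of the fresh evaluation of $E$ from the inverter's internal state. Both are handled by noting that all the quantities involved are classical.
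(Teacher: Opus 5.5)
Your proof is correct and follows the paper's argument in both directions. The forward direction uses the same puzzle $(E(X),X)$ with a fresh re-evaluation of $E$ as the verifier, and the converse uses the same evaluator that outputs $p$ on acceptance and $\bot$ otherwise, with $\bot$ challenges absorbed into the correctness error. The paper instead runs $V$ $\lambda$ times and requires all runs to accept, which costs a factor $\lambda$ in the error and which you rightly note is unnecessary; it also handles a most-likely-output variant of the inversion check (using a $3\lambda$-of-$4\lambda$ threshold verifier), which the fresh-evaluation one-wayness defined in the paper does not require.
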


\begin{proof}
Assume there exists a $\textsf{qOWF}_3$ $(S,E)$ and let $f(x)$ be its most likely output. Write $\eta=\Pr_{x\leftarrow S}[E(x)\ne f(x)]$. We will show the existence of a \textsf{EV-OWPuzz}.

We let the sampling algorithm of the puzzle  sample $x\leftarrow S$ and $y\leftarrow E(x)$, and output $(y,x)$ as the puzzle and answer. On a proposed answer $x'$, the verifier of the puzzle runs $E(x')$ and accepts exactly when the result is $y$. Two independent evaluations at the same honestly sampled input disagree with probability at most $2\eta$, by comparing both with $f(x)$. The puzzle therefore has negligible correctness error. {For fresh-evaluation inversion, its solving probability is exactly the inversion probability. For most-likely-output inversion, instead accept if at least $3\lambda$ of $4\lambda$ independent evaluations of $E(x')$ equal $y$. Honest verification has error at most $(4\lambda+1)\eta$. If $f(x')\ne y$, each evaluation equals $y$ with probability at most $1/2$, so Hoeffding's inequality bounds acceptance by $e^{-\lambda/2}$. The challenge distributions $E(S)$ and $f(S)$ differ by at most $\eta$ in total variation. Thus, a solver's success is at most its corresponding most-likely-output inversion probability plus $\eta+e^{-\lambda/2}$, which is negligible.}

For the converse, take a puzzle $(\mathsf{Samp},V)$ with correctness error $\eta$. We build a $\textsf{qOWF}_3$ $(S,E)$ as follows. Let $S$ sample $(p,a)\leftarrow\mathsf{Samp}(1^\lambda)$ and output $(p,a)$. The evaluator $E(p,a)$ runs $t=\lambda$  verifications $V(p,a)$ and outputs $p$ if every verification accepts; otherwise it outputs $\bot$. Put $u(p,a)=v_p(a)^t$, where $v_p(a)=\Pr[V(p,a)=1]$. The union bound gives
\[
\mathbb{E}_{(p,a)\leftarrow\mathsf{Samp}(1^\lambda)}[1-u(p,a)]\leq t\eta.
\]
The most-likely-output function satisfies $f(p,a)=p$ when $u(p,a)>1/2$, and $f(p,a)=\bot$ when $u(p,a)<1/2$, with ties broken lexicographically. It is easy to check that $$\mathbb{E}_{x\gets S(1^\lambda)}\Pr[E(x)\ne f(x)]\leq t\eta$$ 
as required for $\textsf{qOWF}_3$.

Given an inverter against this $\textsf{qOWF}_3$, solve a puzzle $p$ by invoking the inverter on $p$, parsing its proposed input as $(p',a')$, and returning $a'$ only when $p'=p$. Let $\sigma_A$ denote the resulting solver's verification probability. In the experiment, a challenge equals $\bot$ with probability at most $t\eta$. For a success-tag challenge, the  inversion check accepts with probability $u(p,a')\leq v_p(a')$. Hence the inversion probability is at most $t\eta+\sigma_A$.

When inversion is checked using the most likely output, the challenge equals $\bot$ only when $u(p,a)\leq 1/2$, an event of probability at most $2t\eta$. A preimage under the most-likely-output function of $p$ has $u(p,a')\geq1/2$, and therefore $v_p(a')\geq1/2$. Consequently, the inversion probability with this check is at most $2t\eta+2\sigma_A$. Both bounds are negligible by puzzle security. This proves the converse.
\end{proof}

\subsection{Point-function obfuscation implies \textsf{EV-OWPuzz}.} \textsf{qPKE} with classical communication implies \textsf{EV-OWPuzz}: use \textsf{qPKE} as secret-key encryption and apply Chung, Goldin, and Gray~\cite[Section~2.5, Theorem~8]{CGG24}. Point-function obfuscation also suffices for \textsf{EV-OWPuzz}, as the following lemma shows.

\begin{lemma}\label{lem:vbb-to-ev}
Point-function obfuscation implies \textsf{EV-OWPuzz}.
\end{lemma}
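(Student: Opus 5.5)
The plan is to use the obfuscated program as the puzzle and the hidden point as the answer. Concretely, $\mathsf{Samp}(1^\lambda)$ samples a uniform $a\leftarrow\bits^\lambda$, computes $p\leftarrow\Obf(1^\lambda,a)$, and outputs the pair $(p,a)$. The verifier $V(p,a')$ runs $\Eval(p,a')$ and accepts exactly when the output is $1$. Correctness follows directly from the correctness clause of Definition~\ref{vbb:def}. Averaged over $p\leftarrow\Obf(1^\lambda,a)$, the error $\Pr[\Eval(p,a)\ne 1]$ is at most $\eta(\lambda)$ for every $a$, so $\mathbb E_{(p,a)}[1-v_p(a)]\leq\eta(\lambda)$.

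For one-wayness, suppose a QPT solver $A$ succeeds with probability $\epsilon(\lambda)$. From it I would build a one-bit adversary $\mathcal B(p)$: run $A(p)$ to get $a'$, then output $\Eval(p,a')$. By construction, $\Pr_{a}[\mathcal B(\Obf(1^\lambda,a))=1]=\epsilon$. VBB security then gives a single simulator $\mathcal S$ such that $\bigl|\Pr[\mathcal B(\Obf(1^\lambda,a))=1]-\Pr[\mathcal S^{O_a}(1^\lambda)=1]\bigr|\leq\varepsilon(\lambda)$ for every $a$. Averaging over a uniform $a$, it suffices to show that $\Pr_a[\mathcal S^{O_a}=1]$ is negligible for a polynomial-query $\mathcal S$.

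\textbf{The main obstacle} is this last step, because the simulator's output bit is not a guess of $a$. Against a uniform hidden point, a $q$-query algorithm cannot tell $O_a$ from the all-zero oracle $O_\emptyset$ except with probability $O(q^2/2^\lambda)$. This is the standard hybrid/BBBV bound: the total query magnitude on $a$, averaged over $a$, is $q/2^\lambda$. So $\Pr_a[\mathcal S^{O_a}=1]\leq\Pr[\mathcal S^{O_\emptyset}=1]+O(q/2^{\lambda/2})$.

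That inequality alone does not finish the argument, since $\mathcal S^{O_\emptyset}$ might output $1$ often. To handle this, I would compare with a second adversary. Let $\mathcal B'$ use the same solver but evaluate at a fresh random point $r$ instead of $a'$, outputting $\Eval(p,r)$. Then $\Pr[\mathcal B'=1]\leq 2^{-\lambda}+\eta$, because correctness holds for all $x$ inside the average over $p$. Its simulator cannot directly be compared with $\mathcal S$, so this comparison does not close the gap.

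The fix I would commit to is to make the target event \emph{verifiable through the oracle}. Choose $\mathcal B$ to output $1$ iff $\Eval(p,a')=1$, and note that in the simulated world $\mathcal S^{O_\emptyset}$ is a fixed algorithm whose output does not depend on $a$. Then take a two-point comparison. For independent uniform $a_1,a_2$, the VBB inequality and the hybrid bound give $\Pr[\mathcal B(\Obf(a_1))=1]\approx\Pr[\mathcal S^{O_\emptyset}=1]\approx\Pr[\mathcal B''(\Obf(a_1))=1]$. Here $\mathcal B''$ runs $A$ and outputs $1$ iff $\Eval(p,a')=1$ and $a'=a_2$ for an independently hardcoded $a_2$. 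Because the simulator guarantee is uniform over points but adversary-specific, I expect to instead use the cleaner route of applying VBB to the adversary $\mathcal B$ and bounding $\Pr[\mathcal S^{O_\emptyset}=1]$. This is done by running $\mathcal S$ with $O_\emptyset$ against a program $p$ obfuscating a point that $\mathcal S$ never sees. If this two-point step resists, the fallback is to use the known fact that VBB point obfuscation with a uniform point yields an average-case hard-to-find point, which is exactly \textsf{EV-OWPuzz} security.
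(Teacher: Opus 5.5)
The parts you carry out are fine: the puzzle $p\leftarrow\Obf(1^\lambda,a)$ with verifier $\Eval$, the correctness argument, the reduction to the simulator of $\mathcal B$, and the hybrid bound $\Pr_a[\mathcal S^{O_a}=1]\le\Pr[\mathcal S^{O_\emptyset}=1]+O(q2^{-\lambda/2})$. The gap is that the proof stops where the real work begins: you never bound $\Pr[\mathcal S^{O_\emptyset}=1]$, and none of your proposed routes does.

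\begin{itemize}
\item \textbf{The direct route is circular.} For your $\mathcal B$, averaging the VBB guarantee over uniform $a$ and applying your hybrid bound shows that $\Pr[\mathcal S^{O_\emptyset}=1]$ equals the solver's success probability up to negligible terms.
\item \textbf{The ``program $p$'' step is not defined.} ``Running $\mathcal S$ with $O_\emptyset$ against a program $p$'' does not correspond to anything in Definition~\ref{vbb:def}, because the simulator receives only $1^\lambda$ and oracle access.
\item \textbf{The two-point chain is unjustified.} The chain $\Pr[\mathcal B(\Obf(a_1))=1]\approx\Pr[\mathcal S^{O_\emptyset}=1]\approx\Pr[\mathcal B''(\Obf(a_1))=1]$ fails because $\mathcal B''$ is a different adversary with its own, unrelated simulator. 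Hardcoding $a_2$ also makes it non-uniform, and you would need to sum over $2^\lambda$ values of $a_2$.
\item \textbf{The fallback assumes the lemma.}
\end{itemize}

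The missing idea is to use adversaries whose acceptance is provably negligible on a large, known set of hidden points. The simulator cannot tell where the point lies, so that negligibility transfers to the remaining points. The paper does this by splitting on the first bit of the solver's output.

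For $b\in\bits$, let $\mathcal B_b$ run $a'\leftarrow A(p)$ and accept iff $a'_1=b$ and $\Eval(p,a')=1$. Let $H_c$ be the set of points with first bit $c$.

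\begin{itemize}
\item \textbf{Negligible on $H_{1-b}$.} For $a\in H_{1-b}$, acceptance requires $\Eval(p,a')=1$ with $a'\ne a$. Definition~\ref{vbb:def} takes the maximum over $x$ before averaging over $p$, so this probability is at most $\eta$ even though $a'$ depends on $p$.
\item \textbf{The simulator cannot tell the halves apart.} Let $\mathcal S_b$ be the simulator for $\mathcal B_b$. Your hybrid bound, applied to a uniform point in either half, puts $\mathbb E_{a\in H_c}\Pr[\mathcal S_b^{O_a}=1]$ within $O(q_b\,2^{-(\lambda-1)/2})$ of $\Pr[\mathcal S_b^{O_\emptyset}=1]$ for both $c$.
\item \textbf{Transfer via VBB.} Averaging the pointwise VBB guarantee over each half then gives $\mathbb E_{a\in H_b}\Pr[\mathcal B_b(\Obf(1^\lambda,a))=1]\le\eta+2\varepsilon_b+O(q_b\,2^{-(\lambda-1)/2})$.
\end{itemize}

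The solver's success probability is exactly $\sum_{b}\Pr_a[\mathcal B_b(\Obf(1^\lambda,a))=1]$, so it is negligible. Only two uniform adversaries are needed, which avoids the uniformity and union-bound problems of your $a_2$ variant.
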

\begin{proof}
Assume there exists a point-function obfuscation $(\Obf,\Eval)$. Write $P_a(x)=\mathbf{1}\{x=a\}$. Choose a uniform point $a\in\bits^\lambda$, with $\lambda\geq1$, and sample $p\leftarrow\Obf(1^\lambda,a)$. We build a \textsf{EV-OWPuzz} as follows. The puzzle is $p$, its answer is $a$, and verification runs $\Eval(p,a')$. Correctness follows from the correctness of obfuscation. We now show security. 

Fix a proposed puzzle solver $B$. We use the solver to distinguish whether the first bit of a solution to the puzzle is $b\in \{0,1\}$. For each $b\in\{0,1\}$, consider the one-bit adversary $A_b$ that runs $a'\leftarrow B(p)$ and accepts exactly when $a'$ has its first bit $b$, and verification accepts. Definition~\ref{vbb:def} bounds the verification error even when the input is chosen after seeing $p$, because it takes the maximum over inputs before averaging over programs.

Let $S_b$ be the simulator for $A_b$, which works for every hidden point. For a point chosen uniformly from either first-bit half of $\bits^\lambda$, replacing the simulator's point oracle by the all-zero oracle changes its average acceptance probability by at most $O(q_b2^{-(\lambda-1)/2})$, where $q_b$ bounds its queries. To see this, replace queries one at a time, using the query states with the all-zero oracle: their expected squared weight on the hidden point is at most $2^{-(\lambda-1)}$. This is the standard quantum query comparison of~\cite{BBBV97}. The simulator therefore has negligibly different average acceptance probabilities on the two halves. By the security of point-function obfuscation, the same is true of $A_b$. Its acceptance is negligible on the half with first bit $1-b$, so it is also negligible on the half with first bit $b$. Summing over $b=0,1$ gives exactly the solver's verification success, which is therefore negligible. 
\end{proof}

\section{Separating \textsf{qPKE} and Point-function Obfuscation from \textsf{qOWF}s}
\label{chap:separations}
In this part, we present a quantum black-box separation between \textsf{qOWF}s and both \textsf{qPKE} and point-function obfuscation. More precisely, we will prove the following theorem:
\begin{theorem}[Oracle separations]
\label{thm:separation}
Each of \textsf{qPKE} and point-function obfuscation, with classical keys, ciphertexts, and obfuscated programs, is separated by a quantum unitary oracle from $\textsf{qOWF}_1$s and from $\textsf{qOWF}_2$s.
\end{theorem}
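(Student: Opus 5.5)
We will follow the strategy laid out in the technical overview: build a randomized oracle $(\mathcal O,\PSPACE)$, condition it step by step against every candidate evaluator, and then argue that a fixed oracle exists for which the scheme is correct and secure but no $\textsf{qOWF}_1$ or $\textsf{qOWF}_2$ does.

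\textbf{Base oracle and scheme.} At each length $n$ we sample two independent tables $A,B$. Each table assigns to every p-key $j\in\bits^n$ an s-key in $\bits^{8n}$ and two disjoint ciphertext sets of size $2^{2n}$ in $\bits^{12n}$. We also sample $\tau$ uniformly in $[0,1]$, written as $\tau=(i-1+t)/3$. From these we define the swap-style unitaries $O_{\mathrm{samp}},O_{\mathrm{enc}},O_{\mathrm{dec}}$, which interpolate between $\mathcal O(A)$ and $\mathcal O(B)$. The oracle also includes a $\PSPACE$ oracle that does not see the tables. The scheme is the one from the overview: publish $j$, keep $(j,s_j)$ as the secret key, encrypt by sampling from $O_{\mathrm{enc}}$, and decrypt with $O_{\mathrm{dec}}$. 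On the event that the ciphertext sets of the two tables are disjoint at every p-key, this scheme is perfectly correct. Security under the base distribution follows from hybrids: remove the challenge key pair from the sampling state, replace decryption at the challenge p-key by an unbiased bit, and re-randomize the challenge ciphertext, each step bounded by BBBV-style query bounds. For point obfuscation, we analogously encode a point $a$ as a key whose s-key plays the role of $a$, with evaluation going through $O_{\mathrm{dec}}$, and prove VBB security by the same hybrids against a simulator holding $O_a$. I would verify that the oracle operations are real, unitary, self-inverse, and Lipschitz in $\tau$ in operator norm with some constant $v_0$.

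\textbf{Barrier theorem and case split.} The key tool is the general barrier statement (Theorem~\ref{thm:general-barrier-informal}), which I would prove by constructing stationary random paths between independent samples $O_1,O_2\sim\mu$ and bounding the expected time the path spends outside $\bigcup S_i$. We enumerate pairs of (evaluator, notion) and handle them at sparse security parameters $\lambda_1<\lambda_2<\cdots$, with the data at shorter lengths already fixed.

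For $\textsf{qOWF}_1$, let $S_f$ be the set of oracles under which $E$ outputs $f(x)$ with probability at least $5/6$ on every input $x$. These sets are $1/(3q)$-separated, so if the determinism failure probability is below $t$, one set $S_g$ has weight $1-O(qt/\delta)$. For $\textsf{qOWF}_2$, we apply the theorem per input with $S_{x,y}$ and then average over $x$.

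In the deterministic case, an inverter uses $\PSPACE$ to compute the averaged function $g$ over the base distribution and invert it. If the determinism failure probability exceeds $t$, we instead condition on the event that determinism fails.

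\textbf{Security after conditioning; main obstacle.} The main obstacle is preserving \textsf{IND-CPA} and VBB security after this conditioning. The plan is to bound the KL divergence of the conditioned distribution from the base distribution, restricted to lengths at most $m$, by $\sum\log(1/t_k)\le 2m$. Using Lemma~\ref{lem:kl-product}, Lemma~\ref{lem:kl-chain-rule}, and Pinsker, we resample a hidden random subset of p-keys (exponentially large but a negligible fraction) that contains the challenge key; this has negligible effect on the adversary. After that, the challenge-key data are fresh and independent, so the base security argument applies.

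Finally, we choose thresholds $t_k$ so that the failure probabilities $O(q_kt_k)$ are summable, and use Borel--Cantelli to conclude that almost surely every evaluator either fails determinism infinitely often or is inverted infinitely often, while security holds. Since the disjointness event has positive probability, a fixed oracle with all these properties exists.
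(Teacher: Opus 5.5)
Your plan for \textsf{qPKE}, the path-based barrier theorem, the case split, and the relative-entropy and Borel--Cantelli bookkeeping matches the paper. The point-function obfuscation half, however, fails as sketched.

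Suppose the point is carried by the s-key: for example, you publish $(j,c,s\oplus a)$ and evaluate at $x$ by decrypting $c$ under s-key $x\oplus s\oplus a$. Then correctness breaks for exactly the reason this oracle defeats \textsf{qOWF}s. Each p-key has two s-keys $s_j^A,s_j^B$, and for $\tau\in[1/3,2/3]$ both receive the angle $\gamma_j(c)$. So the other s-key $s'$ decrypts an honest ciphertext with certainty, and the input $x'=s'\oplus s\oplus a\neq a$ is accepted even after $\lambda$ repetitions. This violates the max-over-inputs correctness of Definition~\ref{vbb:def}. Since $\tau$ is one parameter shared by all rows at a length, the failure occurs with constant probability at every length under $\mathcal P$ and cannot be averaged away.

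You need a design in which no input reaches the other table's s-key at the same p-key. The paper hides $a$ in the p-key suffix instead: $p=(j_1,d,s,c)$ with $d=j_2\oplus a$. An input $x\neq a$ then leads to the p-key $(j_1,x\oplus d)\neq j$. There the published $s$ equals neither s-key, since all $2C$ s-keys in a prefix group are distinct with high probability, so the angle is $\pi/4$ and acceptance is $2^{-\lambda}$.

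Because $s$ is then public, your ``same hybrids'' are unavailable: defaulting decryption at the challenge p-key would destroy the program's functionality. The paper's simulator instead uses $O_a$ to redirect encryption and decryption queries at $(j_1,u)$ with $u\oplus d=a$ to a freshly sampled p-key $(j_1,j_2')$. The proof then exchanges the rows at these two p-keys. The hidden-subset resampling also runs over whole prefix groups $X_{j_1}$, since evaluation touches every p-key with prefix $j_1$.

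A second missing piece is how the inverter learns the oracle data it must average against. When $E$ is handled at $\lambda_i$, the blocks at shorter lengths are fixed, some of them conditioned at earlier steps, and $E$ may query them. The common function $g$ from the barrier theorem is defined relative to these fixed blocks. The \textsf{PSPACE} oracle is independent of them, so ``averaging over the base distribution'' computes the wrong function.

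The paper handles this as follows:
\begin{itemize}
\item It fixes only blocks below a cutoff $a(\lambda)=O(\log\lambda)$.
\item It chooses parameters so sparse that cutoffs at least double and every previously conditioned block lies below the current cutoff.
\item It learns these polynomial-dimension operations by Hadamard-test tomography.
\item It simulates the fresh blocks in polynomial space using $k$-wise independent functions and permutations and a discretized $\tau$.
\end{itemize}
The same logarithmic cutoff is what bounds each step's conditioning cost $\ln(1/\theta_i)$ by $a_i$, which is where your $2m$ bound comes from.

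Two smaller points. For $\textsf{qOWF}_2$, the inverter should return a near-uniform preimage under $g$, since $g$ may differ from $f$ on a small set of inputs. Also, positivity of the disjointness event must be argued under the conditioned distribution, not the base one. This can be done by noting that each conditioning step inflates probabilities by at most $\theta_i^{-1}\le e^{a_i}$.
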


We first prove the general barrier theorem in \Cref{sec:general-barrier}, and then in \Cref{sec:construction} construct the oracle supporting \textsf{qPKE} and point-function obfuscation, define a base distribution $\mathcal P$ on it, and provide the constructions of \textsf{qPKE} and point-function obfuscations in that oracle world, together with their correctness bounds. Section~\ref{sec:primitive-security} proves security of the two constructions in the oracle world, with respect to a general class of distributions close enough to $\mathcal P$. Section~\ref{sec:conditioned-oracle} then utilizes the generalized barrier theorem from \Cref{sec:general-barrier} defines a conditioning procedure on the base oracle $\mathcal P$ to construct a selected distribution $\mathcal Q$, such that no \textsf{qOWF}s exist with respect to $\mathcal Q$ with probability $1$, and the relative entropy of $\mathcal Q$ from the original distribution $\mathcal P$ is bounded, thus by the security guarantees of the \textsf{qPKE} and point-function obfuscation construction in Section~\ref{sec:primitive-security}, we also conclude that the same constructions are secure even in $\mathcal Q$ with probability $1$, which implies the existence of a fixed oracle with this separation, and hence completes the proof. 

\subsection{General barrier theorem}
\label{sec:general-barrier}

Let $(\mathcal X,d)$ be a metric space, let $\mathcal D$ be a measurable space, and let $\nu$ be a distribution on $\mathcal D$. Measurable sets are events to which probabilities are assigned; a map is measurable if the inputs sent into any measurable set also form a measurable set. No distance on $\mathcal D$ is required.

Let $V:\mathcal D\to\mathcal X$ assign a point $V(a)$ to each $a\in\mathcal D$. A fixed measurable map $F:\mathcal D\times\mathcal D\times[0,1]\to\mathcal X$ specifies how to interpolate from $V(a)$ to $V(b)$. It takes two inputs $a,b\in\mathcal D$ and a parameter $t\in[0,1]$, and returns the point $F(a,b,t)\in\mathcal X$, with $F(a,b,0)=V(a)$ and $F(a,b,1)=V(b)$. Once $a,b,t$ are fixed, the resulting point is fixed; randomness comes from sampling these inputs.
For subsets $S,T\subseteq\mathcal X$, write $d(S,T)=\inf_{x\in S,y\in T}d(x,y)$. For a collection of sets, we call the region outside their union the \emph{gap}.

\begin{theorem}[Barrier for distributions obtained by interpolation]
\label{thm:interpolated-barrier}
Let $A,B$ be independent samples from $\nu$, let $U$ be an independently sampled uniform random variable in $[0,1]$, and let $\mu$ be the distribution of $F(A,B,U)$ where $F$ is a map defined above. Suppose there exists $v_0>0$ such that $d(F(a,b,s),F(a,b,t))\leq v_0|s-t|$  for all $a,b\in\mathcal D$, $s,t\in[0,1]$, and suppose $S_1,\ldots,S_r\subseteq\mathcal X$ are measurable sets with $d(S_i,S_j)\geq\delta$ for $i\ne j$, for some $r\geq1$ and $\delta>0$. Then
\[
 \mu(\mathcal X\setminus\bigcup_{i=1}^r S_i)\geq\frac{\delta}{3v_0}\left(g^2-\sum_i p_i^2\right),
\]
where $p_i=\mu(S_i)$ and $g=\sum_{i=1}^r p_i$. 
\end{theorem}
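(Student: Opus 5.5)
The plan is to connect two independent $\mu$-samples by a random continuous path. Each piece of the path is built so that, when its parameter is drawn uniformly, the point it visits is $\mu$-distributed. The expected time the path spends in the gap is then controlled by $\varepsilon_{\mathrm{gap}}:=\mu(\mathcal X\setminus\bigcup_i S_i)$. On the other hand, whenever the two endpoints lie in different sets $S_i\neq S_j$, the Lipschitz bound and the $\delta$-separation force the path to spend at least $\delta/v_0$ time in the gap. Comparing the two gives the inequality.

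\emph{Construction.} Sample $A,B,C,D$ i.i.d.\ from $\nu$ and $U,U'$ i.i.d.\ uniform on $[0,1]$, all independent. Set $X=F(A,B,U)$ and $Y=F(C,D,U')$; these are independent and each has law $\mu$. The path $\gamma$ has three pieces.
\begin{enumerate}
\item It runs $t\mapsto F(A,B,t)$ for $t\in[U,1]$, ending at $F(A,B,1)=V(B)$.
\item It then runs $t\mapsto F(B,C,t)$ for $t\in[0,1]$, from $V(B)$ to $V(C)$.
\item Finally it runs $t\mapsto F(C,D,t)$ for $t\in[0,U']$, ending at $Y$.
\end{enumerate}
The pieces agree at the junctions because $F(a,b,0)=V(a)$ and $F(a,b,1)=V(b)$. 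Concatenating the parameter intervals therefore gives a continuous path of total length at most $3$ that is $v_0$-Lipschitz in the concatenated parameter.

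\emph{Upper bound on gap time.} Let $G=\mathcal X\setminus\bigcup_iS_i$ and let $L$ be the Lebesgue measure of the set of parameters at which $\gamma$ lies in $G$. Measurability of $F$ together with Fubini lets us compute $\mathbb E[L]$ piece by piece.
\begin{itemize}
\item For the middle piece, $B,C$ are i.i.d.\ from $\nu$, so $F(B,C,t)$ with $t$ uniform has law $\mu$. Its contribution is $\int_0^1\Pr[F(B,C,t)\in G]\,dt=\varepsilon_{\mathrm{gap}}$.
\item For the first piece, $U$ is independent of $(A,B)$, so the contribution is
\[
\int_0^1\Pr[U\le t]\,\Pr[F(A,B,t)\in G]\,dt\le\int_0^1\Pr[F(A,B,t)\in G]\,dt=\varepsilon_{\mathrm{gap}}.
\]
\item The third piece is handled in the same way and also contributes at most $\varepsilon_{\mathrm{gap}}$.
\end{itemize}
Hence $\mathbb E[L]\le3\varepsilon_{\mathrm{gap}}$.

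\emph{Lower bound on gap time.} Condition on $X\in S_i$ and $Y\in S_j$ with $i\neq j$. Define $s_1$ as the supremum of the parameters $s$ such that $\gamma(s)\in S_i$ and no parameter before $s$ lies in $S_k$ for any $k\neq i$. Then define $s_2$ as the first time after $s_1$ at which $\gamma$ enters some $S_k$ with $k\neq i$. On the open interval $(s_1,s_2)$ the path lies in $G$. Its two endpoints are limits of points from two different sets, which are at distance at least $\delta$ apart, so the Lipschitz bound gives $s_2-s_1\ge\delta/v_0$. Thus $L\ge\delta/v_0$ on this event. The endpoints are independent with law $\mu$, so the event has probability
\[
\sum_{i\neq j}p_ip_j=g^2-\sum_ip_i^2.
\]
Therefore $\mathbb E[L]\ge(\delta/v_0)\bigl(g^2-\sum_ip_i^2\bigr)$, and combining with the upper bound yields the stated inequality.

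\emph{Expected main obstacle.} The delicate step is the crossing argument. First, the sets need not be closed, so $s_1$ and $s_2$ must be defined through infima and suprema, and the $\delta$-bound has to be applied to limit points, which is legitimate because the separation holds for infimum distances. Second, the path may wander among several sets before reaching $S_j$; for this reason the argument uses the first transition between two distinct sets rather than the transition from $S_i$ to $S_j$ directly. A smaller point is the joint measurability of $(a,b,t)\mapsto\mathbf 1\{F(a,b,t)\in G\}$, which is needed for Fubini; it follows from the assumed measurability of $F$ and of the $S_i$.
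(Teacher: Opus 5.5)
Your proof is correct, and it reaches the bound by a somewhat different route than the paper.

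\textbf{How the paper proceeds.} The paper normalizes the path to unit time and makes it stationary. It uses five samples, a uniform start $U\in[0,1)$, and an independent uniform total progress $T\in[2,3)$ (Lemma~\ref{lem:stationary-paths}). With this choice every fixed-time position is exactly $\mu$-distributed, so the expected fraction of time spent in the gap \emph{equals} its $\mu$-probability. The factor $3$ then enters through the worst-case speed $Tv_0\le 3v_0$ in Proposition~\ref{prop:barrier}.

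\textbf{How your proof differs.} You keep speed $v_0$ and let the path length $(1-U)+1+U'$ vary. You then need only an inequality on the expected occupation time of the gap, which you obtain piece by piece via Fubini. No fixed-time stationarity is required, and your crossing step is the same as the paper's.

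\textbf{A sharper constant.} Your accounting is in fact slightly loose. Write $h(t)=\Pr[F(A,B,t)\in G]$, with $G$ your gap set. The two partial pieces contribute $\int_0^1 t\,h(t)\,dt$ and $\int_0^1(1-t)\,h(t)\,dt$, which sum to $\mu(G)$. Hence $\mathbb E[L]\le 2\mu(G)$, and your argument actually yields the constant $\delta/(2v_0)$.

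\textbf{What the paper's formulation buys.} It buys reusability. Definition~\ref{def:stationary-paths} also covers the Haar setting. More importantly for the application, independent stationary paths on a common time axis directly handle the finite products with independent interpolation parameters at several lengths, which Lemma~\ref{cor:whole-function} needs. There the change bound is $3\max_i v_{0,i}$, independent of the number of coordinates.

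Your variable-length paths do not combine as directly. Different coordinates would finish at different times, and at a common time the per-coordinate laws are no longer exactly $\mu_\ell$. A naive extension of your occupation bound therefore picks up a factor that can grow with the number of coordinates, and essentially requires the paper's random-$T$ construction. For the single-map statement as posed, however, your proof is complete.
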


For example, take $\mathcal D=\{0,1\}$, $\mathcal X=[0,1]$ with the usual distance, uniform $\nu$, $V(a)=a$, and $F(a,b,t)=(1-t)a+tb$. Then $\mu$ assigns probability $1/4$ to each endpoint and its remaining probability $1/2$ uniformly over the interval.

In the proof of our main separation result (\Cref{thm:separation}), we will use the following corollary of Theorem~\ref{thm:interpolated-barrier}. 
\begin{corollary}[A set of large probability]
\label{cor:barrier-large-set}
Under the hypotheses of Theorem~\ref{thm:interpolated-barrier}, let $\varepsilon_{\mathrm{bad}}=\mu(\mathcal X\setminus\bigcup_{i=1}^r S_i)<1$ be the probability that a sample lies outside every $S_i$. Then there is an index $j$ such that
\[
 \mu(S_j)\geq 1-\varepsilon_{\mathrm{bad}}-\frac{3v_0}{\delta}\frac{\varepsilon_{\mathrm{bad}}}{1-\varepsilon_{\mathrm{bad}}}.
\]
If $\delta/v_0$ is fixed and the probability outside all the sets is small, then one of the sets has probability close to one.
\end{corollary}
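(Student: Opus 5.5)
We derive Corollary~\ref{cor:barrier-large-set} directly from Theorem~\ref{thm:interpolated-barrier} by an algebraic manipulation. Set $p_i=\mu(S_i)$ and $g=\sum_i p_i$. The sets are pairwise disjoint, since $d(S_i,S_j)\geq\delta>0$. Hence $g=1-\varepsilon_{\mathrm{bad}}$, which is positive because $\varepsilon_{\mathrm{bad}}<1$. Let $M=\max_i p_i$ be attained at index $j$. The key elementary inequality is
\[
\sum_i p_i^2\leq M\sum_i p_i=Mg .
\]
Substituting this into the theorem gives
\[
\varepsilon_{\mathrm{bad}}\geq\frac{\delta}{3v_0}\,g\,(g-M).
\]

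Next we solve for $M$. Dividing by $g>0$ yields $g-M\leq\frac{3v_0}{\delta}\cdot\frac{\varepsilon_{\mathrm{bad}}}{g}$. Therefore
\[
M\geq g-\frac{3v_0}{\delta}\frac{\varepsilon_{\mathrm{bad}}}{1-\varepsilon_{\mathrm{bad}}}=1-\varepsilon_{\mathrm{bad}}-\frac{3v_0}{\delta}\frac{\varepsilon_{\mathrm{bad}}}{1-\varepsilon_{\mathrm{bad}}},
\]
which is exactly the claimed bound on $\mu(S_j)$.

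Two points need care, and neither is hard. First, the argument needs finitely many sets, or at least that the maximum is attained. With $r$ finite this holds as stated. If one wanted countably many sets, the same bound would hold with $\sup$ in place of $\max$, and a near-maximizer would lose an arbitrarily small slack, but the statement fixes finite $r$. Second, the disjointness step uses $\delta>0$, and the measurability of the sets is what makes $\mu(S_i)$ well defined. The remark at the end of the corollary is then immediate: with $\delta/v_0$ fixed, the bound tends to $1$ as $\varepsilon_{\mathrm{bad}}\to0$.
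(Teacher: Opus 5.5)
Your proposal is correct and matches the paper's proof. You bound $\sum_i p_i^2\le g\max_i p_i$, substitute into Theorem~\ref{thm:interpolated-barrier}, divide by $g=1-\varepsilon_{\mathrm{bad}}>0$, and take $j$ attaining the maximum; your explicit check that $\delta>0$ forces the $S_i$ to be pairwise disjoint (so that $\sum_i p_i=1-\varepsilon_{\mathrm{bad}}$) fills in a step the paper uses only implicitly.
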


\begin{proof}
Put $p_i=\mu(S_i)$, $g=1-\varepsilon_{\mathrm{bad}}>0$, and $p_{\max}=\max_i p_i$. Since $\sum_i p_i^2\leq gp_{\max}$, Theorem~\ref{thm:interpolated-barrier} gives $\varepsilon_{\mathrm{bad}}\geq(\delta/(3v_0))g(g-p_{\max})$. Rearranging gives $p_{\max}\geq g-(3v_0/\delta)\varepsilon_{\mathrm{bad}}/g$. Substitute $g=1-\varepsilon_{\mathrm{bad}}$ and choose $j$ with $p_j=p_{\max}$.
\end{proof}

To prove the theorem, we construct random paths whose endpoints are independent samples from $\mu$ and whose position at every fixed time also has distribution $\mu$. Proposition~\ref{prop:barrier} derives a barrier bound from these properties and a bound on how quickly the position changes. Lemma~\ref{lem:stationary-paths} constructs the required paths from the interpolation map $F$ in Theorem~\ref{thm:interpolated-barrier}. 

\subsubsection{Paths between independent samples}
\label{sec:barrier}

We now consider an arbitrary probability distribution $\mu$ on a metric space $(\mathcal X,d)$.

Imagine moving continuously from a point $x_{\rm start}$ to a point $x_{\rm end}$. Write $X_t$ for the position at $t\in[0,1]$, so that $X_0=x_{\rm start}$ and $X_1=x_{\rm end}$. We call $t$ the \emph{time parameter} of the path. It parametrizes this mathematical interpolation; it is not the running time of an algorithm. A \emph{random path} is an entire path chosen according to a probability distribution $\Pi$. Thus $\Pi$ is a distribution on paths, while $\mu$ is a distribution on points. Once a path is chosen, all its positions $X_t$ are fixed.

We require the endpoints to be independent samples from $\mu$, and every fixed intermediate point $X_t$ to have distribution $\mu$ as well. We also require movement to be bounded: there is a number $v>0$ such that an interval of parameter length $h$ changes the position by distance at most $vh$. We call $v$ a \emph{speed bound}, or equivalently a \emph{change bound}. The following definition makes these requirements precise without assuming that the path has a derivative.

\begin{definition}[Paths with unchanged distributions]
\label{def:stationary-paths}
The space $(\mathcal X,d)$ with probability distribution $\mu$ admits paths with change bound $v>0$ if there exists a probability distribution $\Pi$ on continuous paths $X:[0,1]\to\mathcal X$ with the following properties. We sample $X\sim\Pi$ and write $X_t=X(t)$.
\begin{enumerate}
\item \textbf{Independent endpoints.} Under $X\sim\Pi$, the initial point $X_0$ and final point $X_1$ are independent, and each has distribution $\mu$.
\item \textbf{Marginal intermediate distributions.}\label{it:marginal} For every fixed $t\in[0,1]$ and every measurable set $A\subseteq\mathcal X$,
\[
 \Pr_{X\sim\Pi}[X_t\in A]=\mu(A).
\]
\item \textbf{Bounded change.}\label{it:bounded-change} With probability one over $X\sim\Pi$,
\[
 d(X_s,X_t)\leq v|s-t|
 \qquad\text{for all }s,t\in[0,1].
\]
\end{enumerate}
We require the evaluation map $(X,t)\mapsto X_t$ to be jointly measurable, so that sampling $X\sim\Pi$ and an independent time gives well-defined probabilities and expectations. The first two requirements are exact conditions on $\Pi$; the third holds with $\Pi$-probability one.
\end{definition}

The second property is also called having the same \emph{marginal distribution} at every time. It concerns each predetermined time separately; it does not assert independence between intermediate times. Neighboring points on a continuous path are strongly related. The third property says, in particular, that traveling a distance of at least $\delta$ requires at least $\delta/v$ units of parameter time.

The first ingredient towards proving \Cref{thm:interpolated-barrier} is the following proposition that converts the existence of suitable random paths into a barrier bound. In particular, the proposition bounds the probability outside finitely many separated sets. Its many-set form will also bound how much probability can lie outside the largest of these sets.

\begin{proposition}[Barrier from paths]
\label{prop:barrier}
Suppose $(\mathcal X,d,\mu)$ admits paths with change bound $v$. Let $S_1,\ldots,S_r$ be finitely many measurable sets separated by distance at least $\delta>0$. Write
\[
 p_i=\mu(S_i),\qquad g=\sum_i p_i,\qquad
 B=\mathcal X\setminus\bigcup_i S_i,\qquad b=\mu(B)=1-g.
\]
Then
\[
 b\ge \frac{\delta}{v}\left(g^2-\sum_i p_i^2\right).
\]
\end{proposition}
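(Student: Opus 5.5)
We sample a random path $X\sim\Pi$ and count the time it spends in the gap $B$. By Fubini and the marginal property (item~\ref{it:marginal} of Definition~\ref{def:stationary-paths}),
\[
 \mathbb E_{X\sim\Pi}\bigl[\mathrm{Leb}\{t\in[0,1]:X_t\in B\}\bigr]=\int_0^1\Pr[X_t\in B]\,dt=\mu(B)=b .
\]
Joint measurability of $(X,t)\mapsto X_t$ justifies the exchange of expectation and integral. It therefore suffices to lower bound this expected occupation time by $\frac{\delta}{v}\bigl(g^2-\sum_i p_i^2\bigr)$.

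The lower bound comes from the endpoints. Consider the event $E$ that $X_0\in S_i$ and $X_1\in S_j$ for some $i\ne j$. By independence of the endpoints and since each has law $\mu$,
\[
 \Pr[E]=\sum_{i\ne j}p_ip_j=g^2-\sum_ip_i^2 .
\]
The key deterministic claim is that on $E$ (and on the probability-one event of bounded change), the path spends time at least $\delta/v$ in $B$.

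To prove the claim, define the last time the path lies in $S_i$ and the next entry into another set. Concretely, let
\[
 \sigma=\sup\{t:X_t\in S_i\}.
\]
The set of times $\{t\ge\sigma: X_t\in\bigcup_{k\ne i}S_k\}$ is nonempty because it contains $1$. Let $\theta$ be its infimum. For $t\in(\sigma,\theta)$, the point $X_t$ is not in $S_i$ (by the choice of $\sigma$) and not in any $S_k$ with $k\ne i$ (by the choice of $\theta$), so $X_t\in B$.

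It remains to show $\theta-\sigma\ge\delta/v$. By the definitions of supremum and infimum, there are times $s\le\sigma$ arbitrarily close to $\sigma$ with $X_s\in S_i$, and times $t\ge\theta$ arbitrarily close to $\theta$ with $X_t\in S_k$ for some $k\ne i$. The separation hypothesis gives $d(X_s,X_t)\ge\delta$. Bounded change gives $d(X_s,X_t)\le v(t-s)$. Taking limits yields $\theta-\sigma\ge\delta/v$.

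This step needs care because the sets need not be closed, so $X_\sigma$ may lie in $B$ or even belong to no set. Only the approximating points are used, never $X_\sigma$ or $X_\theta$ themselves, which is why the argument survives this. I expect this to be the main delicate point, and it is handled purely by the infimum/supremum argument above, without any closedness assumption.

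Combining the pieces,
\[
 b=\mathbb E\bigl[\mathrm{Leb}\{t:X_t\in B\}\bigr]\ge\frac{\delta}{v}\Pr[E]=\frac{\delta}{v}\Bigl(g^2-\sum_ip_i^2\Bigr),
\]
which is the claimed bound.
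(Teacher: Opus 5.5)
Your proof is correct and follows the paper's argument. The occupation time of the gap has expectation $b$ by the marginal property, the crossing event has probability $g^2-\sum_i p_i^2$ by independence of the endpoints, and every crossing path spends at least $\delta/v$ in $B$. The only difference is in how the crossing interval is located: the paper uses the first hitting time of $\bigcup_{k\ne i}\overline S_k$ and the last earlier visit to $\overline S_i$, which are attained and stay $\delta$-separated. You instead take $\sigma=\sup\{t:X_t\in S_i\}$ with an infimum and approximating sequences, which equally avoids any closedness assumption.
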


\paragraph{Proof template.}
We sample a path from $\Pi$ and an independent uniform time along it. Independent endpoints determine how often the path must cross between different sets. The speed bound makes every such crossing occupy an interval of length at least $\delta/v$, while the unchanged intermediate distributions convert the probability of sampling this interval into a lower bound on the gap probability. Definition~\ref{def:stationary-paths} packages these properties into a reusable criterion. The method has precedents: Barhoush~\cite[Section~4, Lemma~2]{Bar26} interpolates between fixed oracle pairs, and Lott--Villani~\cite[Definitions~1.2 and~1.6, and Theorem~2.5]{LV07} use random paths with independent endpoints and bounds on the distributions encountered along them. Here we require the distribution at each fixed time to remain unchanged and allow general paths with bounded change. The barrier follows directly from these properties, without invoking the geometric inequalities used for Haar distributions in~\cite[Section~5.1]{BEHMV26}.

\begin{proof}
Fix a distribution $\Pi$ on paths satisfying Definition~\ref{def:stationary-paths}. Sample $X\sim\Pi$, and independently choose $T$ uniformly from $[0,1]$. For each fixed path $X$, define
\[
 L_B(X)=\Pr_T[X_T\in B],
\]
$L_B(X)$ is the probability that at a uniformly sampled time on this fixed path, the point lies in the gap $B$. Equivalently, it is the fraction of parameter time spent in $B$. 

\paragraph{Probability of sampling the gap.}
Next we can write the average value of $L_B(X)$ over $X\sim\Pi$ as the overall probability over random $X\sim\Pi$ and independently uniform $T\in[0,1]$ that $X_T\in B$. Since $T$ is chosen independently, the same probability can be written as an expectation over $T$, of the probability that $X_T\in B$ for random $X\sim \Pi$. This last probability term is $b$, because $T$ is sampled independently of $X$, and at every fixed time $t$, the marginal-intermediate-distribution condition (\Cref{it:marginal}) guarantees that $\Pr_{X\sim\Pi}[X_t\in B]=b$. Combining these, we get
\[
 \mathbb E_{X\sim\Pi}[L_B(X)]=\Pr_{X\sim\Pi,T}[X_T\in B]=\mathbb E_T\left[\Pr_{X\sim \Pi}[X_T\in B]\right]=\mathbb E_T\left[\Pr_{Y\sim \mu}[Y\in B]\right]=\mathbb E_T[b]=b.
\]
Hence, in order to complete the proof of \Cref{prop:barrier}, it is enough to show that
\begin{equation}\label{eq:to-be-shown}
    \mathbb E_{X\sim\Pi}[L_B(X)] \geq \frac{\delta}{v}\left(g^2-\sum_i p_i^2\right).
\end{equation}

\paragraph{Time required for a crossing.}
Fix a path $X$ in the support of $\Pi$ satisfying \Cref{it:bounded-change}, such that $X_0\in S_i$ and $X_1\in S_j$ for some $i\ne j$. We identify one interval during which this path crosses the gap. Let $\overline S_k$ denote the closure of $S_k$, obtained by including all limits of sequences of points in $S_k$. The closures remain separated by at least $\delta$, because distance is continuous.

Let $\tau$ be the first time the path reaches any $\overline S_k$ with $k\ne i$, and let $\sigma$ be the last time at or before $\tau$ that it belongs to $\overline S_i$. These times exist, because the path begins in $S_i$ and ends in another set, and continuity makes the relevant sets of hitting times closed subsets of $[0,1]$. A nonempty closed subset of a closed bounded interval contains its smallest and largest elements, so these first and last times are attained. We use closures only to justify the boundary times; we do not assume that adding boundary points preserves the probabilities of the sets.

For every $t\in(\sigma,\tau)$, the path is outside $\overline S_i$ by the definition of $\sigma$, and outside every other closure by the definition of $\tau$. Thus it is in $B$ throughout this interval. Its endpoints lie in different closures, so the separation and change bounds give
\[
 \delta\leq d(X_\sigma,X_\tau)\leq v(\tau-\sigma).
\]
On this fixed path, the uniform time $T$ lies in $(\sigma,\tau)$ with probability $\tau-\sigma$. Consequently $L_B(X)\geq\tau-\sigma\geq\delta/v$. The first inequality is because a path may cross the gap more than once; but we only count the one crossing in the interval $(\sigma,\tau)$, which gives the required lower bound.

\paragraph{Probability of crossing under $\Pi$}
Let $\mathcal E$ be the event, over $X\sim\Pi$, that the initial point $X_0$ and final point $X_1$ belong to different sets among $S_1,\ldots,S_r$. By \Cref{def:stationary-paths}, $X_0,X_1$ are distributed as independent samples from $\mu$. Their independence under $\Pi$ gives
\[
 \Pr_{X\sim\Pi}[\mathcal E]=\Pr_{X_0\sim\mu,X_1\sim\mu}[\exists i\neq j, X_0\in S_i, X_1\in S_j]=\sum_{i\ne j}p_ip_j
 =\left(\sum_i p_i\right)^2-\sum_i p_i^2
 =g^2-\sum_i p_i^2.
\]
We already showed that every path in $\mathcal E$ has $L_B(X)\geq\delta/v$. On every other path, $L_B(X)\geq0$. Therefore, we get
\[
 \mathbb E_{X\sim\Pi}[L_B(X)]
 \geq\frac{\delta}{v}\Pr_{X\sim\Pi}[\mathcal E]
 =\frac{\delta}{v}\left(g^2-\sum_i p_i^2\right),
\]
which completes the proof of \Cref{eq:to-be-shown} and hence the proposition.
\end{proof}

To apply Proposition~\ref{prop:barrier}, we must specify a distribution $\Pi$ on paths satisfying Definition~\ref{def:stationary-paths}. The next lemma constructs this distribution from the hypotheses of Theorem~\ref{thm:interpolated-barrier}. We join several transitions in succession, then choose a random starting position and a random amount of progress along them.

\begin{lemma}[Constructing the path distribution]
\label{lem:stationary-paths}
Under the hypotheses of Theorem~\ref{thm:interpolated-barrier}, there is an explicit distribution $\Pi$ on continuous paths $X:[0,1]\to\mathcal X$ satisfying Definition~\ref{def:stationary-paths} with change bound $v=3v_0$. Thus the distribution $\mu$ in that theorem satisfies the path hypothesis of Proposition~\ref{prop:barrier}. For a finite product of such distributions, independent choices of paths give change bound $3\max_i v_{0,i}$ when distance is the maximum of the coordinate distances. Coordinates fixed at prescribed values may be included.
\end{lemma}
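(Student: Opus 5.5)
The path will be obtained by walking along a chain of independent tables, each consecutive pair joined by the interpolation $F$. I take a random starting point on this walk and a random amount of progress along it. Concretely, I sample tables $Z_0,Z_1,Z_2,Z_3,Z_4$ independently from $\nu$, and independent uniform $U,D\in[0,1]$. Define the concatenated walk $W:[0,5)\to\mathcal X$ by $W(s)=F(Z_k,Z_{k+1},s-k)$ for $s\in[k,k+1)$, $k\in\{0,1,2,3\}$. Then set
\[
X_t=W\bigl(U+t(2+D)\bigr),\qquad t\in[0,1].
\]
The argument $U+t(2+D)$ lies in $[0,4]$, so only $Z_0,\ldots,Z_4$ are used. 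At $s=4$ I take $W(4)=V(Z_4)$.

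\textbf{Continuity and change bound.} At an integer joint $k$ we have $F(Z_{k-1},Z_k,1)=V(Z_k)=F(Z_k,Z_{k+1},0)$, so $W$ is continuous. On each unit segment $W$ is $v_0$-Lipschitz by hypothesis. Splitting $[s,s']$ at the integers and using the triangle inequality shows that $W$ is $v_0$-Lipschitz on all of $[0,4]$. The time reparametrization has slope $2+D\le 3$, so $d(X_s,X_t)\le 3v_0|s-t|$ for every path. Joint measurability of $(X,t)\mapsto X_t$ follows from measurability of $F$ and of the floor and fractional-part maps.

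\textbf{Marginals.} Fix $t\in[0,1]$. Conditionally on $D$, the position $s=U+t(2+D)$ is a uniform variable shifted by a constant. Hence its fractional part $\{s\}$ is uniform on $[0,1]$. The tables are independent of $(U,D)$, so conditionally on $(U,D)$ the pair $(Z_{\lfloor s\rfloor},Z_{\lfloor s\rfloor+1})$ is an i.i.d.\ $\nu$ pair, whatever the index is. It follows that $X_t=F(Z_k,Z_{k+1},\{s\})$ has the law of $F(A,B,U)$, which is $\mu$.

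\textbf{Independent endpoints.} We have $X_0=F(Z_0,Z_1,U)$. The endpoint $X_1$ sits at position $U+2+D\in[2,4]$, so it is $F(Z_k,Z_{k+1},\{U+D\})$ with $k\in\{2,3\}$ (or $V(Z_4)$ on a null event). For each fixed $U$, the fractional part $\{U+D\}$ is uniform. Hence $\{U+D\}$ is independent of $U$, and in fact $(U,\{U+D\})$ are i.i.d.\ uniform. The pair used at the end, $(Z_2,Z_3)$ or $(Z_3,Z_4)$, is disjoint from $(Z_0,Z_1)$. It is an i.i.d.\ $\nu$ pair independent of $(Z_0,Z_1,U,\{U+D\})$ even after conditioning on which $k$ occurs, since $k$ depends only on $(U,D)$. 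Therefore $X_0$ and $X_1$ are independent and each has law $\mu$. I expect this step to be the main obstacle. The naive choice $X_t=W(U+3t)$ reuses the same fractional offset $U$ at both ends and breaks independence. The random extra progress $D$ is exactly what decorrelates the two offsets while keeping every fixed-time marginal uniform.

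\textbf{Products.} For a finite product, I build such a path independently in each coordinate. Coordinates fixed at prescribed values are kept as constant paths, which trivially satisfy all three conditions for a point-mass distribution. Independence across coordinates gives product marginals at each fixed time and independent product endpoints. Under the maximum of the coordinate distances, the change bound is the maximum of the coordinate bounds, namely $3\max_i v_{0,i}$.
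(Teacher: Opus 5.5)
Your proposal is correct and is essentially the paper's proof: the paper also concatenates transitions through five i.i.d.\ $\nu$-samples, uses a uniform starting offset $U$ and a total progress $T$ uniform on $[2,3)$ (your $2+D$), and then verifies the three properties exactly as you do. The change bound comes from $v_0T\le 3v_0$, the marginals from the uniform fractional part, and independent endpoints from $k_1\ge 2$ together with uniformity of $(U+T)\bmod 1$ given $U$. The only slip is cosmetic: your walk $W$ should be defined on $[0,4]$, not $[0,5)$.
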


\begin{proof}
Fix $\nu,V,F,\mu$, and $v_0$ as in Theorem~\ref{thm:interpolated-barrier}. We first specify how to sample an entire path from $\Pi$, then verify the three properties in Definition~\ref{def:stationary-paths}.

\paragraph{Sampling a path from $\Pi$.}
Let $A_0,\ldots,A_4$ be five independent samples from $\nu$. Independently sample $U$ uniformly in $[0,1)$ and $T$ uniformly in $[2,3)$. For every $s\in[0,1]$, set
\[
 k_s=\lfloor U+sT\rfloor,\qquad
 u_s=U+sT-k_s,\qquad
 X_s=F(A_{k_s},A_{k_s+1},u_s).
\]
These choices determine the entire function $s\mapsto X_s$. Define $\Pi$ to be the distribution of this function. Since $0\leq U+sT<4$, we have $k_s\in\{0,1,2,3\}$, so the five samples suffice. Half-open sampling intervals give the same uniform distributions as the corresponding closed intervals and avoid endpoint conventions.

To interpret the construction, join the four transitions from $V(A_0)$ to $V(A_1)$, from $V(A_1)$ to $V(A_2)$, and so on, assigning one unit of progress to each transition. At time $s$, the total progress is $U+sT$: the integer $k_s$ selects the current transition, and $u_s\in[0,1)$ gives the position within it. Thus $U$ is the starting position and $T$ is the total amount of progress during the time interval $[0,1]$. At a boundary between transitions, the endpoint identities for $F$ make both formulas agree at the common point $V(A_k)$. Hence every sampled path is continuous. The formula is jointly measurable in the sampled data and $s$, since $F$ is measurable and $k_s$ has only four possible values. This supplies the measurability required in Definition~\ref{def:stationary-paths}.

\paragraph{Bounded change.}
Within a single transition, a progress change of size $h$ changes the point by at most $v_0h$. If an interval crosses a transition boundary, divide it at that boundary and use the triangle inequality. The changes add to at most $v_0$ times the total progress. Since progress increases at rate $T\leq3$, we obtain
\[
 d(X_s,X_t)\leq v_0T|s-t|\leq3v_0|s-t|.
\]

\paragraph{Marginal intermediate distributions.}
Fix a time $s$ and first condition on $T$. The fractional part $u_s=(U+sT)\bmod1$ is uniform in $[0,1)$: adding a fixed number to a uniform position and reducing modulo one preserves uniformity. The index $k_s$ depends only on $U,T$, which are independent of all the samples. Conditional on any choice of $U,T$, the selected pair $(A_{k_s},A_{k_s+1})$ consists of independent $\nu$-samples. Consequently the pair is independent of the position $u_s$, and $X_s$ has exactly the distribution $\mu$.

\paragraph{Independent endpoints.}
The initial point is $X_0=F(A_0,A_1,U)$. Since $T\geq2$, the final transition index satisfies $k_1\geq2$, so the final transition uses neither sampled variable $A_0$ nor $A_1$. Moreover, conditional on any initial position $U$, the fractional part $u_1=(U+T)\bmod1$ is uniform, because $T$ is uniform on an interval of length one. Hence $u_1$ is independent of $U$. Conditional on $U,T$, the final pair of samples has distribution $\nu\times\nu$, independently of the initial pair. It follows that the final triple $(A_{k_1},A_{k_1+1},u_1)$ is independent of $(A_0,A_1,U)$ and has distribution $\nu\times\nu\times U_{[0,1)}$, where $U_{[0,1)}$ denotes the uniform distribution on $[0,1)$. Thus $X_0$ and $X_1$ are independent samples from $\mu$. Both choices in the construction matter: advancing by at least two transitions ensures that the initial and final transitions use disjoint sampled variables, and randomizing the fractional progress avoids sharing the initial position.

We have therefore constructed the required distribution $\Pi$: its endpoints are independent $\mu$-samples, every fixed-time marginal is $\mu$, and its change bound is $3v_0$. For a finite product, perform this construction independently in each coordinate and use the same time interval $[0,1]$. At every fixed time the coordinate distributions remain independent, and the two entire endpoints are independent. Taking the maximum of the coordinate change bounds gives $3\max_i v_{0,i}$. A coordinate fixed at a prescribed value contributes zero change.
\end{proof}

\begin{proof}[Proof of Theorem~\ref{thm:interpolated-barrier}]
Under the theorem's hypotheses, Lemma~\ref{lem:stationary-paths} constructs a distribution $\Pi$ satisfying Definition~\ref{def:stationary-paths} with $v=3v_0$. Apply Proposition~\ref{prop:barrier} to this distribution of paths to obtain the stated bound.
\end{proof}

For finite products of these distributions, Theorem~\ref{thm:interpolated-barrier} and Corollary~\ref{cor:barrier-large-set} hold with $v_0=\max_i v_{0,i}$, using the maximum coordinate distance.

\subsubsection{Relation to the Haar barrier theorem of \cite{BEHMV26}}
\label{sec:barrier-consequences}

Bouaziz--Ermann et al.~\cite[Section~5.1, Theorems~2.1 and~5.3, and Appendix~A.1]{BEHMV26} assume a specific metric probability space: a product of spheres with product Haar measure. They obtain their barrier bound from this geometry by enlarging one set through neighborhoods of increasing radius. They lower-bound the rate at which its probability grows using a Cheeger bound, which controls the size of the boundary relative to the probabilities on its two sides. Their proof obtains that bound from Lichnerowicz's theorem and Buser's inequality. The radius measures the width of a neighborhood; it is not the time parameter of a random path between independently sampled endpoints.

The path condition in Proposition~\ref{prop:barrier} is an intermediate proof tool. Combining it with Lemma~\ref{lem:stationary-paths} gives Theorem~\ref{thm:interpolated-barrier}, whose hypotheses directly describe how the distribution is generated. This gives a formulation parallel to the Haar result: the distribution belongs to a specified class, and the required path properties are proved from that description.

The Haar setting satisfies the path condition as well. In each sphere, connect independent Haar samples by the shortest great-circle arc, the shortest route along the sphere, traversed at a constant rate over $[0,1]$. Except on an event of probability zero, this arc is uniquely defined. Applying a unitary to the endpoints applies the same unitary to every intermediate point, so every fixed intermediate point is Haar distributed. Independent choices in each coordinate preserve the product distribution. Each arc has length at most $\pi$. Since Euclidean distance is at most arc length, these paths satisfy Definition~\ref{def:stationary-paths} with $v=\pi$ for the maximum coordinate Euclidean distance used in~\cite{BEHMV26}.

To obtain the two-set bound, let $S_0,S_1$ be measurable sets separated by distance at least $\delta>0$. Write $p_i=\mu(S_i)$ for $i\in\{0,1\}$, $m=\min\{p_0,p_1\}$, and $\varepsilon=1-p_0-p_1$ for the probability outside both sets. Proposition~\ref{prop:barrier} gives
\[
 \varepsilon\geq\frac{2\delta}{\pi}p_0p_1
 \geq\frac{\delta}{\pi}m(1-\varepsilon),
\]
where the second inequality uses $2\max\{p_0,p_1\}\geq p_0+p_1=1-\varepsilon$. If $m=0$, the desired bound is immediate. Otherwise $\delta\leq2<\pi$, since the Euclidean distance between unit vectors is at most two. Rearranging and using $m\leq1$ yields
\[
 \mu\bigl(\mathcal X\setminus(S_0\cup S_1)\bigr)
 =\varepsilon\geq\frac{\delta m}{\pi+\delta m}
 \geq\frac{\delta}{2\pi}\min\{\mu(S_0),\mu(S_1)\}.
\]
This recovers the dependence on the distance and the smaller set's probability in~\cite[Theorem~5.3]{BEHMV26}, with a different constant that is independent of the dimensions and the number of spheres. For the interpolation distributions considered here, Lemma~\ref{lem:stationary-paths} instead constructs paths with $v=3v_0$, yielding Theorem~\ref{thm:interpolated-barrier}.

\subsection{Base Oracle Distribution and Constructions of \textsf{qPKE} and Point-function Obfuscation}
\label{sec:construction}
We first define a base distribution $\mathcal{P}$ of oracles with respect to which we show a secure \textsf{qPKE} and Point-function Obfuscation, up to negligible probability over the distribution. Then in Section~\ref{sec:conditioned-oracle}, we condition on favorable events to obtain a new distribution $\mathcal Q$, where \textsf{qOWF}s do not exist, while the \textsf{qPKE} and Point-function Obfuscation constructions remain secure, up to some small probability. 
\subsubsection{Base Oracle distribution}\label{sec:base-oracle}
At each length, the oracle provides three operations: $O_{\mathrm{samp}}$, $O_{\mathrm{enc}}$, and $O_{\mathrm{dec}}$. We first describe the classical data used by the oracles, mainly consisting of three objects: a \emph{row}, \emph{tables} and an \emph{interpolation parameter}. We then specify the quantum states and oracle operations determined by these data. After the oracles are described, \Cref{sec:primitive-constructions} gives the \textsf{qPKE} and point-function obfuscation constructions using these operations.

\paragraph{Classical data and their distribution.}
Fix a length $n\geq1$, and set $N=2^n$, $ Y=2^{8n}$, $ M=2^{2n}$, $ Z=2^{12n}.$ There are $N$ p-keys, $Y$ possible s-keys, and $Z$ possible c-texts, where a row corresponding to a p-key $j\in [N]$ is a classical tuple $ (s,L_0,L_1)$, $ s\in[Y],\quad L_0,L_1\subseteq[Z],\quad |L_0|=|L_1|=M,\quad L_0\cap L_1=\varnothing.$ We call $s$ the \emph{s-key}, and $L_0$ and $L_1$ are two sets of \emph{c-texts}, where for each bit $b\in\bits$, $L_b$ is the set of c-texts hiding bit $b$. For each $j\in [N]$, we sample a row $(s,L_0,L_1)$ by independently sampling $s\xleftarrow{\$}[Y]$, $ L_0\xleftarrow{\$}\binom{[Z]}{M}$, $ L_1\xleftarrow{\$}\binom{[Z]\setminus L_0}{M}.$ 
The two c-text sets are sampled as a uniform ordered pair of disjoint sets; they are not independent of each other. 

\paragraph{Sampling the two tables and the interpolation parameter.}
For each $j\in[N]$, independently sample two rows from the distribution above, named $(s_j^A,A_{j,0},A_{j,1})$, and $(s_j^B,B_{j,0},B_{j,1}).$ Collect the first rows into table $A$ and the second rows into table $B$, i.e., $A=\bigl((s_j^A,A_{j,0},A_{j,1})\bigr)_{j\in[N]}$ and $B=\bigl((s_j^B,B_{j,0},B_{j,1})\bigr)_{j\in[N]}$. All row draws are independent across p-keys and between tables. We call their pair at p-key $j$ the \emph{pair of rows}. Within each table, the two c-text sets are disjoint; across tables, c-text sets may intersect and the s-keys $s_j^A,s_j^B$ may coincide.

Let $\mathcal A_j=A_{j,0}\cup A_{j,1}$ and $\mathcal B_j=B_{j,0}\cup B_{j,1}$. A pair of rows is \emph{good} if $\mathcal A_j\cap\mathcal B_j=\varnothing$, and is bad with probability at most $4M^2/Z$. Independently of the tables $A_n,B_n$, sample independent uniform $i_n\in\{1,2,3\}$ and $t_n\in[0,1]$, and set $\tau_n=(i_n-1+t_n)/3$. Then $\tau_n$ is uniform in $[0,1]$. The \emph{base distribution} $\mathcal P$ is the resulting distribution of $(A_n,B_n,\tau_n)_{n\geq1}$, independent across lengths, with $A_n,B_n,\tau_n$ independent within each length. All three operations and all rows at length $n$ use the same $\tau_n$; algorithms query the resulting operations without supplying this parameter. Below we suppress the length subscript and write $A,B,\tau$.
\paragraph{States and response angles.}
Each table specifies uniform c-text states and response angles:
\[
\begin{gathered}
 \ket{\phi^A_{j,b}}=\frac1{\sqrt M}\sum_{c\in A_{j,b}}\ket c,
 \qquad
 \ket{\phi^B_{j,b}}=\frac1{\sqrt M}\sum_{c\in B_{j,b}}\ket c,\\[3pt]
 \alpha^A_j(c)=
 \begin{cases}
 0&c\in A_{j,0},\\
 \pi/2&c\in A_{j,1},\\
 \pi/4&c\notin\mathcal A_j,
 \end{cases}
 \qquad
 \alpha^B_j(c)=
 \begin{cases}
 0&c\in B_{j,0},\\
 \pi/2&c\in B_{j,1},\\
 \pi/4&c\notin\mathcal B_j.
 \end{cases}
\end{gathered}
\]
$O_{\mathrm{dec}}$ uses angles $0,\pi/2,\pi/4$ to return zero, one, or an unbiased bit when its response qubit starts in $\ket0$. To accommodate c-texts from both tables, define the shared angle
\[
 \gamma_j(c)=
 \begin{cases}
 \alpha^A_j(c)&c\in\mathcal A_j\setminus\mathcal B_j,\\
 \alpha^B_j(c)&c\in\mathcal B_j\setminus\mathcal A_j,\\
 \pi/4&\text{otherwise}.
 \end{cases}
\]
Thus a c-text $c$ in exactly one table uses that table's angle, while a c-text in both tables or neither uses the default $\pi/4$. For $1/3\leq\tau\leq2/3$, define
\[
\begin{aligned}
 \ket{g_j(\tau)}&=
 \begin{cases}
 \cos\!\left(\frac{\pi(3\tau-1)}2\right)\ket{s_j^A}+\sin\!\left(\frac{\pi(3\tau-1)}2\right)\ket{s_j^B}&s_j^A\ne s_j^B,\\
 \ket{s_j^A}&s_j^A=s_j^B,
 \end{cases}\\[3pt]
 \ket{\phi_{j,b}(\tau)}&=
 \frac{\cos\!\left(\frac{\pi(3\tau-1)}2\right)\ket{\phi^A_{j,b}}
       +\sin\!\left(\frac{\pi(3\tau-1)}2\right)\ket{\phi^B_{j,b}}}
 {\sqrt{1+2\sin\!\left(\frac{\pi(3\tau-1)}2\right)\cos\!\left(\frac{\pi(3\tau-1)}2\right)
                    \langle\phi^A_{j,b}\mid\phi^B_{j,b}\rangle}}.
\end{aligned}
\]

The $O_{\mathrm{samp}}$ state is $\ket\Psi=N^{-1/2}\sum_{j\in[N]}\ket j\ket{\psi_j}$. The parameter $\tau$ determines its components $\ket{\psi_j}$, the $O_{\mathrm{enc}}$ states $\ket{\phi_{j,b}}$, and the $O_{\mathrm{dec}}$ angles $\beta_{j,y}(c)$. For $y\notin\{s_j^A,s_j^B\}$, always set $\beta_{j,y}(c)=\pi/4$. For distinct s-keys $s_j^A,s_j^B$, use the following formulas.
\begin{enumerate}
\item \textbf{$0\leq\tau\leq1/3$: sample from $A$ and interpolate to the shared response rule.} Set $\ket{\psi_j}=\ket{s_j^A}, \ket{\phi_{j,b}}=\ket{\phi^A_{j,b}}, \beta_{j,s_j^A}(c)=(1-3\tau)\alpha^A_j(c)+3\tau\gamma_j(c), \beta_{j,s_j^B}(c)=(1-3\tau)\pi/4+3\tau\gamma_j(c).$

\item \textbf{$1/3\leq\tau\leq2/3$: interpolate both states and use the shared response rule.} Set $\ket{\psi_j}=\ket{g_j(\tau)}$, $\ket{\phi_{j,b}}=\ket{\phi_{j,b}(\tau)}$, and $\beta_{j,s_j^A}(c)=\beta_{j,s_j^B}(c)=\gamma_j(c)$.
\item \textbf{$2/3\leq\tau\leq1$: sample from $B$ and interpolate to its response rule.} Set $\ket{\psi_j}=\ket{s_j^B},\ket{\phi_{j,b}}=\ket{\phi^B_{j,b}},\beta_{j,s_j^A}(c)=(3-3\tau)\gamma_j(c)+(3\tau-2)\pi/4,\beta_{j,s_j^B}(c)=(3-3\tau)\gamma_j(c)+(3\tau-2)\alpha^B_j(c).$
\end{enumerate}
If $s_j^A=s_j^B$, keep the same states and assign their common s-key the single angle $(1-3\tau)\alpha^A_j(c)+3\tau\gamma_j(c)$ for $0\leq\tau\leq1/3$, $\gamma_j(c)$ for $1/3\leq\tau\leq2/3$, and $(3-3\tau)\gamma_j(c)+(3\tau-2)\alpha^B_j(c)$ for $2/3\leq\tau\leq1$. The formulas agree at $\tau=1/3$ and $\tau=2/3$. 

\paragraph{Oracle operations.}
Each answer register for $O_{\mathrm{samp}}$ or $O_{\mathrm{enc}}$ also has a distinguished basis state $\ket\blank$ orthogonal to all valid answers. The following unitaries are defined on arbitrary inputs, act as the identity on unused encodings, and extend linearly to superpositions. $O_{\mathrm{enc}}$ and $O_{\mathrm{dec}}$ preserve their control registers.
\begin{enumerate}
\item \textbf{$O_{\mathrm{samp}}$.} On the joint answer register for $(j,y)$, define
\[
 O_{\mathrm{samp},n}=I-(\ket\blank-\ket\Psi)(\bra\blank-\bra\Psi).
\]
It swaps $\ket\blank$ with $\ket\Psi$ and fixes their orthogonal complement. Measuring the prepared state $\ket\Psi$ gives a uniform p-key $j$ and an s-key distributed as the measurement of $\ket{\psi_j}$.
\item \textbf{$O_{\mathrm{enc}}$.} On control registers $(j,b)$ and an answer register for the c-text $c$, define
\[
\begin{aligned}
 O_{\mathrm{enc},n}
 ={}&\sum_{j\in[N],\,b\in\bits}\ket{j,b}\bra{j,b}\otimes
 \Bigl[I-(\ket\blank-\ket{\phi_{j,b}})
          (\bra\blank-\bra{\phi_{j,b}})\Bigr]\\
 &+\Pi_{\rm bad}\otimes I,
\end{aligned}
\]
where $\Pi_{\rm bad}$ projects onto malformed controls. At each valid $(j,b)$, the answer operation swaps $\ket\blank$ with $\ket{\phi_{j,b}}$ and fixes their orthogonal complement. Measuring the prepared answer gives a c-text $c$ associated with bit $b$ at p-key $j$.
\item \textbf{$O_{\mathrm{dec}}$.} On controls $(j,y,c)\in[N]\times[Y]\times[Z]$ and one response qubit, define
\[
\begin{aligned}
 O_{\mathrm{dec},n}\ket{j,y,c,0}
 &=\cos\bigl(\beta_{j,y}(c)\bigr)\ket{j,y,c,0}
   +\sin\bigl(\beta_{j,y}(c)\bigr)\ket{j,y,c,1},\\
 O_{\mathrm{dec},n}\ket{j,y,c,1}
 &=\sin\bigl(\beta_{j,y}(c)\bigr)\ket{j,y,c,0}
   -\cos\bigl(\beta_{j,y}(c)\bigr)\ket{j,y,c,1}.
\end{aligned}
\]
Malformed controls are unchanged. Starting with response $\ket0$, measurement returns one with probability $\sin^2(\beta_{j,y}(c))$. Every other s-key uses the default angle $\pi/4$ for all c-texts $c$.
\end{enumerate}
These three operations use $O(n)$ qubits and form the oracle block at length $n$. In addition, fix a $\PSPACE$-complete language $L_{\PSPACE}$ independently of all sampled data and provide its reversible membership oracle $U_{\PSPACE}\ket{z,v}=\ket{z,v\oplus\mathbf 1_{\{z\in L_{\PSPACE}\}}}$, for $v\in\bits$. This oracle has no access to the other oracles.

For the following norm estimates, abbreviate the reflections used above by
\[\numberthis
 H_\psi=I-(\ket\blank-\ket\psi)(\bra\blank-\bra\psi),
 \qquad
 Q(\theta)=\begin{pmatrix}\cos\theta&\sin\theta\\
                         \sin\theta&-\cos\theta\end{pmatrix},
 \label{eq:reflections}
\]
where $\ket\psi$ is a normalized real state orthogonal to $\ket\blank$. Thus $O_{\mathrm{samp}}$ is $H_\Psi$, the answer operation of $O_{\mathrm{enc}}$ at $(j,b)$ is $H_{\phi_{j,b}}$, and the response operation of $O_{\mathrm{dec}}$ at $(j,y,c)$ is $Q(\beta_{j,y}(c))$.

\subsubsection{Properties of the Oracle and the base distribution on it}
Next, we state the following immediate properties of the oracle distribution.
\begin{observation}[Immediate properties of the oracle]
\label{obs:oracle-properties}
The definitions give the following properties.
\begin{itemize}
\item \textbf{Matching endpoints.} For any three tables $A,B,C$ at the same length, the transition $A\to B$ at $\tau=1$ gives exactly the same oracle operations as the transition $B\to C$ at $\tau=0$, because both endpoints depend only on $B$. Thus the transitions $A\to B\to C$ join continuously, allowing us to concatenate transitions in the path construction of Section~\ref{sec:general-barrier}.
\item \textbf{Symmetry.} Every operation is a real symmetric involution, since its defining reflections have these properties.
\item \textbf{Uniform p-key.} For every $\tau$, $O_{\mathrm{samp}}$ yields a uniform p-key $j$, since each normalized $\ket{\psi_j}$ has coefficient $1/\sqrt N$ in $\ket\Psi$.
\item \textbf{Default response.} By definition, $\beta_{j,y}(c)=\pi/4$ whenever $y\notin\{s_j^A,s_j^B\}$.
\item \textbf{Correctness.} At every good pair of rows at $j$, the formulas give $\beta_{j,y}(c)=\pi b/2$ for honestly sampled $y,c$, so $O_{\mathrm{dec}}$ recovers $b$ with certainty, including when $s_j^A=s_j^B$.
\end{itemize}
\end{observation}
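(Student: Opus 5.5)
We check each bullet of Observation~\ref{obs:oracle-properties} directly from the definitions in Section~\ref{sec:base-oracle}, handling the three $\tau$-regimes and the two cases $s_j^A\ne s_j^B$ and $s_j^A=s_j^B$ separately. We start with the endpoint identities.

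\textbf{Matching endpoints.} At $\tau=1$ the third regime gives $\ket{\psi_j}=\ket{s_j^B}$ and $\ket{\phi_{j,b}}=\ket{\phi^B_{j,b}}$. The angles are $\beta_{j,s_j^A}(c)=\pi/4$ and $\beta_{j,s_j^B}(c)=\alpha^B_j(c)$, and every other $y$ gets $\pi/4$. None of these quantities involve $A$ or $\gamma_j$. At $\tau=0$, the first regime of the transition $B\to C$ gives $\ket{s_j^B}$, $\ket{\phi^B_{j,b}}$, $\beta_{j,s_j^B}=\alpha^B_j$, $\beta_{j,s_j^C}=\pi/4$, and $\pi/4$ for all other $y$. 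This is the same data. When $s_j^B$ coincides with the other table's s-key, the single-angle rule again yields $\alpha^B_j$ at the relevant endpoint. Since the three operations are determined by $(\ket{\psi_j})_j$, $(\ket{\phi_{j,b}})$ and $(\beta_{j,y})$, the oracles coincide.

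\textbf{Symmetry.} Every $\ket{\psi_j}$ and $\ket{\phi_{j,b}}$ is a real normalized vector. Normalization of $\ket{\phi_{j,b}(\tau)}$ holds because the denominator is exactly $\|\cos\theta\,\phi^A+\sin\theta\,\phi^B\|$, using $\langle\phi^A|\phi^B\rangle\ge0$ real. These vectors are orthogonal to $\ket\blank$, and so is $\ket\Psi$. Hence $w=\ket\blank-\ket\psi$ is a real vector with $\|w\|^2=2$, and $H_\psi=I-ww^{T}$ is a real symmetric reflection with $H_\psi^2=I$. The matrix $Q(\theta)$ is real, symmetric, and squares to $I$. $O_{\mathrm{enc}}$ and $O_{\mathrm{dec}}$ are block diagonal in their computational-basis controls, with such blocks plus the identity on malformed controls, so they inherit these properties.

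\textbf{Uniform p-key and default response.} $\ket\Psi=N^{-1/2}\sum_j\ket j\ket{\psi_j}$ with each $\ket{\psi_j}$ a unit vector, so the $j$-marginal is uniform. This includes the case $\ket{g_j(\tau)}$: it is a unit vector because $s_j^A\ne s_j^B$ are orthonormal basis states when distinct, and it equals $\ket{s_j^A}$ otherwise. The default response is stipulated verbatim in the definition of $\beta$.

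\textbf{Correctness.} This is the only step requiring care. Fix a good pair of rows, so $\mathcal A_j\cap\mathcal B_j=\varnothing$ and hence $\gamma_j=\alpha^A_j$ on $\mathcal A_j$ and $\gamma_j=\alpha^B_j$ on $\mathcal B_j$. We take the three regimes in turn.

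For $\tau\le1/3$, sampling yields $y=s_j^A$ and $c\in A_{j,b}$. Then $\beta=(1-3\tau)\alpha^A_j(c)+3\tau\gamma_j(c)=\alpha^A_j(c)=\pi b/2$.

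For $1/3\le\tau\le2/3$, $y$ is either $s_j^A$ or $s_j^B$, and both use $\gamma_j$. The measured c-text lies in the support $A_{j,b}\cup B_{j,b}$ of $\ket{\phi_{j,b}(\tau)}$. On that support, $\gamma_j$ equals $\alpha^A_j$ or $\alpha^B_j$, which is $\pi b/2$ in either case. The regime $\tau\ge2/3$ is symmetric to the first.

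If $s_j^A=s_j^B$, the common key uses the single-angle formula, and the same support argument gives $\pi b/2$. Finally, $\sin^2(\pi b/2)=b$, so $O_{\mathrm{dec}}$ outputs $b$ with certainty.
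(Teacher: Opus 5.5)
Your proposal is correct and follows the same route as the paper, which justifies each bullet by a one-line appeal to the definitions; you carry out those checks explicitly. In particular, you supply the details the paper leaves implicit: the endpoint comparison in both the distinct and coincident s-key cases, the identity $H_\psi^2=I$ from $\|\ket\blank-\ket\psi\|^2=2$, and the observation that in the middle regime both s-keys use $\gamma_j$, which equals $\pi b/2$ on $A_{j,b}\cup B_{j,b}$ for a good pair, so cross-table combinations of $y$ and $c$ still decrypt correctly.
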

Next, we establish that the unitaries $U(\tau)$ are in fact Lipschitz in $\tau$ with the Lipschitz constant independent of the length $n$ and the choice of tables $A,B$.
\begin{lemma}[Bounded change of the interpolation]
\label{lem:edge-properties}
There is a constant $v_0$, independent of the length $n$ and the choice of tables $A,B$, such that $\|U(\tau)-U(\tau')\|\leq v_0|\tau-\tau'|$ for every oracle operation $U$ along the transition from $A$ to $B$ and all $\tau,\tau'\in[0,1]$.

\end{lemma}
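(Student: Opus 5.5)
The plan is to reduce everything to two elementary norm bounds on the building blocks in \eqref{eq:reflections}, and then check that every state and angle moves at a bounded rate in $\tau$ on each of the three pieces $[0,1/3]$, $[1/3,2/3]$, $[2/3,1]$. Since the formulas agree at $\tau=1/3$ and $\tau=2/3$, a piecewise Lipschitz bound with a common constant extends to all of $[0,1]$ by splitting $[\tau,\tau']$ at the breakpoints and applying the triangle inequality.

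\textbf{Building-block bounds.} First, for real normalized $\ket\psi,\ket{\psi'}$ orthogonal to $\ket\blank$,
\[
H_\psi-H_{\psi'}=(\ket\blank-\ket{\psi'})(\bra\blank-\bra{\psi'})-(\ket\blank-\ket\psi)(\bra\blank-\bra\psi).
\]
A rank-two telescoping (writing $uu^\top-u'u'^\top=u(u-u')^\top+(u-u')u'^\top$ with $\|u\|=\|u'\|=\sqrt2$) gives $\|H_\psi-H_{\psi'}\|\leq2\sqrt2\,\|\psi-\psi'\|$. Second, $\|Q(\theta)-Q(\theta')\|=2|\sin((\theta-\theta')/2)|\leq|\theta-\theta'|$. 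Operations that are block diagonal over controls have norm difference equal to the maximum over blocks. This holds for $O_{\mathrm{enc}}$ over $(j,b)$ and for $O_{\mathrm{dec}}$ over $(j,y,c)$, and the identity and malformed parts cancel. For $O_{\mathrm{samp}}$ we use $\|\Psi(\tau)-\Psi(\tau')\|\leq\max_j\|\psi_j(\tau)-\psi_j(\tau')\|$, because $\|\Psi-\Psi'\|^2=N^{-1}\sum_j\|\psi_j-\psi'_j\|^2$.

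\textbf{Angles.} On $[0,1/3]$ and $[2/3,1]$ each $\beta_{j,y}(c)$ is an affine combination of two angles in $[0,\pi/2]$ with coefficient slope $3$, so $|\partial_\tau\beta|\leq3\cdot\pi/2$. On the middle piece $\beta$ is constant, and for $y\notin\{s^A_j,s^B_j\}$ it is constant everywhere; the case $s_j^A=s_j^B$ is identical. So $O_{\mathrm{dec}}$ is Lipschitz with constant $3\pi/2$.

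\textbf{States (main obstacle).} On the outer pieces $\ket{\psi_j}$ and $\ket{\phi_{j,b}}$ are constant. On the middle piece, with $\theta=\pi(3\tau-1)/2$ (so $d\theta/d\tau=3\pi/2$), $\ket{g_j}=\cos\theta\ket{s^A}+\sin\theta\ket{s^B}$ is a unit-speed curve in $\theta$ when $s^A\neq s^B$, and constant otherwise. The delicate step is $\ket{\phi_{j,b}(\tau)}=w(\theta)/\|w(\theta)\|$ with $w=\cos\theta\,\phi^A+\sin\theta\,\phi^B$. Because the sets $A_{j,b},B_{j,b}$ consist of basis states and have equal size, the overlap $\kappa=\langle\phi^A|\phi^B\rangle$ lies in $[0,1]$. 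Hence $\|w\|^2=1+\kappa\sin2\theta\geq1$ for $\theta\in[0,\pi/2]$, and this nonnegativity of the overlap is exactly what prevents the normalization from blowing up. Also $\|w'\|\leq2$, so the derivative of $w/\|w\|$, which is at most $2\|w'\|/\|w\|$, is bounded by $4$, uniformly in $n$ and in the tables. If $\kappa=1$ the state is simply constant.

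Combining, $v_0=\max\{3\pi/2,\ 2\sqrt2\cdot4\cdot3\pi/2\}$ works for all three operations. It is independent of $n$, $A$, $B$, and also independent of whether the rows are good.
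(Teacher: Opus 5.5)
Your proof is correct and follows essentially the same route as the paper. Both arguments combine the two building-block bounds $\|Q(\theta)-Q(\theta')\|\le|\theta-\theta'|$ and $\|H_\psi-H_\phi\|\le2\sqrt2\,\|\ket\psi-\ket\phi\|$ with bounded $\tau$-derivatives of the angles and states on each of the three pieces, orthogonality across p-keys for $\ket\Psi$, the maximum over direct-sum blocks, and the triangle inequality at the joins. You go slightly further than the paper in two places: you explain why the normalizer of $\ket{\phi_{j,b}(\tau)}$ is at least one (the overlap $\langle\phi^A_{j,b}|\phi^B_{j,b}\rangle=|A_{j,b}\cap B_{j,b}|/M$ is nonnegative and $\sin 2\theta\ge 0$ on the middle piece), and you give an explicit constant, both of which the paper leaves implicit.
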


\begin{proof}
For the response and state-preparation reflections,
\[
\|Q(\theta)-Q(\theta')\|\le |\theta-\theta'|,
\qquad
\|H_\psi-H_\phi\|\le 2\sqrt 2\,\|\ket\psi-\ket\phi\|.
\]
Within each of the intervals $[0,1/3]$, $[1/3,2/3]$, and $[2/3,1]$, the response angles $\beta_{j,y}(c)$ and s-key states $\ket{\psi_j}$ have derivatives in $\tau$ bounded by absolute constants. Orthogonality of the terms with distinct p-keys gives the same bound for $\ket\Psi$. On $[1/3,2/3]$, the numerator defining $\ket{\phi_{j,b}(\tau)}$ has a bounded derivative and a norm of at least one, so its normalized state also has a bounded derivative; outside this interval, the c-text states are constant. Direct sums take the maximum of component norms, and the formulas agree at the joins, so the triangle inequality extends the bound across the entire transition. 
\end{proof}

Next, we state the following lemma regarding the replacement of a random coordinate of an oracle distribution with independently sampled variables for each p-key, such as our base distribution $\mathcal P$.
\begin{lemma}[Replacing a random coordinate]
\label{lem:hidden-subset-entropy}
Let $\mathcal P=\mathcal P_W\prod_{j=1}^N\mathcal P_j$ and
$D(\mathcal R\Vert\mathcal P)\le L$.
Sample $(W,X_1,\ldots,X_N)$ from $\mathcal R$, independently choose
a uniform $h$-subset $H\subseteq[N]$, and choose a uniform $J\in H$,
where $1\le h\le N$.

Let $\mu$ be the distribution of $(W,X_{-H},H,J,X_J)$, where
$X_{-H}=(X_j)_{j\notin H}$.
Let $\mu'$ be obtained by replacing $X_J$ with an independent
sample from $\mathcal P_J$. Then
\begin{equation}
 D(\mu\Vert\mu')\le\frac Lh,
 \label{eq:hidden-subset-entropy}
\end{equation}
and consequently,
\begin{equation}
 \SD(\mu,\mu')\le\sqrt{\frac{L}{2h}}.
 \label{eq:hidden-subset-transfer}
\end{equation}
\end{lemma}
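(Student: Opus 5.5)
The plan is to reduce \eqref{eq:hidden-subset-entropy} to the chain rule (Lemma~\ref{lem:kl-chain-rule}) together with the product bound (Lemma~\ref{lem:kl-product}). We then obtain \eqref{eq:hidden-subset-transfer} from Pinsker's inequality (Lemma~\ref{lem:pinsker}).

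\textbf{Step 1: reduce to a conditional divergence.} The distributions $\mu$ and $\mu'$ have the same marginal on $(W,X_{-H},H,J)$: the replacement only changes the last coordinate. Under $\mu'$, the last coordinate is a fresh $\mathcal P_J$ sample, independent of everything else given $J$. Applying Lemma~\ref{lem:kl-chain-rule} with $U=(H,J,W,X_{-H})$ and $V=X_J$, the first term vanishes, so
\[
 D(\mu\Vert\mu')=\mathbb E_{H,J}\,\mathbb E_{(W,X_{-H})\sim\mathcal R}\,D\!\left(\mathcal R_{X_J\mid W,X_{-H}}\middle\Vert\mathcal P_J\right).
\]
Here we use that $(H,J)$ is sampled independently of $(W,X_1,\ldots,X_N)$. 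Hence, conditional on $H=\mathsf h$ and $J=j$, the law of $(W,X_{-\mathsf h},X_j)$ is just the $\mathcal R$-marginal.

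\textbf{Step 2: fix $H=\mathsf h$ and sum over $j\in\mathsf h$.} Apply Lemma~\ref{lem:kl-chain-rule} to $\mathcal R$ versus $\mathcal P$, splitting the variables into $(W,X_{-\mathsf h})$ and $X_{\mathsf h}$. The key observation is that under the product measure $\mathcal P$, the conditional law of $X_{\mathsf h}$ given $(W,X_{-\mathsf h})$ is exactly $\prod_{j\in\mathsf h}\mathcal P_j$. Dropping the nonnegative first term then gives
\[
 \mathbb E_{(W,X_{-\mathsf h})}\,D\!\left(\mathcal R_{X_{\mathsf h}\mid W,X_{-\mathsf h}}\middle\Vert\textstyle\prod_{j\in\mathsf h}\mathcal P_j\right)\le D(\mathcal R\Vert\mathcal P)\le L.
\]
Next, for each fixed conditioning value, apply Lemma~\ref{lem:kl-product} to the conditional joint law of $X_{\mathsf h}$. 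This lower-bounds the inner divergence by $\sum_{j\in\mathsf h}D(\mathcal R_{X_j\mid W,X_{-\mathsf h}}\Vert\mathcal P_j)$. Averaging over the uniform $J\in\mathsf h$ divides this sum by $h$, so the conditional quantity from Step~1 is at most $L/h$ for every $\mathsf h$. Averaging over $H$ then yields \eqref{eq:hidden-subset-entropy}.

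\textbf{Step 3: transfer to statistical distance.} Applying Lemma~\ref{lem:pinsker} to \eqref{eq:hidden-subset-entropy} gives $\SD(\mu,\mu')\le\sqrt{L/(2h)}$, which is \eqref{eq:hidden-subset-transfer}.

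\textbf{Main obstacle.} The only delicate point is Step~2. The hidden coordinate must be conditioned on $X_{-H}$, the coordinates outside the whole subset $H$, rather than on $X_{-J}$. If we conditioned on all other coordinates, the terms would not add up to at most $L$. Conditioning on $X_{-H}$ is exactly what makes Lemma~\ref{lem:kl-product} applicable without assuming any independence under $\mathcal R$. Beyond that, we only need finiteness of $L$, so that the conditional divergences are well defined, and the existence of regular conditional distributions, which is immediate in the finite or standard Borel spaces used here.
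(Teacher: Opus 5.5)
Your proposal is correct and follows essentially the same route as the paper. Both arguments use the chain rule with $(W,X_{-H})$ as the conditioning variables and apply the product bound to the conditional law of $X_H$. Both then average over the uniform $J\in H$ to gain the factor $1/h$, use a second chain-rule application to identify this average with $D(\mu\Vert\mu')$, and finish with Pinsker. The only difference is the order: you identify $D(\mu\Vert\mu')$ with the averaged conditional divergence first and then bound it, whereas the paper bounds first and identifies afterward.
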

\begin{proof}
Fix $H$. Under $\mathcal P$, the coordinates $X_H$ are independent of $(W,X_{-H})$ and have distribution $\prod_{j\in H}\mathcal P_j$. The chain rule in Lemma~\ref{lem:kl-chain-rule}, together with nonnegativity of relative entropy, therefore gives
\[
\mathbb E_{W,X_{-H}}
D\!\left(
\mathcal R_{X_H\mid W,X_{-H}}
\middle\Vert
\prod_{j\in H}\mathcal P_j
\right)
\leq D(\mathcal R\Vert\mathcal P)
\leq L.
\]Apply Lemma~\ref{lem:kl-product} to each conditional distribution of $X_H$, and average over the conditioning variables. This yields
\[
\sum_{j\in H}
\mathbb E_{W,X_{-H}}
D\!\left(
\mathcal R_{X_j\mid W,X_{-H}}
\middle\Vert
\mathcal P_j
\right)
\leq L.
\]Since $J$ is uniform in $H$, averaging over $J$ divides this sum by $h$. Averaging also over $H$ proves~\eqref{eq:hidden-subset-entropy}.
By construction, $\mu$ and $\mu'$ have the same marginal on $(W,X_{-H},H,J)$. Applying the chain rule once more gives
\[
D(\mu\Vert\mu')
=
\mathbb E_{H,J,W,X_{-H}}
D\!\left(
\mathcal R_{X_J\mid W,X_{-H}}
\middle\Vert
\mathcal P_J
\right)
\leq \frac{L}{h}.
\]
Finally, combining this with Pinsker's inequality (Lemma~\ref{lem:pinsker}) yields the desired Eq~\eqref{eq:hidden-subset-transfer}. Adjoining independent samples preserves this distance, and applying the same classical or quantum experiment cannot increase the difference between its output probabilities.
\end{proof}


\subsubsection{Construction of \textsf{qPKE} and Point-function Obfuscation}
\label{sec:primitive-constructions}

We now define the two primitives supported by the oracle. Keys, ciphertexts, and obfuscated programs are classical strings; the algorithms use the quantum $O_{\mathrm{samp}}$ and $O_{\mathrm{dec}}$ operations above.

\paragraph{\textbf{The Public Key Encryption Scheme.}} At security parameter $\lambda$, use the operations at length $n=\lambda$:
\begin{itemize}
\item $\KeyGen(1^\lambda)$ applies $O_{\mathrm{samp},\lambda}$ to blank and measures $(j,y)$. It outputs $\mathsf{pk}=j$ and $\mathsf{sk}=(j,y)$.
\item $\Enc(j,b)$ applies $O_{\mathrm{enc},\lambda}$ to $\ket{j,b,\blank}$, measures the c-text $c$, and outputs the ciphertext $(j,c)$.
\item $\Dec((j,y),(j,c))$ applies $O_{\mathrm{dec},\lambda}$ to $\ket{j,y,c,0}$ and measures the response bit.
\end{itemize}
Define encryption on malformed inputs to return a fixed ciphertext, and decryption to return zero if an input is malformed or the p-keys in the secret key and ciphertext differ.

\begin{lemma}[\textsf{qPKE} correctness under a relative-entropy bound]
\label{lem:pke-correctness}
Let $\mathcal P_n$ be the base distribution of oracle descriptions at length $n=\lambda$. For any potentially different distribution $\mathcal R$ on these descriptions satisfying $D(\mathcal R\Vert\mathcal P_n)\leq L$, it holds that for every $b\in\{0,1\}$, the expected honest decryption error of the \textsf{qPKE} construction, averaged over $\mathcal R$ satisfies
\[
\mathbb E_{\mathcal O\leftarrow\mathcal R}
\left[
\Pr_{\substack{
(\mathsf{pk},\mathsf{sk})\leftarrow\mathsf{KeyGen}^{\mathcal O}(1^\lambda)\\
\mathsf{ct}\leftarrow\mathsf{Enc}^{\mathcal O}(\mathsf{pk},b)
}}
\left[
\mathsf{Dec}^{\mathcal O}(\mathsf{sk},\mathsf{ct})\ne b
\right]
\right]
\leq \frac{4M^2}{Z}+\sqrt{\frac{L}{2N}}.
\]

\end{lemma}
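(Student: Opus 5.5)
Under $\mathcal R$ the honest run of $\KeyGen$, $\Enc$, $\Dec$ can only err at the sampled p-key $j$, and only if the pair of rows at $j$ is bad; the goal is to bound the $\mathcal R$-probability of that event. First I would note that the decryption error depends only on the pair of rows at the sampled p-key. By the uniform p-key property of Observation~\ref{obs:oracle-properties}, $\KeyGen$ outputs a uniform $j\in[N]$ regardless of the tables and of $\tau$. Conditioned on $j$, the s-key $y$ is obtained by measuring $\ket{\psi_j}$, and the c-text $c$ by measuring $\ket{\phi_{j,b}}$. The correctness bullet of Observation~\ref{obs:oracle-properties} then gives $\beta_{j,y}(c)=\pi b/2$ for all such $y,c$ whenever the pair of rows at $j$ is good. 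In that case decryption succeeds with probability exactly one. So, for a fixed oracle, the error probability is at most
\[
\frac1N\sum_{j\in[N]}\mathbf 1\{\text{pair of rows at }j\text{ is bad}\},
\]
i.e.\ the probability that a uniform $J$ has a bad pair. One detail to verify is that in case $2$, with $c\in A_{j,b}\setminus\mathcal B_j$ or $c\in B_{j,b}\setminus\mathcal A_j$, $\gamma_j(c)=\pi b/2$ holds when the pair is good.

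Next I would transfer from $\mathcal R$ to $\mathcal P_n$ using Lemma~\ref{lem:hidden-subset-entropy}. Take $W=\tau$ and let $X_j$ be the pair of rows at p-key $j$. Under $\mathcal P_n$ these are independent, so $\mathcal P_n=\mathcal P_W\prod_j\mathcal P_j$ has the required product form. Apply the lemma with $h=N$, so $H=[N]$ and $J$ is uniform. The distribution $\mu$ of $(W,H,J,X_J)$ under $\mathcal R$ is then within statistical distance $\sqrt{L/(2N)}$ of $\mu'$, in which $X_J$ is a fresh sample from $\mathcal P_J$. The badness event is a function of $X_J$ alone. Under $\mu'$ its probability is at most $4M^2/Z$, by the bound on bad pairs stated in the construction (a union bound on collisions between $\mathcal A_j$ and $\mathcal B_j$). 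Hence under $\mu$ it is at most $4M^2/Z+\sqrt{L/(2N)}$, which is the claimed bound after averaging over $\mathcal O\leftarrow\mathcal R$.

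The main obstacle is the first step, namely checking that a good pair really gives deterministic correct decryption for every $\tau$. That is, in each $\tau$-regime the measured $y$ lies in $\{s_j^A,s_j^B\}$, the measured $c$ lies in $A_{j,b}\cup B_{j,b}$, and the matching angle is exactly $\pi b/2$. In the interpolating regime $1/3\le\tau\le2/3$, the measured pair $(y,c)$ may mix tables, for instance $y=s_j^B$ with $c\in A_{j,b}$. Correctness there relies on the shared angle $\gamma_j$ being table-independent for c-texts in exactly one table. In the outer regimes it relies on the linear interpolation between $\alpha^A_j$ and $\gamma_j$, which agree on $A_{j,b}$ for a good pair. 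I would also handle the case $s_j^A=s_j^B$ separately; it is covered by the single-angle formula. The rest is direct, since the statistical-distance bound applies because the error event is determined by $(J,X_J)$.
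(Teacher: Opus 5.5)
Your proposal is correct and follows essentially the same route as the paper: bound the error for a fixed oracle by the fraction of bad pairs of rows, using the uniform p-key and exact correctness at good pairs from Observation~\ref{obs:oracle-properties}. Then apply Lemma~\ref{lem:hidden-subset-entropy} with $h=N$ and $W=\tau$, together with the $4M^2/Z$ union bound under the base distribution. Your explicit case-by-case check of the decryption angles across the three $\tau$-regimes, including $s_j^A=s_j^B$, just fills in what the paper takes directly from that observation.
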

\begin{proof}
Observation~\ref{obs:oracle-properties} gives exact honest decryption whenever the sampled pair of rows is good. Since key generation chooses the p-key uniformly for every oracle description, the decryption error for either message bit is at most the fraction of bad pairs of rows. Apply Lemma~\ref{lem:hidden-subset-entropy} with $h=N$, taking the coordinates to be the $N$ pairs of rows and $W=\tau$ to be the interpolation parameter. Replacing a uniformly chosen pair by an independent base sample changes its probability of being bad by at most $\sqrt{L/(2N)}$. Under the base distribution, a union bound over the four cross-table pairs of c-text sets gives probability at most $4M^2/Z$ that $\mathcal A_J\cap\mathcal B_J\ne\varnothing$. Thus the expected fraction of bad pairs under $\mathcal R$, and hence the expected decryption error, is at most the stated bound.
\end{proof}

\paragraph{\textbf{Point-function obfuscation.}}
At security parameter $\lambda$, use the oracle block at length $n=2\lambda$. Write each p-key as $j=(j_1,j_2)$, where $j_1,j_2\in\bits^\lambda$ are its prefix and suffix, respectively. The algorithms are as follows.
\begin{itemize}
\item $\Obf(1^\lambda,a)$: run $\KeyGen(1^{2\lambda})$ to obtain p-key $j=(j_1,j_2)$ and secret key $(j,s)$, then run $\Enc(j,1)$ to obtain ciphertext $(j,c)$. Output $p=(j_1,d,s,c)$, where $d=j_2\oplus a$.
\item $\Eval(p,x)$: for $p=(j_1,d,s,c)$, set $j_x=(j_1,x\oplus d)$. Apply $O_{\mathrm{dec},2\lambda}$ at controls $(j_x,s,c)$ to each of $\lambda$ response qubits initialized to zero. Output one exactly when all measured responses are one. Reject malformed programs or inputs.
\end{itemize}
For $x=a$, evaluation uses the original p-key $j$. The suffix $j_2$ appears in $p$ only through $d$.

\paragraph{Oracle data for p-keys with a common prefix.}
There are $G=2^\lambda$ possible values of $j_1$ and, for each $j_1$, there are $C=2^\lambda$ possible values of $j_2$, giving $N=GC=2^{2\lambda}$ p-keys. Identifying $\bits^\lambda$ with $[2^\lambda]$, we write $j_1\in[G]$ and $j_2\in[C]$. At this length, an s-key lies in $[Y]$, where $Y=2^{16\lambda}$, and each c-text set contains $M=2^{4\lambda}$ elements of $[Z]$, where $Z=2^{24\lambda}$. For each $j_1\in[G]$, collect the $C$ pairs of rows corresponding to p-keys with that prefix into the tuple
\[
\begin{aligned}
 X_{j_1}=\Bigl(&
 (s^A_{(j_1,j_2)},A_{(j_1,j_2),0},A_{(j_1,j_2),1}),\\
 & (s^B_{(j_1,j_2)},B_{(j_1,j_2),0},B_{(j_1,j_2),1})
 \Bigr)_{j_2\in[C]}.
\end{aligned}
\]
For a fixed program $p=(j_1,d,s,c)$, varying the input $x$ changes only the suffix of $j_x$, so every evaluation uses decryption data from $X_{j_1}$. Under $\mathcal P$, the tuples $X_1,\ldots,X_G$ are independent and identically distributed, and independent of $\tau$ and the other oracle blocks.

Call $X_{j_1}$ good if its $2C$ s-key values are all distinct and $\mathcal A_{(j_1,j_2)}\cap\mathcal B_{(j_1,j_2)}=\varnothing$ for every $j_2\in[C]$; otherwise call it bad. Define the fraction of bad tuples by
\[
 \rho_\lambda=\frac1G\sum_{j_1\in[G]}
 \mathbf 1\{X_{j_1}\text{ is bad}\}.
\]

\begin{lemma}[Point-function obfuscation correctness under a relative-entropy bound]
\label{lem:vbb-correctness}
Let $\mathcal P_n$ be the base distribution of oracle descriptions at length $n=2\lambda$, and let $\mathcal R$ be a distribution on these descriptions satisfying $D(\mathcal R\Vert\mathcal P_n)\leq L$. Define
\[
 \epsilon^{\textsf{VBB}}(\mathcal O)
 =\max_{a\in\bits^\lambda}
   \mathbb E_{p\leftarrow\Obf^{\mathcal O}(1^\lambda,a)}
   \max_{x\in\bits^\lambda}
   \Pr[\Eval^{\mathcal O}(p,x)\ne P_a(x)].
\]
Then
\[
 \mathbb E_{\mathcal O\leftarrow\mathcal R}
   \epsilon^{\textsf{VBB}}(\mathcal O)
 \leq O\left(\frac{C^2}{Y}+\frac{CM^2}{Z}
             +\sqrt{\frac{L}{G}}\right)+2^{-\lambda}.
\]
For every fixed oracle, $\epsilon^{\textsf{VBB}}(\mathcal O)\leq\rho_\lambda+2^{-\lambda}$.
\end{lemma}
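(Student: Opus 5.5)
The plan is to prove the fixed-oracle bound $\epsilon^{\textsf{VBB}}(\mathcal O)\leq\rho_\lambda+2^{-\lambda}$ first, and then average it over $\mathcal R$ using Lemma~\ref{lem:hidden-subset-entropy}.

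\textbf{Fixed oracle.} Fix $a$. Obfuscation runs $\KeyGen(1^{2\lambda})$, which by the uniform p-key property of Observation~\ref{obs:oracle-properties} yields a uniform $j=(j_1,j_2)$, so the prefix $j_1$ is uniform in $[G]$. Condition on $X_{j_1}$ being good; this fails with probability exactly $\rho_\lambda$ over $j_1$. Given goodness, the row pair at $j$ is good, so by the correctness item of Observation~\ref{obs:oracle-properties} the sampled $s$ and c-text $c$ for bit $1$ give angle $\pi/2$ at $j$. Hence on $x=a$ all $\lambda$ responses are one with certainty.

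For $x\neq a$, evaluation uses $j_x=(j_1,j_2')$ with $j_2'\ne j_2$. The sampled $s$ equals $s^A_j$ or $s^B_j$: it is one of the two only, and when $s^A_j=s^B_j$ that common value is still a single key at $j$. Since the $2C$ s-keys in a good $X_{j_1}$ are distinct, $s\notin\{s^A_{j_x},s^B_{j_x}\}$. Therefore $\beta_{j_x,s}(c)=\pi/4$ by the default response, each response is one with probability $1/2$, and all $\lambda$ are one with probability $2^{-\lambda}$.

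Since on good tuples the bound holds for every $x$ and every sampled $(s,c)$, the max over $x$ inside the expectation is harmless. This gives $\epsilon^{\textsf{VBB}}(\mathcal O)\leq\rho_\lambda+2^{-\lambda}$, where the bound is uniform in $a$.

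\textbf{Averaging.} It remains to bound $\mathbb E_{\mathcal R}\rho_\lambda$. Under $\mathcal P_n$, the oracle description factors as $\mathcal P_W\prod_{j_1\in[G]}\mathcal P_{j_1}$ with $W=\tau$ and coordinates $X_{j_1}$. Apply Lemma~\ref{lem:hidden-subset-entropy} with $h=G$, so $H=[G]$ and $J$ is uniform. The marginal of $X_J$ under $\mathcal R$ is within statistical distance $\sqrt{L/(2G)}$ of a fresh $\mathcal P_{J}$ sample. Then $\mathbb E_{\mathcal R}\rho_\lambda=\Pr_{\mathcal R,J}[X_J\text{ bad}]$ is at most the base probability of badness plus $\sqrt{L/(2G)}$.

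Under the base distribution, a union bound over the $\binom{2C}{2}$ pairs of independent uniform s-keys gives collision probability $O(C^2/Y)$. Summing the good-pair failure $4M^2/Z$ from before over the $C$ suffixes contributes $O(CM^2/Z)$.

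\textbf{Main obstacle.} No step is deep. The point needing care is justifying that, for $x\ne a$, the sampled $s$ is a valid s-key \emph{only} at $j$ and not at any other p-key with the same prefix. This is exactly what the distinctness condition in goodness buys. In particular, the default-angle rule applies regardless of $\tau$ and of the superposed $\ket{g_j(\tau)}$, because the measured $s$ lies in $\{s^A_j,s^B_j\}$.
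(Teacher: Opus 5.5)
Your proposal is correct and takes essentially the same approach as the paper. You prove the fixed-oracle bound from goodness of $X_{j_1}$: acceptance at $a$ is certain by the correctness item of Observation~\ref{obs:oracle-properties}, and at $j_x\neq j$ the angle is the default $\pi/4$ because the $2C$ s-keys are distinct. You then average using Lemma~\ref{lem:hidden-subset-entropy} with $h=G$ and $W=\tau$, together with the union bound $\binom{2C}{2}/Y+4CM^2/Z$ under the base distribution.
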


\begin{proof}
A program $p=(j_1,d,s,c)$ with good $X_{j_1}$ accepts its point $a$ with certainty: at the original p-key $j=(j_1,j_2)$, Observation~\ref{obs:oracle-properties} makes $O_{\mathrm{dec}}$ return one on the published s-key $s$ and c-text $c$, so all repetitions accept. At an input $x\ne a$, the evaluator uses $j_x=(j_1,x\oplus d)\ne j$. The published s-key differs from both s-keys at $j_x$, so $\beta_{j_x,s}(c)=\pi/4$ and the acceptance probability is $2^{-\lambda}$. This holds for every input to the program. Since $O_{\mathrm{samp}}$ chooses $j_1$ uniformly for every oracle description, the maximum over points of the average program error is at most $\rho_\lambda+2^{-\lambda}$.

Apply Lemma~\ref{lem:hidden-subset-entropy} to $X_1,\ldots,X_G$, with $h=G$ and $W=\tau$. For a uniform $j_1\in[G]$, replacing $X_{j_1}$ by an independent sample from $\mathcal P_{X_{j_1}}$ changes its probability of being bad by at most $\sqrt{L/(2G)}$. Under $\mathcal P$, a union bound over s-key collisions and bad pairs of rows gives
\[
 \mathbb E_{\mathcal R}\rho_\lambda
 \leq \frac{\binom{2C}{2}}{Y}+\frac{4CM^2}{Z}
       +\sqrt{\frac{L}{2G}}.
\]
Combining the two bounds completes the proof of the lemma.
\end{proof}

The correctness requirement in Definition~\ref{vbb:def} also suffices for coherent adaptive evaluation of the same honestly generated program. For a fixed program with maximum error $\epsilon_p$, keep the input register containing $x$, compute and copy the answer, and uncompute. {The resulting operation approximates the point oracle within norm $2\sqrt{\epsilon_p}$.} Since the input register containing $x$ is retained, the same bound holds on superpositions, including inputs entangled with other registers. Jensen's inequality and a polynomial query hybrid give negligible average error for polynomially many adaptive evaluations of one program.

Correctness averages over honestly generated programs; it need not hold for every program string.

\Cref{sec:pke-security,sec:point-obf} prove security for these constructions under distributions with bounded relative entropy from the base law.

\subsection{Security of \textsf{qPKE} and Point-function Obfuscation}
\label{sec:primitive-security}

We prove security for the constructions of Section~\ref{sec:primitive-constructions} under a general class of distributions including the distribution $\mathcal P$. For each primitive, we first prove security with respect to $\mathcal{P}$, and then transfer the bound to any distribution $\mathcal R$ with bounded relative entropy from $\mathcal{P}$. 

\subsubsection{\textsf{qPKE}}
\label{sec:pke-security}

At length $n$, the two table rows at p-key $j$ from Section~\ref{sec:construction} form the pair of rows
\[
 X_j=\bigl((s_j^A,A_{j,0},A_{j,1}),\,(s_j^B,B_{j,0},B_{j,1})\bigr).
\]
Let $W$ contain the interpolation parameter $\tau$ and the accessible oracle data at other lengths. The reference distribution is $\mathcal P_W\prod_j\mathcal P_{X_j}$; operations at other lengths may be arbitrary functions of $W$. 

For an adversary $\mathcal A$, define
\[
 \begin{aligned}
 p_{\mathcal A,b}^{\mathcal O}(j)
 &=\Pr\!\left[\mathcal A^{\mathcal O}\bigl(j,\Enc^{\mathcal O}(j,b)\bigr)=1\right],\\
 a_{\mathcal A}(\mathcal O)
 &=\frac1N\sum_{j=1}^N
   \left(p_{\mathcal A,0}^{\mathcal O}(j)-p_{\mathcal A,1}^{\mathcal O}(j)\right).
 \end{aligned}
\]
Here $\mathcal A^{\mathcal O}(j,\mathsf{ct})$ denotes the full adversarial execution, including preprocessing, with the public key $j$ and ciphertext $\mathsf{ct}$ delivered at their prescribed times. Thus $a_{\mathcal A}(\mathcal O)$ is the difference between the probabilities that the adversary outputs one in the two challenge experiments.

\begin{lemma}[\textsf{qPKE} with independently sampled  oracle data for the challenge p-key]
\label{lem:fresh-record-pke}
Fix the interpolation parameter $\tau$ as defined in \Cref{sec:base-oracle}, the challenge p-key $J$, and arbitrary external data, including all oracle data outside $X_J$. If $X_J$ is sampled independently according to the base distribution $\mathcal P$, then every adversary making at most $q$ queries, including preprocessing, satisfies
\[
 \left|\mathbb E_{X_J\leftarrow\mathcal P_{X_J}}
   \left[p_{\mathcal A,0}^{\mathcal O}(J)-p_{\mathcal A,1}^{\mathcal O}(J)\right]\right|
 \leq O\!\left(\frac q{\sqrt N}+\frac q{\sqrt Y}
          +\frac q{\sqrt M}+\frac{M^2}{Z}\right).
\]
\end{lemma}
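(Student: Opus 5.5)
We plan a hybrid argument over the oracle data at the challenge p-key $J$. Since everything outside $X_J$ is fixed, the only randomness that can depend on the challenge bit is $X_J=((s_J^A,A_{J,0},A_{J,1}),(s_J^B,B_{J,0},B_{J,1}))$ together with the challenge c-text $c^\ast$ measured from $\ket{\phi_{J,b}}$. First we condition on $X_J$ being a good pair of rows. The bad event has probability at most $4M^2/Z$ under $\mathcal P_{X_J}$, which accounts for the $M^2/Z$ term. The adversary interacts with the oracle only through $O_{\mathrm{samp}}$, $O_{\mathrm{enc}}$ and $O_{\mathrm{dec}}$ at length $n$; the $\PSPACE$ oracle and other lengths do not depend on $X_J$. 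We then run three hybrids, each bounded by the standard one-way-to-hiding / BBBV bound. That bound says that changing an oracle on a subspace whose expected query weight is $w$ changes the output by $O(q\sqrt{w})$.

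\textbf{Hybrid 1 (remove the $J$ component of $O_{\mathrm{samp}}$).} We replace $\ket\Psi$ by the state $\ket{\Psi'}$ in which the $J$-term $N^{-1/2}\ket J\ket{\psi_J}$ is dropped and the result is renormalised. Then $\|\ket\Psi-\ket{\Psi'}\|=O(N^{-1/2})$, and the bound $\|H_\psi-H_\phi\|\le 2\sqrt2\|\psi-\phi\|$ from the proof of Lemma~\ref{lem:edge-properties} gives an $O(q/\sqrt N)$ change per experiment. After this hybrid, $s_J^A$ and $s_J^B$ enter the experiment only through $O_{\mathrm{dec}}$ at controls $(J,y,\cdot)$ with $y\in\{s_J^A,s_J^B\}$.

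\textbf{Hybrid 2 (replace decryption at $J$ by the default rule).} We set $\beta_{J,y}(c)=\pi/4$ for all $y$ and $c$. Once the s-key data have been removed from $O_{\mathrm{samp}}$, the s-keys $s_J^A,s_J^B$ are uniform in $[Y]$ and independent of everything else the adversary sees. The only other correlated data are the c-text sets, and they do not involve $y$. Averaging over the two s-keys, the total query weight on $y\in\{s_J^A,s_J^B\}$ is at most $2q/Y$. The hybrid lemma therefore bounds this change by $O(q/\sqrt Y)$. The adaptive version, summing $\sqrt{\text{weight}}$ per query, gives the stated $q/\sqrt Y$ up to constants.

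\textbf{Hybrid 3 (decouple the challenge c-text from $O_{\mathrm{enc}}$).} Now $O_{\mathrm{dec}}$ at $J$ is independent of the tables, so the bit $b$ is revealed only through the correlation between $c^\ast$ and $O_{\mathrm{enc}}$ at $(J,0)$ and $(J,1)$. Here $c^\ast$ lies in the support of $\ket{\phi_{J,b}}$, which is a superposition over $A_{J,b}\cup B_{J,b}$. We replace the oracle's state $\ket{\phi_{J,b}}$ by the same state with the $c^\ast$ component removed and renormalised. This costs $O(1/\sqrt M)$ in state norm, because each amplitude is $O(M^{-1/2})$ on a good pair (the normalising denominator is at least $1$). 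With $q$ queries the change is $O(q/\sqrt M)$.

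\textbf{Conclusion and main obstacle.} After the three hybrids, we claim the joint distribution of ($c^\ast$, oracle) is symmetric in $b$. The argument is that the random disjoint pair $(A_{J,0},A_{J,1})$ is exchangeable, and likewise for $B$. Conditioned on $c^\ast$, the remaining sets are a uniform pair with $c^\ast$ removed from one side. That side is no longer visible, since the modified $O_{\mathrm{enc}}$ excludes $c^\ast$ and decryption is default. Swapping the roles of the two sets maps the $b=0$ experiment onto the $b=1$ experiment, so the advantage vanishes. We expect the main obstacle to be making this final symmetry and Hybrid~3 precise. The residual sets $A_{J,b}\setminus\{c^\ast\}$ have size $M-1$, not $M$. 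The interpolated state mixes the two tables, and $c^\ast$ may be drawn from either of them. Our first attempt would be to couple the experiments by a uniformly random relabeling of $[Z]$ that fixes $c^\ast$, and to absorb the size mismatch into the $O(q/\sqrt M)$ term.
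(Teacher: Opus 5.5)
Your conditioning on a good pair of rows and your first two hybrids match the paper's proof. Those hybrids drop the $J$-term of $\ket\Psi$ at cost $O(q/\sqrt N)$, then default $O_{\mathrm{dec}}$ at $J$ at cost $O(q/\sqrt Y)$ by averaging over the now-unused s-keys.

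\textbf{The gap is in Hybrid~3 and the symmetry you draw from it.} After you delete $c^\ast$ from the state at $(J,b)$, the set behind $O_{\mathrm{enc}}(J,b)$ has $M-1$ elements, while the set behind $O_{\mathrm{enc}}(J,1-b)$ has $M$. The adversary addresses $O_{\mathrm{enc}}$ by the bit. So exchangeability of $(A_{J,0},A_{J,1})$ does not identify the two experiments: for $b=0$ the index-$0$ set is the short one, and for $b=1$ the index-$1$ set is. Your claimed $b$-independence is therefore false as stated.

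There is a second problem in the middle regime. Write $\theta=\pi(3\tau-1)/2$ with $1/3<\tau<2/3$. Deleting one amplitude and renormalising gives coefficients $\cos\theta/\sqrt{M-\cos^2\theta}$ and $\sin\theta/\sqrt{M-\cos^2\theta}$. This is not the interpolated state of any pair of tables, so there is no ``modified table'' to which a symmetry argument could be applied. Your proposed repair, a relabeling of $[Z]$ fixing $c^\ast$, preserves set sizes and so cannot remove the mismatch.

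\textbf{The missing idea is to swap $c^\ast$ for a fresh c-text rather than delete it.} The paper does this through an exact re-description of the conditioned experiment:
\begin{itemize}
\item Sample the challenge c-text $C$ uniformly from $[Z]$.
\item Sample four pairwise disjoint $M$-subsets of $[Z]\setminus\{C\}$ as helper sets, independently of $b$.
\item Choose the table with the $\tau$-dependent probabilities, again independently of $b$.
\item Substitute $C$ for a uniform element $D$ of the selected helper set for message $b$, and output $(J,C)$.
\end{itemize}
Each resulting pair (four-set tuple, $C$) has exactly $Z-4M$ preimages, namely the choices of $D$ outside the four resulting sets. Hence this has the same distribution as honest sampling.

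The next hybrid simply omits the substitution while still outputting $(J,C)$. This changes one element of one set, so all sizes stay $M$ and the sets stay disjoint. The normalising denominator therefore remains $1$, the state keeps its interpolated form, and $\ket{\phi_{J,b}}$ moves by at most $\sqrt{2/M}$. That gives the $O(q/\sqrt M)$ term. In the resulting hybrid, $C$ is uniform and independent of the helper sets, which never depended on $b$, and $O_{\mathrm{samp}}$ and $O_{\mathrm{dec}}$ no longer use $X_J$. So the two experiments coincide exactly, with no residual size asymmetry to absorb.
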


\begin{proof}
For each $b\in\bits$, define hybrids $\mathsf H_b^{(0)},\ldots,\mathsf H_b^{(4)}$, each outputting the adversary's bit, and write $\pi_b^{(k)}=\Pr[\mathsf H_b^{(k)}=1]$. These probabilities average over $X_J$ and the experiment's randomness; $J$ and the external data remain fixed. All oracle changes apply throughout the execution, including preprocessing.

\textbf{Hybrid $\mathsf H_b^{(0)}$: the real experiment.} Sample $X_J\leftarrow\mathcal P_{X_J}$ and give the adversary public key $J$ and a ciphertext encrypting $b$. Thus $\pi_b^{(0)}=\mathbb E_{X_J}p_{\mathcal A,b}^{\mathcal O}(J)$.

\textbf{Hybrid $\mathsf H_b^{(1)}$: remove $J$ from $O_{\mathrm{samp}}$.} Modify $\mathsf H_b^{(0)}$ so that every adversarial $O_{\mathrm{samp}}$ query uses the state obtained by removing $N^{-1/2}\ket J\ket{\psi_J}$ from $\ket\Psi$ and normalizing the remainder. The removed squared norm is $1/N$, so the state and its reflection change by $O(N^{-1/2})$. Comparing the queries one at a time gives $|\pi_b^{(1)}-\pi_b^{(0)}|\leq O(q/\sqrt N)$.

\textbf{Hybrid $\mathsf H_b^{(2)}$: use default $O_{\mathrm{dec}}$ responses at $J$.} Modify $\mathsf H_b^{(1)}$ by setting every angle $\beta_{J,y}(c)$ to $\pi/4$. Only controls with $y\in\{s_J^A,s_J^B\}$ change. The query states in $\mathsf H_b^{(2)}$ are independent of these two uniform s-keys, so their expected squared weight on the changed controls is at most $2/Y$. Comparing queries using these states and applying Cauchy--Schwarz gives $|\pi_b^{(2)}-\pi_b^{(1)}|\leq O(q/\sqrt Y)$. $O_{\mathrm{samp}}$ and $O_{\mathrm{dec}}$ now use no data from $X_J$.

\textbf{Hybrid $\mathsf H_b^{(3)}$: condition on disjoint c-text sets.}
Run $\mathsf H_b^{(2)}$ conditioned on the four sets $A_{J,0},A_{J,1},B_{J,0},B_{J,1}$ being pairwise disjoint. This event fails with probability at most $4M^2/Z$, giving $|\pi_b^{(3)}-\pi_b^{(2)}|\leq4M^2/Z$. The sets are now uniform ordered disjoint $M$-subsets of $[Z]$. For $1/3\leq\tau\leq2/3$, measuring $\ket{\phi_{J,b}(\tau)}$ chooses table $A$ or $B$ with probabilities $\cos^2(\pi(3\tau-1)/2)$ and $\sin^2(\pi(3\tau-1)/2)$, followed by a uniform c-text from $A_{J,b}$ or $B_{J,b}$. For $\tau\leq1/3$, only $A$ is used; for $\tau\geq2/3$, only $B$ is used. The table choice is independent of $b$.

An equivalent way to sample $\mathsf H_b^{(3)}$ is to choose a uniform c-text $C\in[Z]$ and four uniform disjoint helper sets of size $M$ in $[Z]\setminus\{C\}$. In the set selected by the table choice and message $b$, replace a uniform c-text $D$ by $C$, and use $(J,C)$ as the challenge ciphertext. For each resulting tuple and selected $C$, the removed c-text $D$ can be any of the $Z-4M$ c-texts outside the four resulting sets, with equal probability. Hence this sampling description leaves the distribution unchanged.

\textbf{Hybrid $\mathsf H_b^{(4)}$: keep the helper sets unchanged.}
Use the preceding sampling description, but omit the replacement of $D$ by $C$; the challenge ciphertext is still $(J,C)$. The selected c-text state $\ket{\phi^A_{J,b}}$ or $\ket{\phi^B_{J,b}}$ changes by $\sqrt{2/M}$. For $1/3\leq\tau\leq2/3$, the normalization denominator is one because the sets remain disjoint, so the same bound holds for $\ket{\phi_{J,b}(\tau)}$. The $O_{\mathrm{enc}}$ reflections change by $O(M^{-1/2})$, giving $|\pi_b^{(4)}-\pi_b^{(3)}|\leq O(q/\sqrt M)$.

The distribution of $\mathsf H_b^{(4)}$ is independent of $b$: the challenge is $(J,C)$, the helper sets were sampled independently of $b$, and $O_{\mathrm{samp}}$ and $O_{\mathrm{dec}}$ do not use $X_J$. Thus $\pi_0^{(4)}=\pi_1^{(4)}$, and the triangle inequality gives
\[
 \begin{aligned}
 |\pi_0^{(0)}-\pi_1^{(0)}|
 &\leq\sum_{b\in\bits}\sum_{k=1}^4
       |\pi_b^{(k)}-\pi_b^{(k-1)}|\\
 &\leq O\!\left(\frac q{\sqrt N}+\frac q{\sqrt Y}
                  +\frac q{\sqrt M}+\frac{M^2}{Z}\right).
 \end{aligned}
\]
This is the claimed bound.
\end{proof}

\begin{lemma}[\textsf{qPKE} under a relative-entropy bound]
\label{lem:conditioned-pke}
Suppose $\mathcal R$ has relative entropy at most $L$ from $\mathcal P_W\prod_j\mathcal P_{X_j}$, and let $q\geq1$ bound the adversary's queries and honest preparation calls. For every integer $1\leq h\leq N$,
\[\numberthis
 \left|\mathbb E_{\mathcal O\leftarrow\mathcal R}a_{\mathcal A}(\mathcal O)\right|
 \leq O\left(q\sqrt{\frac hN}+\sqrt{\frac Lh}+\frac{M^2}{Z}\right).
 \label{eq:conditional-pke}
\]
Moreover, the same bound  with $L+\ln2$ in place of $L$, holds for $\mathbb E_{\mathcal O\leftarrow\mathcal R}|a_{\mathcal A}(\mathcal O)|$. 
\end{lemma}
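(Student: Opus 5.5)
We combine the hidden-subset replacement of Lemma~\ref{lem:hidden-subset-entropy} with the fresh-record bound of Lemma~\ref{lem:fresh-record-pke}. The obstacle is that in $a_{\mathcal A}(\mathcal O)$ the challenge p-key $J$ is uniform and the adversary's view is determined by $\mathcal O$. So we cannot directly replace $X_J$ by a fresh sample. The adversary might ``know'' which p-keys have atypical data under $\mathcal R$, and conditioning could correlate $X_J$ with the rest.

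\textbf{Step 1: hidden subset.} Sample a uniform $h$-subset $H\subseteq[N]$ and a uniform $J\in H$. Then $J$ is uniform in $[N]$, so
\[
\mathbb E_{\mathcal R}a_{\mathcal A}(\mathcal O)=\mathbb E_{\mathcal R,H,J}\bigl[p^{\mathcal O}_{\mathcal A,0}(J)-p^{\mathcal O}_{\mathcal A,1}(J)\bigr].
\]

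\textbf{Step 2: decouple the rest of $H$ from $\mathcal O$.} Lemma~\ref{lem:hidden-subset-entropy} controls the joint law of $(W,X_{-H},H,J,X_J)$, but the adversary's oracle also depends on the coordinates $X_{H\setminus\{J\}}$. I will first replace these, in the adversary's oracle, by data that need not be known: remove all of $H\setminus\{J\}$ from $\ket\Psi$ and set their $O_{\mathrm{dec}}$ angles to default, exactly as in hybrids $\mathsf H^{(1)},\mathsf H^{(2)}$. For $O_{\mathrm{enc}}$ at these p-keys I instead use independent base samples. Two facts make this work:
\begin{itemize}
\item The removed part of $\ket\Psi$ has squared norm $h/N$, so the $O_{\mathrm{samp}}$ change costs $O(q\sqrt{h/N})$.
\item For the $O_{\mathrm{dec}}$ change, $H$ is a uniform hidden subset, so the expected query weight on those p-keys is at most $h/N$, again giving $O(q\sqrt{h/N})$ by the standard hybrid argument.
\end{itemize}
The $O_{\mathrm{enc}}$ answers at $H\setminus\{J\}$ are the delicate part. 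If Lemma~\ref{lem:hidden-subset-entropy} is reapplied per coordinate with a chain rule, the entropy transfer for the whole of $H$ jointly is what is needed. Alternatively, I observe that $O_{\mathrm{enc}}$ takes $j$ as a control: querying it at $j\in H\setminus\{J\}$ with weight at most $h/N$ in expectation, because $H$ is hidden, also costs only $O(q\sqrt{h/N})$ to replace by a fixed default. I expect this step to be the main obstacle. One must make sure the adversary's queries are independent of $H$ before the changes; this is why all modifications are made simultaneously in a single hybrid and compared against the final oracle, which is independent of $H\setminus\{J\}$.

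\textbf{Step 3: swap in fresh challenge data, then finish.} After Step 2 the experiment depends on $\mathcal O$ only through $(W,X_{-H},H,J,X_J)$, whose law is $\mu$. Replacing $X_J$ by a fresh $\mathcal P_{X_J}$-sample yields $\mu'$. By \eqref{eq:hidden-subset-transfer}, each success probability changes by at most $\sqrt{L/(2h)}$. Under $\mu'$, Lemma~\ref{lem:fresh-record-pke} applies with the external data fixed: the extra p-keys being defaulted is harmless, since that lemma allows arbitrary external data. This gives $O(q/\sqrt N+q/\sqrt Y+q/\sqrt M+M^2/Z)$, and these terms are absorbed into $q\sqrt{h/N}+M^2/Z$ since $Y,M\ge N$. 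Summing the three steps gives \eqref{eq:conditional-pke}.

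\textbf{The absolute-value version.} Let $E=\{a_{\mathcal A}(\mathcal O)\ge0\}$, and let $\mathcal R_E$, $\mathcal R_{E^c}$ be $\mathcal R$ conditioned on $E$ and on its complement. Then
\[
\mathbb E|a|=\mathcal R(E)\,\mathbb E_{\mathcal R_E}a-\mathcal R(E^c)\,\mathbb E_{\mathcal R_{E^c}}a .
\]
Each conditioned law satisfies $D(\mathcal R_E\Vert\mathcal P)\le (L+H_b(\mathcal R(E)))/\mathcal R(E)$, where $H_b$ is the binary entropy. Applying the first bound to each term, multiplied by its weight, gives terms of the form $\mathcal R(E)\sqrt{(L+\ln 2)/(\mathcal R(E)h)}\le\sqrt{(L+\ln2)/h}$. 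This yields the stated variant.
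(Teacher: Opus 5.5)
Your proof is correct. For the main bound \eqref{eq:conditional-pke} it follows the paper's proof: a hidden uniform $h$-subset $H\ni J$, then a query-weight comparison costing $O(q\sqrt{h/N})$ to decouple the oracle from $X_{H\setminus\{J\}}$, then \Cref{lem:hidden-subset-entropy} to swap in a fresh $X_J$ at cost $\sqrt{L/(2h)}$, and finally \Cref{lem:fresh-record-pke}.

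There are two small points on the main bound.
\begin{itemize}
\item You delete $H\setminus\{J\}$ from $\ket\Psi$ and default $O_{\mathrm{dec}}$ there, whereas the paper resamples all of $X_{H\setminus\{J\}}$ from the base law. Resampling keeps the oracle exactly in the form \Cref{lem:fresh-record-pke} is stated for. Your version needs that lemma's proof rather than its statement, since $J$ now has weight $1/(N-h+1)$ in $\ket\Psi$. This is harmless, because the bound is trivial once $h>N/2$.
\item The BBBV comparison in your Step~2 must use the query states of the unmodified run. Your modified oracle depends on the set $H$, through which p-keys are defaulted. It is independent of the data $X_{H\setminus\{J\}}$, but not of $H$ itself.
\end{itemize}

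The genuinely different part is the absolute-value claim. The paper appends the non-computable sign bit $\sigma=\mathbf 1[a_{\mathcal A}(\mathcal O)<0]$ to $W$, paired with an independent uniform bit in the reference law at a cost of $\ln 2$. It then reruns the hybrids for an output-flipping adversary $\mathcal A'$ that reads $\sigma$.

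You instead keep $\mathcal A$ fixed and split $\mathcal R$ into its two conditionals on the sign event. You then use $D(\mathcal R_E\Vert\mathcal P)\leq (L+H_b(\rho))/\rho$ with $\rho=\mathcal R(E)$. This estimate is correct: it follows from the log-sum inequality applied on $E^c$. The factor $\rho$ then cancels inside the square root.

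Your route treats the first part as a black box and never hands an adversary oracle-dependent advice. The paper's route is a one-line chain-rule computation. Both extend to the maximizing point in \Cref{vbb:conditioned-security}: for yours, partition by point and sign. They are also quantitatively equivalent. Averaging your conditional bounds with Jensen uses $\sum_k\rho_k D(\mathcal R_{E_k}\Vert\mathcal P)=D(\mathcal R\Vert\mathcal P)+H(\rho)$, which is the same quantity the paper's chain rule produces.
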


\begin{proof}
For each $b\in\bits$, define hybrids $\mathsf H_b^{(0)},\mathsf H_b^{(1)},\mathsf H_b^{(2)}$, each outputting the adversary's bit. Write $\pi_b^{(k)}=\Pr[\mathsf H_b^{(k)}=1]$. All replacements below apply throughout the execution, including preprocessing.

\textbf{Hybrid $\mathsf H_b^{(0)}$: the real experiment under $\mathcal R$.} Draw the oracle description from $\mathcal R$ and choose $J$ uniformly, independently of it. Privately choose a uniform $h$-subset $H\subseteq[N]$ containing $J$. Give the adversary public key $J$ and an encryption of $b$. This is the real experiment: $O_{\mathrm{samp}}$ gives every p-key probability $1/N$, the secret key is not otherwise used, and $H$ has no effect. In particular, $\pi_0^{(0)}-\pi_1^{(0)}=\mathbb E_{\mathcal R}a_{\mathcal A}(\mathcal O)$. Comparing messages $0$ and $1$ suffices, since equal messages give no difference and reversing them can be absorbed into the adversary's output rule.

\textbf{Hybrid $\mathsf H_b^{(1)}$: replace the other pairs of rows in $H$.} Next, we want to use the guarantees of \Cref{lem:hidden-subset-entropy} to move to a hybrid where we replace the variables for the challenge p-key $J$ with independently sampled variables from $P$ in order to invoke \Cref{lem:fresh-record-pke}. However, note that \Cref{lem:hidden-subset-entropy} only ensures negligible statistical distance for records in the absence of the variables $X_{H\setminus\{J\}}=(X_j)_{j\in H\setminus\{J\}}$. Thus, we first need to replace the variables with independent samples, which is what we do next. Specifically, we sample them independently from the base distribution $\mathcal P$.

In particular, modify $\mathsf H_b^{(0)}$ by replacing $X_j$, for $j\in H\setminus\{J\}$, by independent samples according to the base distribution $\mathcal P$. The variable $H$ remains hidden from the adversaries. Each p-key contributes squared mass $1/N$ to $\ket\Psi$, so its state and $O_{\mathrm{samp}}$ reflection change by $O(\sqrt{h/N})$. $O_{\mathrm{enc}}$ and $O_{\mathrm{dec}}$ change only at p-keys in $H\setminus\{J\}$. Conditional on $J$, the original query states are independent of $H$, and their expected squared weight on these p-keys is at most $(h-1)/(N-1)$. Comparing queries using these states gives $|\pi_b^{(1)}-\pi_b^{(0)}|\leq O(q\sqrt{h/N})$, including superpositions over operation names and lengths. For $h=1$, no pair of rows changes and the cost is zero.

\textbf{Hybrid $\mathsf H_b^{(2)}$: replace the challenge pair of rows.} Modify $\mathsf H_b^{(1)}$ by replacing $X_J$ with an independent base sample. Both the oracle and the challenge ciphertext use this new pair of rows. To compare the hybrids, condition on $W$, the outside pairs of rows $X_{-H}$, and the independent replacement data. The original pairs of rows in $H\setminus\{J\}$ no longer occur in $\mathsf H_b^{(1)}$. Equivalently, $H$ was sampled uniformly first and $J$ uniformly within $H$. Lemma~\ref{lem:hidden-subset-entropy} bounds the change in their joint distribution, averaged over the conditioning data, and gives
\[
 |\pi_b^{(2)}-\pi_b^{(1)}|\leq\sqrt{\frac{L}{2h}}.
\]
The conditioning is used only for this comparison, after the preceding query bound; the conditioned data are not given to the adversary.

In $\mathsf H_b^{(2)}$, the challenge pair of rows is independent of all remaining data. Lemma~\ref{lem:fresh-record-pke} therefore gives
\[
 |\pi_0^{(2)}-\pi_1^{(2)}|
 \leq O\!\left(\frac q{\sqrt N}+\frac q{\sqrt Y}
                  +\frac q{\sqrt M}+\frac{M^2}{Z}\right).
\]
Consequently,
\[
 |\pi_0^{(0)}-\pi_1^{(0)}|
 \leq\sum_{b\in\bits}\sum_{k=1}^2
       |\pi_b^{(k)}-\pi_b^{(k-1)}|
       +|\pi_0^{(2)}-\pi_1^{(2)}|.
\]
Substituting the bounds and using $Y,M\geq N$ proves~\eqref{eq:conditional-pke}.

\textbf{Expected absolute difference.} Append $\sigma=\mathbf 1[a_{\mathcal A}(\mathcal O)<0]$ to $W$, using an independent uniform bit in the reference distribution. This increases the relative entropy by at most $\ln2$. Give $\sigma$ to an adversary $\mathcal A'$ that runs $\mathcal A$ and flips its output when $\sigma=1$. Then
\[
 a_{\mathcal A'}(\mathcal O)=(-1)^\sigma a_{\mathcal A}(\mathcal O)
 =|a_{\mathcal A}(\mathcal O)|.
\]
The bit $\sigma$ need not be computable; it is used only for this comparison. The hidden subset remains independent of the original query states, and in $\mathsf H_b^{(2)}$ the $X_J$ is independent of $\sigma$ as well. Applying the same hybrids with budget $L+\ln2$ bounds $\mathbb E_{\mathcal R}|a_{\mathcal A}(\mathcal O)|$.

\end{proof}

\subsubsection{Point-function Obfuscation}
\label{sec:point-obf}

We use $X_{j_1},G,C,N$ from \Cref{sec:primitive-constructions}. Let $W$ contain the interpolation parameter $\tau$ and the accessible oracle data at other lengths. The base distribution is $\mathcal P_W\prod_{k\in[G]}\mathcal P_{X_k}$.

\paragraph{The simulator.}
For an adversary $\mathcal A$, the simulator $\mathcal S$ has quantum access to the common oracle $\mathcal O$ and the point oracle $O_a$. It uses $O_{\mathrm{samp},2\lambda}$ to sample a key $(j_1,j_2',s)$, then uses $O_{\mathrm{enc},2\lambda}$ to sample a c-text $c$ encrypting one at p-key $(j_1,j_2')$. It independently chooses uniform $d\in\bits^\lambda$ and sets $p=(j_1,d,s,c)$. It then runs $\mathcal A$ internally, provides $p$ as the input program to the adversary, and answers its oracle queries as follows. 
{All queries are forwarded unchanged to the corresponding operation of $\mathcal O$, except encryption and decryption queries at length $2\lambda$ whose p-key prefix is $j_1$, which are handled as follows. These rules are applied coherently to each basis component of a query.}
\begin{itemize}
\item \textbf{$O_{\mathrm{Enc}}$ queries at $2\lambda$.} For an $O_{\mathrm{enc},2\lambda}$ query with p-key $(j_1,u)$ and message bit $b$, the simulator uses $O_a$ to test whether $u\oplus d=a$. In particular, it coherently computes $u\oplus d$ and runs it through $O_a$ coherently. Conditioned on accept, it applies $O_{\mathrm{Enc}}$ at p-key $(j_1,j_2')$ and the same bit $b$, and if the test rejects, it applies the original operation at $(j_1,u)$, coherently. 
\item \textbf{$O_{\mathrm{Dec}}$ queries at $2\lambda$.} For an $O_{\mathrm{dec},2\lambda}$ query with p-key $(j_1,u)$, s-key $y$, and c-text $c'$, the simulator performs the same test with $O_a$ to check if $u\oplus d= a$. If it accepts, it applies the decryption response operation of $\mathcal O$ at $(j_1,j_2',y,c')$. If it rejects, it applies the original operation at $(j_1,u,y,c')$. 
\end{itemize}

\begin{remark}[{Universality of the simulator.}]
 {The simulator is a uniform procedure that, given the code of $\mathcal A$ and an explicit polynomial running-time bound, runs $\mathcal A$ with polynomial overhead and returns its output bit. For each fixed uniform QPT adversary, this gives a uniform QPT simulator as required by Definition~\ref{vbb:def}.}
\end{remark}

The proof template for the simulation security is the same as that in the case of \textsf{qPKE} in \Cref{sec:pke-security}. We start with showing simulation security (as in \Cref{vbb:def}) when the p-key prefix corresponding to the challenge obfuscation has independently sampled oracle data, see \Cref{vbb:fresh-group}. Next, as done for \textsf{qPKE} in \Cref{lem:conditioned-pke},  we reduce the security game in the bounded relative-entropy distribution case to the independently sampled oracle data case, that establishes security in the bounded relative-entropy case, see \Cref{vbb:conditioned-security}. We note that the proof of \Cref{vbb:conditioned-security} is very similar to that \Cref{lem:conditioned-pke} with the main difference that instead of working over $N$ p-keys, we need to work with $G$ p-key prefixes, which results in similar bounds as in \Cref{lem:conditioned-pke} but with $N$ replaced with $G$.

For a fixed oracle $\mathcal O$ and point $a$, let $p_{\mathcal A,a}^{\mathcal O}(j_1)$ and $p_{\mathcal S,a}^{\mathcal O}(j_1)$ be the probabilities that the real and simulated experiments output one, respectively, conditioned on the prefix of the sampled p-key being $j_1$. This prefix is uniform in $[G]$ in both experiments. Thus, their overall difference is
\[
\begin{aligned}
 d_a(\mathcal O)
 & =\Pr[\mathcal A^{\mathcal O}(\Obf^{\mathcal O}(1^\lambda,a))=1]
   -\Pr[\mathcal S^{\mathcal O,O_a}(1^\lambda)=1]\\
 & =\frac1G\sum_{j_1\in[G]}
   \left(p_{\mathcal A,a}^{\mathcal O}(j_1)-p_{\mathcal S,a}^{\mathcal O}(j_1)\right).
\end{aligned}
\]

\begin{lemma}[Point-function obfuscation with independently sampled oracle data for the challenge p-key prefix]
\label{vbb:fresh-group}
Fix the interpolation parameter $\tau$, a value $j_1\in[G]$ for the prefix of the p-key, a point $a\in\bits^\lambda$, and arbitrary external data, including all oracle data outside $X_{j_1}$. If $X_{j_1}$ is sampled independently from $\mathcal P_{X_{j_1}}$, then every adversary satisfies
\[
 \left|\mathbb E_{X_{j_1}\leftarrow\mathcal P_{X_{j_1}}}
 \left[p_{\mathcal A,a}^{\mathcal O}(j_1)
       -p_{\mathcal S,a}^{\mathcal O}(j_1)\right]\right|
 \leq O\left(\frac q{\sqrt C}+\frac q{\sqrt N}\right),
\]
where $q$ bounds all oracle calls in the real and simulated experiments, including preparation and preprocessing.
\end{lemma}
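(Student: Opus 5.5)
The plan is to show that, once the challenge prefix data $X_{j_1}$ is freshly sampled, the simulated experiment differs from the real one only by a relabeling of i.i.d.\ rows, plus a small number of query-hybrid changes at p-keys the adversary cannot locate. Throughout, $j_1$, $\tau$, $a$, and all data outside $X_{j_1}$ are fixed. We average only over $X_{j_1}$, whose $C$ pairs of rows are i.i.d.\ under $\mathcal P_{X_{j_1}}$, and over the experiments' internal randomness.

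\textbf{Structure of the simulated oracle.} In the simulation, the program is $p=(j_1,d,s,c)$, with $(s,c)$ sampled at suffix $j_2'$ and $d$ uniform and independent. Every $O_{\mathrm{enc}}$ or $O_{\mathrm{dec}}$ query at suffix $u^\star:=a\oplus d$ is coherently redirected to suffix $j_2'$. All other suffixes, including $j_2'$ itself, are answered honestly. Hence the adversary effectively sees an oracle in which the row at $u^\star$ is a copy of the row at $j_2'$, the original row at $u^\star$ is unused by $O_{\mathrm{enc}}$ and $O_{\mathrm{dec}}$, and $O_{\mathrm{samp}}$ is untouched. In the real experiment, the program row sits at $j_2$ and $d=j_2\oplus a$. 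Thus $u^\star=j_2$ plays the role of the program's suffix, and $j_2$ is uniform because $O_{\mathrm{samp}}$ gives a uniform p-key. So both experiments place the program's row at the suffix $a\oplus d$ with $d$ uniform. The remaining differences are that the simulation duplicates that row at a hidden extra suffix $j_2'$, and that $O_{\mathrm{samp}}$ is inconsistent at two suffixes.

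\textbf{Hybrids.} In both experiments, I will first modify $O_{\mathrm{samp},2\lambda}$ for all adversarial queries by deleting from $\ket\Psi$ the components at the relevant p-keys and renormalizing. In the simulation these are $(j_1,j_2')$ and $(j_1,u^\star)$; in the real experiment this is $(j_1,j_2)$. Each deleted component has squared weight $1/N$, so as in Hybrid $\mathsf H^{(1)}$ of Lemma~\ref{lem:fresh-record-pke} this costs $O(q/\sqrt N)$ per experiment, with the honest preparation calls counted in $q$. I will also discard the event $j_2'=u^\star$, which has probability $1/C$ since $d$ is uniform.

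Next, in the simulation, I will replace the row that $O_{\mathrm{enc}}$ and $O_{\mathrm{dec}}$ use at the non-redirected suffix $(j_1,j_2')$ by a fresh independent sample from the base row distribution. The key point is that the adversary's view, with the replaced row, depends on $j_2'$ only through $(s,c)$, and those are reached only via the redirected suffix $u^\star$. Hence the query states in the modified hybrid are independent of $j_2'$, which is uniform over the $C-1$ suffixes other than $u^\star$. Their expected squared weight on controls at suffix $j_2'$ is therefore at most $1/(C-1)$. The standard one-query-at-a-time comparison with Cauchy--Schwarz gives a cost of $O(q/\sqrt C)$.

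After this step, the simulated oracle has an independent fresh row at $j_2'$ and the original row at $u^\star$ is never used. Because rows are i.i.d., relabeling shows that this oracle has the same joint distribution as the modified real experiment: a uniform program suffix $a\oplus d$ carrying the row used to generate $(s,c)$, independent fresh rows at every other suffix, and the same modified $O_{\mathrm{samp}}$. Summing the costs gives $O(q/\sqrt C+q/\sqrt N)$, where $1/C\le q/\sqrt C$.

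\textbf{Main obstacle.} I expect the delicate step to be the replacement at $(j_1,j_2')$. It must be argued that the adversary's queries are independent of the hidden suffix $j_2'$ even though the program was generated from that row, which is why the replacement has to be done in the hybrid where the row is already fresh. The $O_{\mathrm{samp}}$ deletion must be carried out first, since otherwise $O_{\mathrm{samp}}$ could leak $j_2'$ by outputting it together with an s-key. The coherent redirection must also be checked to be exactly the oracle obtained by copying a row, including on superpositions over $u$, $b$, $y$, and $c'$. This holds because $O_a$ is queried and uncomputed coherently, so the redirection acts as a controlled choice of unitary on each basis component.
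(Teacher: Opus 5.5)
Your route differs from the paper's and works in outline, but one step is wrong as written. In the simulated experiment you delete both $(j_1,j_2')$ and $(j_1,u^\star)$ from $\ket{\Psi}$. Deleting $(j_1,j_2')$ makes $O_{\mathrm{samp}}$ itself depend on $j_2'$. So in your hybrid with the fresh row, the adversary's view does \emph{not} depend on $j_2'$ ``only through $(s,c)$''. The query states are therefore not independent of $j_2'$, and the $1/(C-1)$ bound on their expected weight at suffix $j_2'$ does not follow.

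The same deletion breaks your final identification. After relabeling, the simulated hybrid's $O_{\mathrm{samp}}$ omits $(j_1,u^\star)$ \emph{and} a uniformly random suffix $j_2'$ whose fresh row is still used by $O_{\mathrm{enc}}$ and $O_{\mathrm{dec}}$. The modified real experiment omits only $(j_1,j_2)$, so the two $O_{\mathrm{samp}}$ operations are not the same.

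The repair is cheap. Delete only $(j_1,u^\star)$, and replace the row at the direct suffix $j_2'$ by the fresh sample in all three operations. The $O_{\mathrm{samp}}$ change affects one component of squared weight $1/N$, so it costs $O(q/\sqrt N)$ by the worst-case operator-norm bound, with no hiding argument. In the resulting hybrid the view is a function of $d$ and the relabeled i.i.d.\ rows alone, so its query states really are independent of $j_2'$. Your $O(q/\sqrt C)$ comparison for $O_{\mathrm{enc}},O_{\mathrm{dec}}$ at $j_2'$, using these query states, then goes through, and the relabeling matches the modified real experiment exactly.

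The paper avoids this per-key bookkeeping. It starts from the real experiment with an auxiliary uniform suffix $j_2'$, then proceeds in three steps:
\begin{enumerate}
\item It removes the \emph{entire} prefix $j_1$ from $O_{\mathrm{samp}}$, at cost $O(q/\sqrt G)$. This makes $O_{\mathrm{samp}}$ independent of all of $X_{j_1}$ and of $j_2'$.
\item It makes $O_{\mathrm{enc}},O_{\mathrm{dec}}$ at $J'=(j_1,j_2')$ reuse the row at $J=(j_1,a\oplus d)$, at cost $O(q/\sqrt C)$.
\item It restores $O_{\mathrm{samp}}$ for $\operatorname{Swap}_{J,J'}(\mathcal T)$, at another $O(q/\sqrt G)$.
\end{enumerate}
The final hybrid is literally the simulator run on the swapped tables, which are identically distributed. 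The paper thus accepts a $q/\sqrt G$ term, absorbed only because $G=C$, in exchange for a one-line identification. Your repaired version deletes single p-keys and gives the stated $O(q/\sqrt C+q/\sqrt N)$ without using $G=C$. In exchange, it needs the explicit i.i.d.\ relabeling and the extra $O_{\mathrm{samp}}$ step above.
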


\begin{proof}
The proof template is similar to that of the proof of \Cref{lem:fresh-record-pke}. The proof begins as in the proof of \Cref{lem:fresh-record-pke} by modifying the sampling oracle. However, in this case, the obfuscated program reveals the s-key, so the same argument for replacing decryption by default responses does not apply. Instead, we modify encryption and decryption at a p-key with the same prefix and an independently sampled suffix. Exchanging the corresponding pairs of rows then lets us identify the final hybrid with the simulated experiment.

Fix $d\in\bits^\lambda$, put $J=(j_1,a\oplus d)$, and write $\mathcal T=(A,B)$ for the tables at length $2\lambda$. Using~\eqref{eq:reflections}, write
\[
 U_{\mathrm{enc}}^{\mathcal T}(k,b)=H_{\phi_{k,b}^{\mathcal T}},
 \qquad
 U_{\mathrm{dec}}^{\mathcal T}(k,y,z)=Q(\beta_{k,y}^{\mathcal T}(z)).
\]
Define hybrids $\mathsf H^{(0)},\ldots,\mathsf H^{(3)}$, and write $\pi^{(r)}=\Pr[\mathsf H^{(r)}=1]$. Each hybrid changes only the specified operations. All replacements affect the adversary's queries throughout its execution, including preprocessing, at length $2\lambda$.

\textbf{Hybrid $\mathsf H^{(0)}$: the real experiment.} Sample $X_{j_1}\leftarrow\mathcal P_{X_{j_1}}$, independently choose uniform $j_2'\in[C]$, and put $J'=(j_1,j_2')$. Measure $\ket{\psi_J^{\mathcal T}}$ and $\ket{\phi_{J,1}^{\mathcal T}}$ separately to obtain $s,c$. Give $p=(j_1,d,s,c)$ to $\mathcal A$ with the oracle determined by $\mathcal T$.

\textbf{Hybrid $\mathsf H^{(1)}$: remove prefix $j_1$ from the sampling operation.} Replace $O_{\mathrm{samp},2\lambda}$ by $H_{\Psi_{-j_1}^{\mathcal T}}$, where $\ket{\Psi_{-j_1}^{\mathcal T}} =\frac1{\sqrt{N-C}} \sum_{\substack{k=(k_1,k_2)\in[N]\\k_1\ne j_1}} \ket{k}\ket{\psi_k^{\mathcal T}}.$

Measuring $\ket{\Psi^{\mathcal T}}$ gives prefix $j_1$ with probability $C/N=1/G$, giving $|\pi^{(1)}-\pi^{(0)}|\leq O(q/\sqrt G)$.

\textbf{Hybrid $\mathsf H^{(2)}$: change encryption and decryption at $J'$.} For each p-key $k$, replace $U_{\mathrm{enc}}^{\mathcal T}(k,b)$ and $U_{\mathrm{dec}}^{\mathcal T}(k,y,z)$ by
\[
 \widetilde U_{\mathrm{enc}}(k,b)=
 \begin{cases}
 U_{\mathrm{enc}}^{\mathcal T}(J,b),&k=J',\\
 U_{\mathrm{enc}}^{\mathcal T}(k,b),&k\ne J',
 \end{cases}
\]
\[
 \widetilde U_{\mathrm{dec}}(k,y,z)=
 \begin{cases}
 U_{\mathrm{dec}}^{\mathcal T}(J,y,z),&k=J',\\
 U_{\mathrm{dec}}^{\mathcal T}(k,y,z),&k\ne J'.
 \end{cases}
\]
The queries in $\mathsf H^{(1)}$ are independent of $j_2'$, so measuring a query's p-key would give $J'$ with probability at most $1/C$, averaged over $j_2'$. Thus $|\pi^{(2)}-\pi^{(1)}|\leq O(q/\sqrt C)$.

\textbf{Hybrid $\mathsf H^{(3)}$: restore sampling with the exchanged pairs of rows.} Set $\mathcal T'=\operatorname{Swap}_{J,J'}(\mathcal T)$, exchanging the pairs of rows at $J,J'$ in both tables, and replace the sampling operation by $H_{\Psi^{\mathcal T'}}$. Since $\ket{\Psi_{-j_1}^{\mathcal T'}}=\ket{\Psi_{-j_1}^{\mathcal T}}$, the same bound as in $\mathsf H^{(1)}$ gives $|\pi^{(3)}-\pi^{(2)}|\leq O(q/\sqrt G)$.

We have $\ket{\psi_J^{\mathcal T}}=\ket{\psi_{J'}^{\mathcal T'}}$ and $\ket{\phi_{J,1}^{\mathcal T}}=\ket{\phi_{J',1}^{\mathcal T'}}$. For either encryption or decryption, omitting $b$ or $(y,z)$ from the notation,
\[
 \widetilde U(k)=
 \begin{cases}
 U^{\mathcal T'}(J'),&k=J,\\
 U^{\mathcal T'}(k),&k\ne J.
 \end{cases}
\]
Thus $\mathsf H^{(3)}(\mathcal T;d,j_2')\equiv\mathsf{Sim}(\mathcal T';d,j_2')$. For every fixed $d,j_2'$, the tuple $X'_{j_1}$ has distribution $\mathcal P_{X_{j_1}}$, since its pairs of rows are independent and identically distributed. Hence $\mathsf H^{(3)}$ has the simulated distribution.

The real experiment has uniform $d$. Averaging over $d$, applying the triangle inequality, and using $G=C$ gives
\[
\begin{aligned}
 &\left|\mathbb E_{X_{j_1}\leftarrow\mathcal P_{X_{j_1}}}
 \left[p_{\mathcal A,a}^{\mathcal O}(j_1)
       -p_{\mathcal S,a}^{\mathcal O}(j_1)\right]\right|\\
 &\qquad\leq O\left(\frac q{\sqrt G}+\frac q{\sqrt C}\right)
 =O\left(\frac q{\sqrt C}\right),
\end{aligned}
\]
which implies the stated bound.
\end{proof}

\begin{lemma}[Point-function obfuscation under a relative-entropy bound]
\label{vbb:conditioned-security}
Suppose $\mathcal R$ has relative entropy at most $L$ from $\mathcal P_W\prod_{k\in[G]}\mathcal P_{X_k}$. For every uniform QPT adversary $\mathcal A$, let $q\geq1$ bound all oracle calls in the real and simulated experiments, including preparation and preprocessing. Then, for every point $a\in\bits^\lambda$ and every integer $1\leq h\leq G$,
\[
 \left|\mathbb E_{\mathcal O\leftarrow\mathcal R}d_a(\mathcal O)\right|
 \leq O\left(q\sqrt{\frac hG}+\sqrt{\frac Lh}\right).
\]
{Moreover, the same bound with $L+(\lambda+1)\ln2$ in place of $L$ holds for $\mathbb E_{\mathcal O\leftarrow\mathcal R}\max_{a\in\bits^\lambda}|d_a(\mathcal O)|$.}
\end{lemma}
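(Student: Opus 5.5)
The plan mirrors the proof of Lemma~\ref{lem:conditioned-pke}. The p-keys are replaced by the $G$ p-key prefixes, the pairs of rows $X_j$ are replaced by the tuples $X_{j_1}$, and the single-record security bound becomes Lemma~\ref{vbb:fresh-group}. Fix $a$. Draw $\mathcal O\leftarrow\mathcal R$ and a uniform prefix $J_1\in[G]$, which is the prefix output by $O_{\mathrm{samp}}$ in both the real and simulated experiments. Then privately choose a uniform $h$-subset $H\subseteq[G]$ containing $J_1$. We define hybrids $\mathsf H^{(0)},\mathsf H^{(1)},\mathsf H^{(2)}$, run once for the real experiment and once for the simulated one.

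\textbf{Step 1.} In $\mathsf H^{(0)}$, the expected real-minus-simulated difference is exactly $\mathbb E_{\mathcal R}d_a(\mathcal O)$. Since $H$ is unused, it has no effect; this holds because the prefix is uniform in $[G]$ for every oracle.

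\textbf{Step 2 ($\mathsf H^{(1)}$).} Replace $X_k$ for all $k\in H\setminus\{J_1\}$ by independent samples from $\mathcal P_{X_k}$. Each replaced prefix changes $\ket\Psi$ by squared mass $C/N=1/G$, so the $O_{\mathrm{samp}}$ reflection moves by $O(\sqrt{h/G})$. The $O_{\mathrm{enc}}$ and $O_{\mathrm{dec}}$ operations change only at p-keys with prefix in $H\setminus\{J_1\}$. Conditional on $J_1$, the query states of both experiments are independent of $H$. This includes the simulator's redirected queries, which stay within prefix $J_1$. Hence the expected squared weight on the changed prefixes is at most $(h-1)/(G-1)$, and a BBBV-style query-by-query comparison costs $O(q\sqrt{h/G})$ in each experiment.

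\textbf{Step 3 ($\mathsf H^{(2)}$).} Replace $X_{J_1}$ by a fresh base sample. Lemma~\ref{lem:hidden-subset-entropy} applies with coordinates $X_1,\dots,X_G$, $h$ as chosen, and $W$ as the remaining data. It shows that the joint law of $(W,X_{-H},H,J_1,X_{J_1})$ moves by at most $\sqrt{L/(2h)}$ in statistical distance. Both experiments are processings of this data together with independent replacements, so each probability changes by at most $\sqrt{L/(2h)}$. In $\mathsf H^{(2)}$, the tuple $X_{J_1}$ is independent of everything else. Lemma~\ref{vbb:fresh-group} then bounds the real-minus-simulated gap by $O(q/\sqrt C+q/\sqrt N)$, which is $O(q\sqrt{h/G})$ since $C=G$ and $h\ge1$. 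Summing via the triangle inequality gives the stated bound.

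\textbf{Step 4 (the ``moreover'' part).} Adjoin to $W$ the classical label $(a^\star,\sigma)$. Here $a^\star$ is a maximizer of $|d_a(\mathcal O)|$ over $a$, and $\sigma$ is the sign of $d_{a^\star}(\mathcal O)$; this label takes $2^{\lambda+1}$ values. In the reference distribution, adjoin an independent uniform label instead. This raises the relative entropy by at most $(\lambda+1)\ln2$, by the chain rule (Lemma~\ref{lem:kl-chain-rule}) together with the fact that a discrete variable on $K$ values has relative entropy at most $\ln K$ from uniform.

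Let $\mathcal A'$ run $\mathcal A$ and flip its output when $\sigma=1$. The simulator for $\mathcal A'$ likewise flips. Then $d_{a^\star}$ for $\mathcal A'$ equals $\max_a|d_a|$ for $\mathcal A$. I expect this step to be the main obstacle. The point used by the experiment is now oracle-dependent, whereas Lemma~\ref{vbb:fresh-group} and the hybrids fix $a$. I would handle this by carrying the label only as comparison data, exactly as $\sigma$ is handled in Lemma~\ref{lem:conditioned-pke}. The hybrids in Steps 1--3 hold for each fixed $a$, with the label part of the conditioned external data. Independence of $H$ from the query states is preserved. In $\mathsf H^{(2)}$, the fresh $X_{J_1}$ is independent of the label, so Lemma~\ref{vbb:fresh-group} applies pointwise in the label. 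The resulting bound therefore holds with $L+(\lambda+1)\ln2$.
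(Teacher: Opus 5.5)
Your proposal is correct and follows the paper's proof essentially step for step. It uses the same hybrids over the $G$ prefixes with a hidden $h$-subset containing the sampled prefix, the same $O(q\sqrt{h/G})$ query comparison (relying on the simulator's redirection preserving prefixes), Lemma~\ref{lem:hidden-subset-entropy} for the $\sqrt{L/(2h)}$ replacement, and Lemma~\ref{vbb:fresh-group} with $C=G$ to absorb the final gap; your treatment of the maximum, appending the maximizing point and sign bit to $W$ with an independent uniform label in the reference distribution, is likewise the paper's argument for the $L+(\lambda+1)\ln 2$ budget.
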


\begin{proof}
The proof is the same as that for \Cref{lem:conditioned-pke}, but adapted to the setting of the point function construction. Fix a point $a$. For each $E\in\{\mathcal A,\mathcal S\}$, define hybrids $\mathsf H_E^{(0)},\mathsf H_E^{(1)},\mathsf H_E^{(2)}$, where $E=\mathcal A$ denotes the real experiment and $E=\mathcal S$ the simulated experiment. Below, each hybrid outputs the adversary's bit, and let $\pi_E^{(k)}:=\Pr[\mathsf H_E^{(k)}=1]$. 

\textbf{Hybrid $\mathsf H_E^{(0)}$: the original experiment under $\mathcal R$.} Draw the oracle description from $\mathcal R$ and independently choose a uniform prefix $j_1\in[G]$. In the real experiment, choose a uniform suffix $j_2\in[C]$, sample $s,c$ honestly at p-key $(j_1,j_2)$ with $c$ encrypting one, set $d=j_2\oplus a$, and give $p=(j_1,d,s,c)$ to $\mathcal A$ with the original oracle. In the simulated experiment, choose a uniform suffix $j_2'\in[C]$, sample $s,c$ in the same way at $(j_1,j_2')$, and choose $d$ independently and uniformly; give the same program format to $\mathcal A$, answering its queries by the simulator's rules above. Since, choosing a uniform $h$-subset $H\subseteq[G]$ containing $j_1$ without using it in the experiment has no effect on the experiment statistics, and since the sampled prefix is uniform for every oracle, we get $\pi_{\mathcal A}^{(0)}-\pi_{\mathcal S}^{(0)}=\mathbb E_{\mathcal R}d_a(\mathcal O)$.

\textbf{Hybrid $\mathsf H_E^{(1)}$: replace the data at the other prefixes in $H$.} Modify $\mathsf H_E^{(0)}$ by replacing each $X_k$, for $k\in H\setminus\{j_1\}$, by an independent sample from $\mathcal P_{X_k}$. Use these replacements in every common-oracle operation, leaving $X_{j_1}$, the prepared program, and the simulator's redirection rule unchanged. The subset $H$ remains hidden. The corresponding query comparison in the proof of \Cref{lem:conditioned-pke}, with $N$ replaced by $G$, applies because the simulator preserves p-key prefixes and gives
\[
 |\pi_E^{(1)}-\pi_E^{(0)}|\leq O\left(q\sqrt{\frac hG}\right).
\]

\textbf{Hybrid $\mathsf H_E^{(2)}$: replace the data at the sampled prefix.} Modify $\mathsf H_E^{(1)}$ by replacing $X_{j_1}$ with an independent sample from $\mathcal P_{X_{j_1}}$. Use this new tuple both to prepare the program as in $\mathsf H_E^{(0)}$ and to answer all oracle queries, including the simulator's redirected queries; retain $W$, $H$, $j_1$, the outside data $X_{-H}$, and the independent replacements at the other prefixes in $H$. Applying \Cref{lem:hidden-subset-entropy} to $X_1,\ldots,X_G$, exactly as in the corresponding comparison in \Cref{lem:conditioned-pke}, gives
\[
 |\pi_E^{(2)}-\pi_E^{(1)}|\leq\sqrt{\frac{L}{2h}}.
\]

In $\mathsf H_E^{(2)}$, the tuple $X_{j_1}$ is independent of all remaining oracle data, so \Cref{vbb:fresh-group} gives $|\pi_{\mathcal A}^{(2)}-\pi_{\mathcal S}^{(2)}|\leq O(q/\sqrt C+q/\sqrt N)$. Combining these comparisons by the triangle inequality yields
\[
 \left|\mathbb E_{\mathcal R}d_a(\mathcal O)\right|
 \leq O\left(q\sqrt{\frac hG}+\sqrt{\frac Lh}
             +\frac q{\sqrt C}+\frac q{\sqrt N}\right).
\]
Since $C=G$, $N=GC$, and $h\geq1$, the last two terms are absorbed into the first.

For the maximum absolute difference, use the final argument of the same \textsf{qPKE} proof, appending to $W$ both a maximizing point $a$ and the sign bit $\sigma=\mathbf1[d_a(\mathcal O)<0]$. An independent uniform point and bit in the reference distribution give entropy budget $L+(\lambda+1)\ln2$, proving the second claim.
\end{proof}

\subsection{Moving to an oracle with no \textsf{qOWF}s}
\label{sec:conditioned-oracle}

Let $\mathcal P$ be the oracle distribution defined in Section~\ref{sec:construction}. We will condition on this distribution to obtain a distribution where \textsf{qOWF}s do not exist. 

\paragraph{Whole-function agreement.} { Fix an evaluator $E$, a parameter $\lambda\ge2$, and a cutoff $a=a(\lambda)=\lceil c\log_2\lambda\rceil$, where $c\ge1$ is fixed. Let $q=q(\lambda)\ge1$ be a polynomial upper bound on the number of oracle queries and the lengths queried by $E$.} Keep the oracle blocks at lengths $n<a$ fixed. Sample the remaining blocks independently as specified by $\mathcal P$. We use this sampling procedure in the next four lemmas, {with this cutoff.}

At one remaining length, $\mathcal P$ samples independent tables $A,B$ and a uniform $\tau\in[0,1]$. Write $F(A,B,\tau)$ for the resulting block. For fixed $A,B$, increasing $\tau$ from zero to one continuously changes the operations defined by $A$ into those defined by $B$. At $\tau=1$, the operations depend only on $B$. They therefore equal the operations at $\tau=0$ for any block whose first table is $B$. Lemma~\ref{lem:edge-properties} also bounds the operator-norm change by a constant times the change in $\tau$.

The running-time bound of $E$ limits the lengths it can query to a finite set. Applying Lemma~\ref{lem:stationary-paths} independently at these lengths at or above the cutoff constructs random continuous paths between two independent samples of those blocks. At every fixed value of the path parameter, those blocks have exactly the distribution specified by $\mathcal P$. The blocks below $a$ and the $\PSPACE$ oracle stay fixed. Define $d(\mathcal O,\mathcal O')$ as the maximum operator-norm difference between corresponding operations at lengths $E$ can query. This also bounds the difference between queries made in superposition over operation names and lengths. Taking the maximum, rather than a sum, keeps the path's change bound independent of the lengths and their number.

Let $D\subseteq\bits^{m(\lambda)}$ be a fixed input set, independent of the remaining blocks. Call an oracle \emph{reliable on $D$} if, on every $x\in D$, some output of $E^{\mathcal O}(x)$ has probability at least $5/6$. These unique outputs define a function on $D$.

\begin{lemma}[Whole-function agreement]
\label{cor:whole-function}
If the oracle is reliable on $D$ with probability at least $1-\varepsilon$, where $0\le\varepsilon\le1/2$, then there is a function $f$ on $D$ such that, with probability at least
\begin{equation}
 1-C(q+1)\varepsilon,
 \label{eq:pke-agreement}
\end{equation}
the oracle is reliable on $D$ and the most likely output of $E$ is $f(x)$ for every $x\in D$. Here $C$ is a constant.
\end{lemma}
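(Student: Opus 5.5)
We will apply Corollary~\ref{cor:barrier-large-set} to the distribution $\mu$ of the sampled oracle blocks. The sets $S_f$, one for each function $f$ on $D$, collect the oracles that are reliable on $D$ and whose $5/6$-likely output at every $x\in D$ equals $f(x)$. There are at most finitely many such sets, since $D$ and the output length are finite. They are measurable, because each output probability is a continuous function of the finitely many oracle operations at the queried lengths. Their union is exactly the set of oracles reliable on $D$, so the gap has probability $\varepsilon_{\mathrm{bad}}\le\varepsilon\le1/2$.

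\textbf{Hypotheses of the barrier theorem.} For the interpolation structure, we take the product over the finitely many lengths $n\ge a$ that $E$ can query. At each such length, we use the interpolation $F(A,B,\tau)$ of the base distribution. The blocks below $a$ and the $\PSPACE$ oracle are coordinates fixed at prescribed values. Lemma~\ref{lem:stationary-paths}, in its product form, together with Lemma~\ref{lem:edge-properties}, gives the required change bound $v_0$ for the maximum-operator-norm metric $d$. This bound is an absolute constant, independent of $n$, the tables, and the number of lengths. One subtlety is that $\tau$ is sampled as $(i-1+t)/3$, but it is still uniform on $[0,1]$, so $\mu$ is exactly the interpolation distribution of Theorem~\ref{thm:interpolated-barrier}.

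\textbf{Separation of the sets.} Take $\mathcal O\in S_f$ and $\mathcal O'\in S_g$ with $f\ne g$, and pick $x\in D$ with $f(x)\ne g(x)$. Under $\mathcal O$, the evaluator $E(x)$ outputs $f(x)$ with probability at least $5/6$. Under $\mathcal O'$, it outputs $f(x)$ with probability at most $1/6$, so the two output distributions differ in total variation by at least $2/3$. The standard hybrid argument replaces queries one at a time. With at most $q$ queries, each possibly a superposition over operation names and lengths, the final states differ in norm by at most $q\,d(\mathcal O,\mathcal O')$. Trace distance is at most twice the norm difference of pure states; controlled, inverse, transpose and conjugate queries cost at most a constant factor, since the operations are real involutions. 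This gives $2/3\le 2q\,d(\mathcal O,\mathcal O')$, hence $\delta\ge 1/(3q)$, or $\Omega(1/q)$ with adjusted constants.

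\textbf{Conclusion.} Corollary~\ref{cor:barrier-large-set} then yields a set $S_f$ with
\[
\mu(S_f)\ge 1-\varepsilon-\frac{3v_0}{\delta}\cdot\frac{\varepsilon}{1-\varepsilon}\ge 1-\varepsilon-18v_0q\varepsilon\cdot 2,
\]
where the last step uses $\varepsilon\le1/2$. This is at least $1-C(q+1)\varepsilon$ for a suitable constant $C$. On $S_f$ the oracle is reliable, and the $5/6$-likely output is the most likely output, which equals $f(x)$.

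The main obstacle is the separation step. It requires a careful query-hybrid bound that holds for all access types, including superposition over lengths, uniformly with the maximum metric. Measurability of the sets and their finiteness, which is needed for the barrier theorem's sum, are routine by comparison.
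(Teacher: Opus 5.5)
Your proposal is correct and follows essentially the same route as the paper: you partition the reliable oracles by the function they compute on $D$, use the one-query-at-a-time hybrid to get separation $\delta=1/(3q)$ in the maximum operator-norm metric, and apply Corollary~\ref{cor:barrier-large-set} with its finite-product extension and the constant change bound from Lemma~\ref{lem:edge-properties}. The only discrepancy is a harmless extra factor of two in your last step, since $\frac{3v_0}{\delta}\cdot\frac{\varepsilon}{1-\varepsilon}\le 18v_0q\varepsilon$ already holds for $\varepsilon\le 1/2$; this is absorbed into $C$.
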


\begin{proof}
{ Replacing oracle queries one at a time bounds the change} in the probability of any output event by $2q\,d(\mathcal O,\mathcal O')$; intermediate measurements may be deferred. Partition the reliable oracles into sets computing the same function on $D$. Two different sets disagree at some input, where one output has probability at least $5/6$ in one set and at most $1/6$ in the other. For any oracles $\mathcal O,\mathcal O'$ in different sets, the query bound therefore gives
\[
 d(\mathcal O,\mathcal O')
 \ge \frac{5/6-1/6}{2q}=\frac1{3q}.
\]
Here $d$ is the maximum operator-norm difference defined above.  The reliable function classes form a finite measurable family. Let $\varepsilon_{\mathrm{bad}}$ be the probability that the oracle is not reliable on $D$, so $\varepsilon_{\mathrm{bad}}\leq\varepsilon\leq1/2$. Applying Corollary~\ref{cor:barrier-large-set} and its finite-product extension with $\delta=1/(3q)$, one class has probability at least
\[
 1-\varepsilon_{\mathrm{bad}}-\frac{9v_0q\,\varepsilon_{\mathrm{bad}}}{1-\varepsilon_{\mathrm{bad}}}
 \geq1-(1+18v_0q)\varepsilon.
\]
Choose $f$ to be the common function computed by the oracles in this largest class. Since $v_0$ is an absolute constant, taking $C=1+18v_0$ gives~\eqref{eq:pke-agreement}.
\end{proof}

\paragraph{Recovering reliable answers.} We estimate the fixed blocks and simulate the remaining blocks. Averaging the simulated evaluation in polynomial space then recovers the answers on which most oracles agree.

\begin{lemma}[Recovering reliable answers]
\label{lem:mean-recovery}
Suppose that, for every $x\in D$, some answer $a_x$ satisfies
\[
 \Pr_{\mathcal O\leftarrow\mathcal P}
 \bigl[\Pr[E^{\mathcal O}(x)=a_x]\ge5/6\bigr]
 \ge9/10.
\]
A {QPT} procedure outputs classical estimates $T$ of the operations below length $a(\lambda)$, defining a function $g_T$ computable in polynomial space from $\lambda,T$, such that
\[
 \Pr_T[\forall x\in D,\ g_T(x)=a_x]\ge1-\lambda^{-3}.
\]
\end{lemma}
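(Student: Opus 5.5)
The plan has three steps: estimate the fixed short blocks, replace them by the estimates and show this barely moves the output probabilities, then recover each $a_x$ in polynomial space by averaging over freshly sampled remaining blocks.

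\textbf{Estimating the fixed blocks.} Below length $a(\lambda)=\lceil c\log_2\lambda\rceil$ there are only $O(\log\lambda)$ blocks. Each acts on $O(n)=O(\log\lambda)$ qubits, so each is a matrix of dimension $\poly(\lambda)$. The oracle operations are real, symmetric and self-inverse, and we have query access to them. We therefore estimate every operation by process tomography: apply it to a polynomial family of input states, for example a basis together with pairwise superpositions to fix relative signs, and measure. This yields a classical description $T$ with entrywise error $\epsilon_0=\lambda^{-k}$, for a large constant $k$, except with probability $\lambda^{-4}$. A Chernoff bound and a union bound over the polynomially many entries and blocks give this. Since the dimension is polynomial, entrywise error converts to operator-norm error $\poly(\lambda)\epsilon_0$, which is at most $1/(100q)$ for $k$ large. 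The main obstacle is making this estimation uniformly accurate over the full short-length oracle, including the $\tau$-interpolated states and angles. I expect polynomial tomography with a large enough exponent to suffice because all dimensions are $\lambda^{O(c)}$.

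\textbf{Defining $g_T$.} For each $x$, let $\tilde p_T(x,y)=\mathbb E_{\mathcal O'\leftarrow\mathcal P_{\geq a}}\Pr[E^{T,\mathcal O'}(x)=y]$. Here the short blocks are replaced by the estimated unitaries in $T$, rounded to a nearby unitary, and the remaining blocks are drawn from the base distribution $\mathcal P$. Set $g_T(x)$ to be the $y$ maximizing $\tilde p_T(x,y)$, or any $y$ with $\tilde p_T(x,y)>1/2$. The expectation over $\mathcal P$ at the polynomially many lengths $E$ queries, each a finite uniform distribution after discretizing $\tau$ finely, is a sum of exponentially many terms of polynomial-size quantum amplitude computations. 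It is therefore computable, to $2^{-\poly}$ precision, in polynomial space. Discretizing $\tau$ at precision $2^{-\poly}$ costs only exponentially small error by Lemma~\ref{lem:edge-properties}.

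\textbf{Correctness.} Fix $x\in D$. By the hypothesis, the true mean $\mathbb E_{\mathcal O}\Pr[E^{\mathcal O}(x)=a_x]\geq (9/10)(5/6)=3/4$. Here the average is over the remaining blocks with the fixed short blocks as given, matching the sampling procedure above. By the hybrid query bound, each query is off by at most $2\cdot(1/(100q))$ in operator norm, so replacing the short blocks by $T$ changes every output probability by at most $2q\cdot\frac{1}{100q}\leq 1/50$. Hence $\tilde p_T(x,a_x)\geq 3/4-1/50>1/2$, which forces $g_T(x)=a_x$ for every $x\in D$ simultaneously on the tomography-success event. That event has probability at least $1-\lambda^{-4}\geq1-\lambda^{-3}$. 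No union bound over $D$ is needed, since a single good $T$ serves all inputs.
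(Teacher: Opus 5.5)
Your tomography step (polynomial-dimensional blocks below $a(\lambda)$, operator-norm error $O(1/q)$, query-hybrid loss a small constant) and your decoding step (average success at least $(9/10)(5/6)=3/4$ against a threshold of $1/2$, one good $T$ serving all of $D$) match the paper. The gap is your claim that $\tilde p_T(x,y)$ is computable in polynomial space.

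The blocks at lengths $a(\lambda)\le n\le q(\lambda)$ are not small objects. A table at length $n$ has $2^n$ rows, each holding an s-key and two $2^{2n}$-element subsets of $[2^{12n}]$. For $n$ polynomial in $\lambda$, a single sample from $\mathcal P_{\ge a}$ therefore needs exponentially many bits to write down. The support of $\mathcal P_{\ge a}$, even with $\tau$ on a grid, is doubly exponential. So the expectation is not a sum of exponentially many terms indexed by polynomial-length strings. Polynomial space can neither enumerate that support nor store one sampled oracle to evaluate a single term. Discretizing $\tau$ removes only the continuous parameter, not this problem.

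The missing idea is to use that a $q$-query algorithm only sees $O(q)$-wise statistics of the rows, and to replace the random tables by succinctly described data it cannot distinguish on average. The paper does this in three steps.

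\begin{enumerate}
\item It conditions on $\mathcal A_j\cap\mathcal B_j=\varnothing$ at every length $n\ge a$; the excluded mass is $\sum_{n\ge a}4\cdot2^{-7n}<1/16$. Each pair of rows can then be encoded by a uniform permutation $\pi_j$ of $[Z]$, with $A_{j,b}=\pi_j^{-1}(I_{b+1})$ and $B_{j,b}=\pi_j^{-1}(I_{b+3})$ for fixed intervals $I_1,\dots,I_4$. Every oracle operation now costs only constantly many coherent queries to the s-keys and to $\pi_j^{\pm1}$: c-text states come from applying $\pi_j^{-1}$ to interval superpositions, and decryption angles come from evaluating $\pi_j(c)$.
\item It draws the s-keys and the permutation seeds from an $O(q)$-wise independent function of $j$. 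A $q$-query algorithm cannot distinguish this from independent rows, even when querying in superposition over $j$ \cite[Fact~2]{Z12}.
\item It replaces the uniform permutations by almost $k$-wise independent permutations with polynomial-length descriptions \cite{KNR09}.
\end{enumerate}

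Only after these replacements, together with the $\tau$ grid, does the average range over polynomially long descriptions. Then enumerating them and computing each acceptance probability, including the $\PSPACE$ gates \cite{Wat99}, fits in polynomial space. These replacements cost constant rather than exponentially small error; the paper's total is $1/8$. You need to redo your error budget accordingly, though the $3/4$ versus $1/2$ margin leaves room.

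A minor point: tomography with forward queries only determines each operation up to a global sign, and that sign matters for controlled queries. Use controlled Hadamard tests, as the paper does.
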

\begin{proof}
{ At lengths below the logarithmic cutoff $a$, the oracle operations
have polynomial dimension. Using the available oracle
access, we reconstruct these operations by standard Hadamard-test
estimation of their matrix entries~\cite[Section~2.2]{AJL06}. For an operation of dimension $d_0$, use controlled Hadamard tests to estimate each matrix entry to additive error $1/(1024qd_0)$. The estimated matrix is then within $1/(1024q)$ in operator norm. Replacing it by a nearest unitary increases this bound by at most a factor of two. Approximate that unitary by a circuit to error $1/(512q)$ and store the circuit description in $T$. Thus each stored circuit has operator-norm error at most $1/(256q)$. The concentration bound for the Hadamard tests and a union bound over all entries give this guarantee simultaneously with probability at least $1-\lambda^{-3}$, using polynomially many tests. Replacing the queries one at a time then changes any output probability by at most $2q/(256q)=1/128$, uniformly over inputs and the remaining blocks.}

For the simulation at lengths $n\ge a$, condition on $\mathcal A_j\cap\mathcal B_j=\varnothing$ for every pair of rows. The excluded probability is at most
\[
 \sum_{n\ge a}4NM^2/Z
 =\sum_{n\ge a}4\cdot2^{-7n}<\frac1{16}.
\]
Thus conditioning changes any averaged output probability by less than $1/16$. The pairs of rows remain independent under this conditioning.

At each such length, let $I_1,I_2,I_3,I_4$ be four consecutive intervals of size $M$ in $[Z]$. {Under this conditioning, the four sets of c-texts $A_{j,0},A_{j,1},B_{j,0},B_{j,1}$ are sampled} by choosing an independent uniform permutation $\pi_j$ for each pair of rows and setting
\[
 A_{j,b}=\pi_j^{-1}(I_{b+1}),\qquad
 B_{j,b}=\pi_j^{-1}(I_{b+3})
 \quad(b\in\bits).
\]

{ Choose $k=O(q)$ to be twice an upper bound on the total number of coherent queries to the s-keys and permutations used to simulate $E$. We replace each uniform permutation by an almost $k$-wise independent permutation from~\cite{KNR09}: on any at most $k$ distinct inputs, its images are within statistical distance $\delta$ of those of a uniform permutation. Such permutations have descriptions of length polynomial in $n,q,\log(1/\delta)$. For each length, use a $k$-wise independent function of $j$ whose output contains the two s-keys and the random bits specifying $\pi_j$. Thus any $k$ pairs of rows have the same joint distribution as if these strings were chosen independently. The functions at different lengths are sampled independently. Only polynomially many lengths can be queried, so all these choices have a polynomial-length description when $\log(1/\delta)$ is polynomial in $\lambda$.}

Fix $\tau$ at each length. Each operation uses constantly many coherent queries to the s-keys and permutations, including their inverses. Applying $\pi_j^{-1}$ to the prescribed interval superpositions prepares the c-text states. Evaluating $\pi_j(c)$ determines the $O_{\mathrm{dec}}$ operation.

Exact $k$-wise independent permutations preserve the averaged output probabilities for this number of quantum queries. For the almost independent permutations, the bound in~\cite[Theorem~1.1]{AL13} lets us choose $\delta$ with polynomial $\log(1/\delta)$ so that the total replacement error over all pairs of rows is negligible. {By~\cite[Fact~2]{Z12}, a quantum algorithm making at most $k/2$ queries to this function has the same averaged output probabilities whether it is $k$-wise independent or fully random. This applies even when a query is in superposition over $j$.}

{ Replace the uniform choice of $\tau\in[0,1]$ by a uniform choice from the equally spaced points
\[
 \{0,2^{-s},2\cdot2^{-s},\ldots,(2^s-1)2^{-s}\},
\]
where $s$ is polynomial in $\lambda$. Rounding $\tau$ down to this grid changes it by at most $2^{-s}$. Lemma~\ref{lem:edge-properties} and the query hybrid bound therefore bound the change in any output probability by $O(q2^{-s})$. Choose $s$ large enough to make this at most $1/128$.} {For each choice of these functions and $\tau$} all simulated operations and the fixed $\PSPACE$ queries are computable in polynomial space. We can therefore approximate each output-bit probability in polynomial space by~\cite{Wat99}. {Enumerating the polynomial-length descriptions of these functions and the grid indices} and averaging these probabilities still uses polynomial space.

{ Condition on the event that every circuit in $T$ has operator-norm error at most $1/(256q)$; its failure probability is at most $\lambda^{-3}$, as proved above. Choose the permutation, grid, simulated-circuit, and numerical approximation errors to be at most $1/128$ each. Together with the $1/128$ error from replacing the fixed operations, their sum is less than $1/16$. We call $T$ accurate when the stated operator-norm bounds hold.} Including the conditioning error, the computed average then differs from the original one by at most $1/8$, uniformly over inputs. Define $g_T$ by rounding each computed output-bit probability at $1/2$.

For every $x\in D$, each bit of $a_x$ is correct with probability at least $(9/10)(5/6)=3/4$ in the original average. The error bound therefore ensures that rounding recovers $a_x$. Thus accurate estimates give $g_T(x)=a_x$ simultaneously for all $x\in D$.
\end{proof}

\paragraph{Inversion.} Under pointwise correctness, we recover the entire function and find any preimage. Under average correctness, we instead sample a preimage nearly uniformly, to avoid concentrating on inputs where recovery fails.

\begin{lemma}[Inversion under pointwise correctness]
\label{lem:strict-inversion}
Fix the blocks below $a(\lambda)$ as in Lemma~\ref{lem:mean-recovery}, and sample the remaining blocks according to $\mathcal P$. Suppose that, with probability at least $9/10$, the resulting oracle is reliable on the full input domain and computes the same function $f_*$. One QPT inverter with $\PSPACE$ access returns a preimage of every $y\in\Image(f_*)$ with probability at least $1-\lambda^{-3}$.
\end{lemma}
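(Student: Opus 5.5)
We reduce to Lemma~\ref{lem:mean-recovery} applied with $D=\bits^{m(\lambda)}$ and $a_x=f_*(x)$. The hypothesis says that with probability at least $9/10$ over the remaining blocks, the oracle is reliable on every input and its reliable output is $f_*(x)$. Hence, for every $x$, $\Pr_{\mathcal O}[\Pr[E^{\mathcal O}(x)=f_*(x)]\ge5/6]\ge9/10$, which is exactly the premise of Lemma~\ref{lem:mean-recovery}.

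The inverter first runs the QPT estimation procedure of that lemma once, obtaining classical estimates $T$ of the fixed short blocks. The lemma guarantees that, with probability at least $1-\lambda^{-3}$ over $T$, we have $g_T(x)=f_*(x)$ simultaneously for all $x\in\bits^{m(\lambda)}$. Since $T$ is produced independently of the target $y$, the same accurate $T$ serves every $y$, and this simultaneity is what lets one good event cover all of $\Image(f_*)$.

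Given $y$ and an accurate $T$, the inverter searches for $x$ with $g_T(x)=y$. Because $g_T$ is computable in polynomial space from $(\lambda,T)$, the predicate ``there is an $x$ extending prefix $w$ with $g_T(x)=y$'' is decidable in polynomial space: enumerate all extensions, reusing the space. We resolve it bit by bit through $m(\lambda)$ adaptive queries to $U_{\PSPACE}$, via the standard reduction to the fixed $\PSPACE$-complete language $L_{\PSPACE}$, yielding the lexicographically first $x$ with $g_T(x)=y$. If $y\in\Image(f_*)$ and $T$ is accurate, then $g_T=f_*$, so such an $x$ exists and satisfies $f_*(x)=y$. Outputting it gives a preimage with probability at least $1-\lambda^{-3}$, the only failure being inaccurate $T$.

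The main point to verify carefully is that the $\PSPACE$ queries never depend on the random high-length blocks. This holds because $g_T$ is defined purely from $T$ and the averaging over simulated blocks inside the proof of Lemma~\ref{lem:mean-recovery}. So the oracle $U_{\PSPACE}$, which has no access to the tables, suffices. The remaining concern is the efficiency of encoding the $\PSPACE$ instances, and this is routine, since $T$, $y$ and the prefixes all have polynomial length.
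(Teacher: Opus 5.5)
Your proposal is correct and matches the paper's proof. Both apply Lemma~\ref{lem:mean-recovery} with $D=\bits^{m(\lambda)}$ and $a_x=f_*(x)$, which gives $g_T=f_*$ everywhere except with probability $\lambda^{-3}$, and then use the $\PSPACE$ oracle to find the lexicographically first $x$ with $g_T(x)=y$. Your bit-by-bit prefix search, and your remark that the $\PSPACE$ queries depend only on $T$ and $y$, just make explicit details the paper leaves implicit.
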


\begin{proof}
Lemma~\ref{lem:mean-recovery} gives $g_T=f_*$ everywhere except with probability $\lambda^{-3}$. Given $y$, use the $\PSPACE$ oracle to find the lexicographically first $x$ satisfying $g_T(x)=y$. 
\end{proof}

We now show inversion under average correctness. Let $K$ be the polynomial factor in Lemma~\ref{cor:whole-function}, and write
\[
 \varepsilon=
 \Pr_{\substack{\mathcal O\leftarrow\mathcal P\\
                x\leftarrow\bits^{m(\lambda)}}}
 \left[
   \max_y\Pr[E^{\mathcal O}(x)=y]<5/6
 \right].
\] 

\begin{lemma}[Inversion under average correctness]
\label{lem:average-inversion}
There exists a QPT inverter with access to $\mathcal O$ and the $\PSPACE$ oracle such that, outside a set of oracles of probability at most $O(K\lambda^2\varepsilon)$, its success probability is at least
\[
 1-O(\eta+\lambda^{-2})
\]
whenever $E^{\mathcal O}$ has average correctness error at most $\eta$ under uniform inputs. 
\end{lemma}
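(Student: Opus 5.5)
Call an input $x$ \emph{good} if
\[
\Pr_{\mathcal O\leftarrow\mathcal P}\bigl[\max_y\Pr[E^{\mathcal O}(x)=y]<5/6\bigr]\le 1/(10K\lambda^{2}),
\]
and \emph{bad} otherwise. By Markov's inequality, the bad inputs form a set of uniform measure at most $10K\lambda^2\varepsilon$. For each good $x$, Lemma~\ref{cor:whole-function} applied with $D=\{x\}$ gives an answer $a_x$ such that, with oracle probability at least $1-O(\lambda^{-2})\geq 9/10$, the oracle is reliable at $x$ with most likely output $a_x$. Taking $D$ to be the set of all good inputs, Lemma~\ref{lem:mean-recovery} then gives, except with probability $\lambda^{-3}$ over the estimates $T$, a polynomial-space function $g_T$ with $g_T(x)=a_x$ for every good $x$.

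Next, define the set of bad oracles $\mathcal B$: those for which the uniform fraction of good inputs $x$ where the oracle is unreliable or its most likely output differs from $a_x$ exceeds $\lambda^{-2}$. The expected such fraction over $\mathcal O\leftarrow\mathcal P$ is at most $O(K\lambda^{2}\varepsilon)$ plus the bad-input mass, and Markov gives $\Pr[\mathcal B]$ at most $O(K\lambda^4\varepsilon)$. To reach the stated $O(K\lambda^2\varepsilon)$, I would instead average the per-input failure directly: per-input failure is at most $C(q+1)\varepsilon_x$, where $\varepsilon_x$ is the unreliability of $x$, and I would tune the thresholds accordingly. Fix an oracle outside $\mathcal B$ with average error $\eta$. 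Then for all but an $O(\eta+\lambda^{-2}+K\lambda^2\varepsilon)$ fraction of uniform $x$, $E^{\mathcal O}(x)$ outputs $g_T(x)$ except with probability $O(e(x))$, using that $e(x)$ upper bounds the disagreement with the most likely output and $e(x)\ge 1/6$ when the oracle is unreliable at $x$.

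The inverter, on challenge $y$, uses the $\PSPACE$ oracle to sample a uniformly random element of $g_T^{-1}(y)$. Exact uniform sampling of a $\PSPACE$-defined preimage set is possible with polynomially many $\PSPACE$ queries by counting preimages with prefix constraints and descending bit by bit. Its output distribution, when $y=E(X)$, is then close to that of a fresh uniform $X'$ conditioned on $g_T(X')=g_T(X)$. Since $X'$ is marginally uniform, up to the $O(\eta+\lambda^{-2})$ distance from replacing $E(X)$ by $g_T(X)$, the probability that $X'$ is a good, correctly recovered input with $E(X')=g_T(X')$ is $1-O(\eta+\lambda^{-2})$. This is where fresh verification $E(X')=y$ succeeds, which gives the claimed success probability.

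The main obstacle is this last step: showing that a near-uniform preimage inherits the ``typical'' behaviour even though $g_T$ and $f$ may disagree on a small set. I would handle it by a coupling argument: $(X,X')$ with $X'$ uniform in $g_T^{-1}(g_T(X))$ has both coordinates marginally uniform, so a union bound over the two marginals controls the bad events. Any mismatch between $E(X)$ and $g_T(X)$ costs only $\Pr[E(X)\ne g_T(X)]$. The remaining care goes into bookkeeping which failures are charged to the oracle set (probability $O(K\lambda^2\varepsilon)$) versus the inverter's success probability.
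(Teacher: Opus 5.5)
Your route is the paper's: threshold inputs by their unreliability $\varepsilon_x$, apply Lemma~\ref{cor:whole-function} with $D=\{x\}$ to each retained input, feed the retained set into Lemma~\ref{lem:mean-recovery}, take Markov over oracles, and sample a uniform $g_T$-preimage with the $\PSPACE$ oracle. Your coupling argument for the last step is correct and makes explicit what the paper only asserts. The argument is that $X$ and a uniform element of $g_T^{-1}(g_T(X))$ are both marginally uniform, so the two union-bounded failures of $E=g_T$ cost only $O(\eta+\lambda^{-2})$.

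The one real problem is the bookkeeping you flagged but did not resolve. Your threshold $\varepsilon_x\le 1/(10K\lambda^2)$ is unnecessarily strict, and it excludes an oracle-independent set of inputs of mass up to $10K\lambda^2\varepsilon$. There are two ways to account for this mass, and neither gives the stated bound:
\begin{itemize}
\item Charged to the success probability, it gives $1-O(\eta+\lambda^{-2}+K\lambda^2\varepsilon)$.
\item Charged to the oracle set by Markov at level $\lambda^{-2}$, it gives an exceptional set of probability $O(K\lambda^4\varepsilon)$.
\end{itemize}
For example, take $\varepsilon=\lambda^{-3}/K$. The lemma then promises success $1-O(\eta+\lambda^{-2})$ outside an oracle set of probability $O(\lambda^{-1})$, and your version gives neither.

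The fix is to use the threshold $1/(10K)$, as the paper does. Lemma~\ref{lem:mean-recovery} only needs probability $9/10$, and $K\varepsilon_x\le 1/10$ already gives it; since $K\ge2$, Lemma~\ref{cor:whole-function} also applies. With this threshold:
\begin{itemize}
\item The excluded mass is at most $10K\varepsilon$.
\item The expected fraction of retained inputs that are unreliable or disagree with $a_x$ is at most $\mathbb E_x K\varepsilon_x\le K\varepsilon$. Your intermediate bound of $O(K\lambda^2\varepsilon)$ was loose.
\item The total is therefore $O(K\varepsilon)$, and a single Markov step gives a bad fraction of at most $\lambda^{-2}$ outside an oracle set of probability $O(K\lambda^2\varepsilon)$.
\end{itemize}
With that change, the rest of your argument goes through unchanged. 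In the downstream use, $\varepsilon\le2\theta=O(1/(K\lambda^8))$, so even your weaker bound would suffice in Theorem~\ref{thm:oracle-selection}. It does not, however, prove the lemma as stated.
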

\begin{proof}
For each input $x$, let $\varepsilon_x$ be the probability that the oracle is not reliable on $\{x\}$. Then $\mathbb E_x\varepsilon_x=\varepsilon$. For inputs with $\varepsilon_x\le1/(10K)$, Lemma~\ref{cor:whole-function} gives common answers, which Lemma~\ref{lem:mean-recovery} recovers as $g_T$. The expected fraction of inputs either excluded by this threshold or failing to produce their common answer with probability at least $5/6$ is $O(K\varepsilon)$. By Markov's inequality, outside a set of oracles of probability $O(K\lambda^2\varepsilon)$, this fraction is $O(\lambda^{-2})$. Thus, whenever $T$ is accurate and $E^{\mathcal O}$ has average correctness error $\eta$,
\[
 \Pr_x[g_T(x)\ne f(x)]=O(\eta+\lambda^{-2}).
\]

The inverter computes $T$ independently of the challenge. Given $y$, it samples a nearly uniform preimage under $g_T$, or returns failure if none exists. This requires only polynomially many queries to the $\PSPACE$ oracle.

Since $f$ and $g_T$ agree on all but an $O(\eta+\lambda^{-2})$ fraction of inputs, sampling a uniform preimage of the challenge under $g_T$ inverts $f$ with probability $1-O(\eta+\lambda^{-2})$ for a uniform challenge input. The same bound holds for evaluations of $E$, after including evaluation error, estimation failure, and the negligible sampling error.
\end{proof}

{We now essentially established that on oracles where the evaluator is correct, we can invert it with \textsf{PSPACE} oracle. However, there may be oracles where the correctness does not hold on certain security parameters and, thus, the inversion attack may not succeed. Nevertheless, this still allows us to rule out \textsf{qOWF}s because on such parameters correctness does not hold. In other words, we can select an oracle such that at infinitely many selected security parameters for each evaluator, either security does not hold or correctness does not hold.} 

Moreover, since the number of evaluators is countable, we can actually enumerate over all evaluators and choose one oracle such that for every evaluator, there is an infinite sequence of security parameters where the evaluator does not satisfy correctness or can be inverted. This gives us a single oracle where no \textsf{qOWF} exists. We then show in \cref{sec:primitive-security} that \textsf{qPKE} can be built relative to this oracle which gives us the final separation. 

\paragraph{Selecting the distribution.}

For any distribution $\mathcal R$ on the oracle descriptions of Section~\ref{sec:construction}, write $\mathcal R_{\le r}$ for its restriction to lengths at most $r$.

\begin{theorem}[Oracle selection and absence of one-way functions]
\label{thm:oracle-selection}
\label{lem:owf-selection-failures}
There is a distribution $\mathcal Q$ on the oracle descriptions of Section~\ref{sec:construction} with the following properties.
\begin{enumerate}
\item With $\mathcal Q$-probability one, no $\textsf{qOWF}_1$ or $\textsf{qOWF}_2$ exists.
\item For every $r\geq1$, the relative entropy (KL-divergence) of $\mathcal Q$ from the base distribution $\mathcal P$, restricted to lengths at most $r$, is at most $L(r)=2r$:
\begin{equation}
 D(\mathcal Q_{\le r}\Vert\mathcal P_{\le r})\leq L(r)=2r.
 \label{eq:entropy-prefix}
\end{equation}
\end{enumerate}
\end{theorem}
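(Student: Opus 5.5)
We construct $\mathcal Q$ by an iterative conditioning procedure over a countable enumeration of (evaluator, sampler, definition) triples. Enumerate all pairs $(E_k,\mathrm{def}_k)$, where $E_k$ ranges over uniform QPT evaluators (with their input samplers, in the $\textsf{qOWF}_1$ case) and $\mathrm{def}_k\in\{1,2\}$. Each pair is listed infinitely often, so every candidate is handled at infinitely many security parameters. At step $k$ we choose a security parameter $\lambda_k$ so large that every length $E_k(1^{\lambda_k})$ can query is at least the cutoff $a(\lambda_k)=\lceil c\log_2\lambda_k\rceil$ for lengths not yet sampled, and such that the lengths queried at step $k$ lie strictly above all lengths fixed at earlier steps, except those below the cutoff. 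Blocks at lengths already decided stay fixed. The new blocks are sampled from $\mathcal P$, possibly conditioned on an event $\mathcal E_k$.

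\textbf{The dichotomy at step $k$.} Set a threshold $t_k=\lambda_k^{-4}/(Cq_k)$. For $\mathrm{def}_k=1$, let $\varepsilon$ be the probability, over the new blocks, that the oracle is not reliable on all of $\bits^{m(\lambda_k)}$. If $\varepsilon\ge t_k$, condition on non-reliability. This forces $\max_x e(x)>1/6$ at $\lambda_k$, so determinism fails there. Otherwise Lemma~\ref{cor:whole-function} gives a common function $f_*$ on a set of probability $1-O(q_kt_k)$. Then Lemma~\ref{lem:strict-inversion} inverts every image with probability $1-\lambda_k^{-3}$, and we do not condition. For $\mathrm{def}_k=2$, use the averaged quantity $\varepsilon$ of Lemma~\ref{lem:average-inversion}. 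If $\varepsilon\ge t_k$, condition on the event that the fraction of unreliable inputs is at least $t_k/2$. This event has probability at least $t_k/2$ by Markov's inequality, and on it $\mathbb E_x e(x)\ge t_k/12$, which is non-negligible along the subsequence. Otherwise Lemma~\ref{lem:average-inversion} inverts, except on oracles of probability $O(K\lambda_k^2t_k)$.

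In every case the conditioned event has probability at least about $t_k/2$, so the relative-entropy cost of step $k$ is at most $\log(2/t_k)=O(\log\lambda_k)$. By the chain rule (Lemma~\ref{lem:kl-chain-rule}), applied length block by length block, $D(\mathcal Q_{\le r}\Vert\mathcal P_{\le r})$ is the sum of the costs of the steps whose new lengths lie at most $r$.

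\textbf{Entropy budget.} We choose the $\lambda_k$ sparse enough that each step's smallest new length $n_k$ satisfies $\sum_{k:n_k\le r}\log(2/t_k)\le 2r$. For example, we may require $\log(2/t_k)\le n_k$, which is possible since $\log(2/t_k)=O(\log\lambda_k)$ while $n_k\ge a(\lambda_k)$, together with $n_k$ strictly increasing. Each step then contributes at most its own new minimal length, and a sum of distinct integers up to $r$ would be too large. To fix this, require the geometric spacing $n_k\ge 2^{k}$ and a per-step cost of at most $n_k/2^{k}+1$, which gives a total of at most $2r$. This bookkeeping is the main obstacle. The costs are charged to lengths, which is delicate: the logarithmic cutoff ties $a(\lambda_k)$ to $\lambda_k$, while the queried lengths reach $q_k$. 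We must therefore verify that the conditioning cost $O(\log \lambda_k)$ is dominated by the new lengths $\ge a(\lambda_k)\approx c\log\lambda_k$. This requires choosing $c$ large relative to the exponent in $t_k$.

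\textbf{Almost-sure absence of one-way functions.} In the non-conditioning branches, the failure probabilities $O(K\lambda_k^2t_k)$ sum to a finite value. These probabilities are measured under $\mathcal P$, but under $\mathcal Q$ the new blocks at step $k$ are exactly $\mathcal P$-distributed given the past, so the bounds transfer. By Borel--Cantelli, with $\mathcal Q$-probability one only finitely many steps fail. Every candidate appears infinitely often, so each candidate fails at infinitely many $\lambda_k$: either its determinism is violated there, or the QPT${}^{\PSPACE}$ inverter succeeds with probability $1-O(\eta+\lambda_k^{-2})$. The inverter is a single uniform algorithm per candidate, because $T$ is estimated online. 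Hence no $\textsf{qOWF}_1$ or $\textsf{qOWF}_2$ exists almost surely.
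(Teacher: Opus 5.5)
Your overall route is the paper's. It has the same ingredients: a schedule in which every evaluator recurs infinitely often for each correctness notion, and a threshold dichotomy. That dichotomy either conditions on a determinism-violating event or invokes Lemma~\ref{cor:whole-function} with Lemma~\ref{lem:strict-inversion} or Lemma~\ref{lem:average-inversion}. The proof then finishes with the chain rule for relative entropy and Borel--Cantelli over the unconditioned failure events. Your variant of the average-case event, which has probability at least $t_k/2$ by averaging, works equally well. Enumerating samplers is unnecessary, since pointwise inversion succeeds on every image.

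\textbf{The step that fails is the final entropy bookkeeping.}
\begin{itemize}
\item \emph{The per-step cost $n_k/2^k+1$ is unattainable.} With your $t_k$, the cost is $\ln(2/t_k)\geq4\ln\lambda_k$. Any threshold keeping the unconditioned failure probabilities $O(K\lambda_k^2t_k)$ small costs $\Omega(\log\lambda_k)$.
\item \emph{The new lengths cannot compensate.} The smallest new length is in general the cutoff $a(\lambda_k)=\lceil c\log_2\lambda_k\rceil$. This is because Lemmas~\ref{cor:whole-function} and~\ref{lem:mean-recovery} require every queried block at or above the cutoff to be a fresh $\mathcal P$-sample.
\item \emph{The constant $c$ is fixed per evaluator.} Its inverter estimates all blocks below the cutoff and must run in polynomial time. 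Since that evaluator recurs at infinitely many steps $k$, the bound $n_k/2^k+1$ eventually falls below $4\ln\lambda_k$.
\item \emph{The spacing $n_k\geq2^k$ does not suffice either.} Combined with cost at most $n_k$, it only bounds the total by $O(r\log r)$, since up to $\log_2 r$ steps can each be charged up to $r$.
\end{itemize}

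\textbf{The repair is the one the paper uses.}
\begin{itemize}
\item Keep your first condition, cost at most $n_k$. It is achievable by taking $c$ large relative to the degree of $1/t_k$; the paper imposes $\ln(1/\theta_i)\leq a_i$.
\item Replace $n_k\geq2^k$ by doubling, $n_{k+1}>2n_k$, which is the paper's $a_{i+1}>\max\{2a_i,b_i\}$. This can always be imposed, because $\lambda_{k+1}$ is chosen freely.
\item If $k^\ast$ is the last step with $n_{k^\ast}\leq r$, the charged costs are at most $r+r/2+r/4+\cdots<2r$.
\item The chain rule gives $D(\mathcal Q_{\le r}\Vert\mathcal P_{\le r})$ \emph{at most} (not equal to) the sum of the charged costs. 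When $r$ falls inside a step's conditioned range, discard the blocks above $r$; this cannot increase relative entropy.
\end{itemize}
With this change, the rest of your argument goes through as written.
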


\begin{proof}
The construction of $\mathcal{Q}$ and its proof follow by applying the preceding lemmas to all uniform evaluators. We enumerate all possible evaluators with explicit polynomial running-time bounds. Consider each evaluator infinitely often for pointwise correctness and infinitely often for average correctness under uniform inputs. Each step of the following procedure (which eventually generates $\mathcal{Q}$) chooses a finite group of new oracle blocks and fixes them permanently.

For the evaluator considered at a given step, use its bounds $q$ and $K\ge2$ and its cutoff $a(\lambda)$ from the preceding lemmas. Set
\[
 \theta(\lambda)=\frac1{10^6K(\lambda)\lambda^8}.
\]
Choose the constant $c$ in $a(\lambda)$ once for this evaluator so that $\ln(1/\theta(\lambda))\le a(\lambda)$ at all selected parameters. These logarithmic cutoffs allow polynomial-time estimation of the operations below the cutoff.

Number all steps by $i=1,2,\ldots$, assigning each evaluator and correctness requirement an infinite subsequence of these indices. When we later analyze a fixed evaluator, we restrict to its assigned indices; $i$ always refers to the full sequence of steps.

At step $i$, use a parameter $\lambda_i\ge i+2$ and write
\[
 \theta_i=\theta(\lambda_i),\qquad
 a_i=a(\lambda_i),\qquad
 b_i=\max\{a_i,q(\lambda_i)\}.
\]
Choose the sequence of evaluators, correctness requirements, and increasing parameters in advance, with $a_{i+1}>\max\{2a_i,b_i\}$. Sample the blocks below $a_1$ from $\mathcal P$. At step $i$, keep the blocks below $a_i$ fixed and define $B_i$ according to the correctness requirement being considered:
\begin{itemize}
\item For pointwise correctness, $B_i$ is the event that some input has no output of probability at least $5/6$.
\item For average correctness, $B_i$ is the event that more than a $\theta_i$ fraction of uniformly distributed inputs have no output of probability at least $5/6$.
\end{itemize}
In both cases, these output probabilities are for the chosen evaluator at parameter $\lambda_i$.

The event $B_i$ depends on the chosen evaluator and the new oracle blocks. The output probabilities defining $B_i$ are over the evaluator's internal randomness; the probability below is over sampling the new blocks from $\mathcal P$. The rule below forces $B_i$ whenever it is sufficiently likely. Otherwise, its small probability lets us apply the preceding lemmas to obtain inversion guarantees.

If
\[
 \Pr_{\mathcal P}[B_i\mid\text{fixed blocks}]
 \ge\theta_i,
\]
sample the blocks at lengths $a_i,\ldots,b_i$ from $\mathcal P$ conditioned on $B_i$. Otherwise, sample them from $\mathcal P$ without conditioning. Sample the blocks at lengths $b_i+1,\ldots,a_{i+1}-1$ independently from $\mathcal P$. All sampled blocks are fixed permanently. The events $B_i$ are measurable and depend only on blocks fixed by this step, so later steps cannot change the evaluator's behavior at $\lambda_i$. The selection procedure need not be efficient. Let $\mathcal Q$ be the distribution of the complete oracle description generated by this procedure.

Conditioning on an event of probability $\rho$ contributes $\ln(1/\rho)$ to relative entropy. Thus, for every choice of the previously fixed blocks, step $i$ contributes at most
\[
 \ln(1/\theta_i)\le a_i.
\]
Unconditioned sampling contributes zero. The chain rule for relative entropy therefore gives
\[
 D(\mathcal Q_{\le r}\Vert\mathcal P_{\le r})
 \le\sum_{i:a_i\le r}a_i\le2r,
\]
where the last inequality follows from $a_{i+1}>2a_i$. If $r$ falls within a step, apply the bound through that step and discard the blocks above $r$; this cannot increase relative entropy.

At an unconditioned step $i$, keep the previously sampled blocks fixed and put $\lambda=\lambda_i$ and $\theta=\theta_i$. Here $K$ is the previously chosen bound for which Lemma~\ref{cor:whole-function} gives agreement probability at least $1-K\varepsilon$ for this evaluator at $\lambda_i$. By the selection rule,
\[
 \Pr_{\mathcal P}[B_i\mid\text{previously fixed blocks}]<\theta,
\]
and the new blocks are sampled directly from $\mathcal P$.

In the pointwise case, Lemma~\ref{cor:whole-function} gives a set of oracles of probability at least $1-K\theta$ that reliably compute the same function on every input. In the average case, the expected fraction of unreliable inputs is at most $2\theta$: outside $B_i$ this fraction is at most $\theta$, and $B_i$ itself has probability below $\theta$. Lemma~\ref{lem:average-inversion} therefore gives an exceptional set of probability $O(K\lambda^2\theta)$.

Let $F_i$ be the event of falling outside the common-function set in the pointwise case, or inside the exceptional set in the average case. Set $F_i=\varnothing$ at conditioned steps. The bounds hold for every choice of the previously fixed blocks, so
\[
 \Pr_{\mathcal Q}[F_i]
 \le O(K\lambda^2\theta)=O(\lambda_i^{-6}).
\]
Since $\lambda_i\ge i+2$, these probabilities are summable. By Borel--Cantelli, only finitely many $F_i$ occur with $\mathcal Q$-probability one. At each conditioned step, the conditional law assigns probability zero to $B_i^c$. A countable union of these null failure events has probability zero, so every conditioned step satisfies its event $B_i$ simultaneously with $\mathcal Q$-probability one.

Fix an oracle for which every conditioned step satisfies $B_i$ and only finitely many $F_i$ occur. Fix an evaluator satisfying one of the correctness requirements with negligible error. At its sufficiently late scheduled steps, conditioning on $B_i$ would force correctness error greater than $1/6$ on some input in the pointwise case, or average error greater than $\theta/6$ in the average case. Both contradict correctness, since $\theta$ is inverse polynomial. Hence these steps are unconditioned and outside $F_i$.

Lemma~\ref{lem:strict-inversion} or Lemma~\ref{lem:average-inversion} now gives a uniform inverter with success probability at least $2/3$ at every sufficiently late step devoted to this evaluator and correctness requirement. In the pointwise case the guarantee holds on every image, and negligible evaluation error also gives the evaluation guarantee for every input sampler. In the average case both inversion checks are covered under uniform inputs. Each evaluator has infinitely many such steps, and its inverter does not depend on the step. This contradicts one-wayness.

\end{proof}

\begin{corollary}[Oracle with \textsf{qPKE} and point-function obfuscation without \textsf{qOWF}s]
\label{cor:selected-oracle-separation}
Let $\mathcal Q$ be the distribution on the oracle descriptions of Section~\ref{sec:construction} supplied by Theorem~\ref{thm:oracle-selection}. With $\mathcal Q$-probability one, no $\textsf{qOWF}_1$ or $\textsf{qOWF}_2$ exists, while the \textsf{qPKE} construction of Section~\ref{sec:primitive-constructions} is \textsf{IND-CPA} secure with negligible correctness error and the point-function obfuscator satisfies Definition~\ref{vbb:def}. 

In particular, there is a fixed oracle $\mathcal O\in\supp(\mathcal Q)$ with all these properties, thus yielding \Cref{thm:separation}. 
\end{corollary}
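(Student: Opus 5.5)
The first property, that with $\mathcal Q$-probability one no $\textsf{qOWF}_1$ or $\textsf{qOWF}_2$ exists, is exactly item 1 of Theorem~\ref{thm:oracle-selection}. The work is therefore to show that correctness and security of both constructions hold with $\mathcal Q$-probability one, and then intersect the countably many probability-one events. The tool throughout is the relative-entropy bound \eqref{eq:entropy-prefix}. Any quantity at security parameter $\lambda$ depends only on blocks at lengths at most $r=\mathrm{poly}(\lambda)$, namely those the honest algorithms and the adversary can query. For such a quantity I will apply Lemmas~\ref{lem:pke-correctness}, \ref{lem:vbb-correctness}, \ref{lem:conditioned-pke} and~\ref{vbb:conditioned-security}. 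I take $\mathcal R$ to be $\mathcal Q_{\le r}$ and $L=2r$, regarding the blocks at other lengths as part of $W$. Restricting to the relevant block is monotone for relative entropy, so the bound $L$ transfers.

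For correctness, Lemma~\ref{lem:pke-correctness} gives expected decryption error at most $4M^2/Z+\sqrt{r/N}$, which is exponentially small in $\lambda$ since $N=2^\lambda$ and $r$ is polynomial. By Markov's inequality, the probability that the error exceeds $2^{-\lambda/4}$ is also exponentially small. These bounds are summable in $\lambda$, so Borel--Cantelli shows that with probability one the error is negligible for all large $\lambda$. The obfuscator is handled identically through Lemma~\ref{lem:vbb-correctness}, with $G=2^\lambda$. For \emph{perfect} correctness of \textsf{qPKE}, I additionally use that good pairs of rows give exact decryption (Observation~\ref{obs:oracle-properties}). The event that all pairs of rows at all lengths are good has positive $\mathcal Q$-probability, and this is what the later fixed-oracle choice relies on.

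For security, I fix a uniform QPT adversary $\mathcal A$ with query bound $q(\lambda)$ and choose $h=2^{\lambda/2}$ in Lemma~\ref{lem:conditioned-pke}. Its absolute-value version gives
\[
\mathbb E_{\mathcal Q}|a_{\mathcal A}(\mathcal O)|\le O\bigl(q2^{-\lambda/4}+\sqrt{(2r+\ln 2)2^{-\lambda/2}}+M^2/Z\bigr)=2^{-\Omega(\lambda)}.
\]
Markov's inequality and Borel--Cantelli then give, with probability one, $|a_{\mathcal A}(\mathcal O)|\le 2^{-\lambda/8}$ for all large $\lambda$. I also have to pass from the single-bit, fixed-$\mathcal O$ advantage $a_{\mathcal A}$ to \textsf{IND-CPA} with chosen message pairs and a stateful $A_1$. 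I will do this with the reduction already noted inside Lemma~\ref{lem:conditioned-pke}: messages are bits, and differing choices are absorbed into the adversary, with $\mathsf{pk}$ uniform since $O_{\mathrm{samp}}$ gives every p-key probability $1/N$. For the obfuscator I use the maximum-over-$a$ statement of Lemma~\ref{vbb:conditioned-security}, with entropy budget $2r+(\lambda+1)\ln 2$ and $h=2^{\lambda/2}$, and the same Borel--Cantelli argument. The simulator is the universal one, hence uniform.

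Finally, there are countably many uniform adversaries and each contributes a probability-one event. Intersecting them with item 1 of Theorem~\ref{thm:oracle-selection}, the correctness events, and the positive-probability event that all pairs of rows are good leaves a set of positive $\mathcal Q$-probability, from which I pick a fixed oracle. The main obstacle I expect is the quantifier order in the security step. Security demands negligibility for each fixed oracle and each adversary, whereas the lemmas bound only expectations over $\mathcal Q$. This is why the bounds must be exponentially small in $\lambda$ (so that Markov plus Borel--Cantelli apply per adversary), and why the absolute-value and max-over-$a$ variants are needed: they make the sign and worst-case point oracle-dependent without losing the entropy bound.
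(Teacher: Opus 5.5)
Your argument for the probability-one properties is the paper's: the same four lemmas with $\mathcal R=\mathcal Q_{\le r}$ and $L=2r$ (restricted to the relevant block where needed), $h=\lceil 2^{\lambda/2}\rceil$, and the budgets $L+\ln 2$ and $L+(\lambda+1)\ln 2$ for the absolute-value and max-over-points versions. Both then finish with Markov, Borel--Cantelli, and a countable intersection over uniform adversaries. Your exponential thresholds in place of the paper's $\lambda^{-c}$ for each $c$ work equally well. For the corollary as literally stated (negligible correctness error) this already suffices. The intersection with item~1 of Theorem~\ref{thm:oracle-selection} has $\mathcal Q$-probability one. Since $\supp(\mathcal Q)$ has full measure, you can pick the oracle there; you should say this, because the statement asks for $\mathcal O\in\supp(\mathcal Q)$.

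The step that does not go through as written is your claim that $\mathcal G$, the event that all pairs of rows at all lengths are good, has positive $\mathcal Q$-probability. You rely on it for exact \textsf{qPKE} correctness. You give no argument, and the relative-entropy bound you call your tool throughout cannot supply one. Let $C_n$ be the event that some pair of rows at length $n$ is bad. A budget of $2n$ plus data processing only yields $\mathcal Q(C_n)\le(2n+\ln 2)/\ln(1/\mathcal P(C_n))$, roughly the constant $2/(7\ln 2)$, which is not summable. Indeed, the mixture $\sum_{n\ge1}2^{-n}\mathcal P(\cdot\mid C_n)$ gives $\mathcal G$ probability zero. Yet by convexity its relative entropy on lengths $\le r$ is at most $\sum_{n\le r}2^{-n}\ln(1/\mathcal P(C_n))=O(1)$.

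The missing idea is a pointwise density bound coming from how $\mathcal Q$ is built. Each conditioned step conditions on an event of $\mathcal P$-probability at least $\theta_i$, so $d\mathcal Q_{\le n}/d\mathcal P_{\le n}\le\prod_{i:a_i\le n}\theta_i^{-1}\le\exp\bigl(\sum_{i:a_i\le n}a_i\bigr)\le e^{2n}$. Hence $\mathcal Q(C_n)\le e^{2n}\cdot 4\cdot 2^{-7n}=4(e^2/128)^n$ and $\mathcal Q(\mathcal G^c)\le 4\sum_{n\ge1}(e^2/128)^n<1$. So $\mathcal G$ still has positive measure after intersecting with the probability-one event. Exact correctness at the chosen oracle then follows from Observation~\ref{obs:oracle-properties}, which is exactly how the paper proceeds.
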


\begin{proof}[Proof of \Cref{cor:selected-oracle-separation}]\label{pg:thm-main-separation}
Fix a uniform adversary for each primitive, and let $q(\lambda)$ bound the queries in their security experiments, including honest preparation and the obfuscation simulator's calls. Let $r(\lambda)\geq2\lambda$ be a polynomial bound on all accessible oracle lengths. Theorem~\ref{thm:oracle-selection} gives relative-entropy budget $L=2r(\lambda)$.

For \textsf{qPKE}, apply Lemma~\ref{lem:conditioned-pke} with $n=\lambda$, $N=2^\lambda$, and $h=\lceil2^{\lambda/2}\rceil$. Use $L+\ln2$ for the absolute-advantage bound. For correctness, apply Lemma~\ref{lem:pke-correctness} to the block at length $\lambda$ with budget $L$, and bound the maximum of the two correctness errors by their sum. Since $q,r$ are polynomial and $M^2/Z=2^{-8\lambda}$, the expected absolute advantage and correctness error are at most $\poly(\lambda)2^{-\Omega(\lambda)}$.

For point-function obfuscation, the block at length $2\lambda$ has $C=G=2^\lambda$, $N=2^{2\lambda}$, $Y=2^{16\lambda}$, $M=2^{4\lambda}$, and $Z=2^{24\lambda}$. Apply Lemma~\ref{vbb:conditioned-security} with $h=\lceil2^{\lambda/2}\rceil$ and $\Lambda=L+(\lambda+1)\ln2$. Its expected maximum absolute gap is at most $\poly(\lambda)2^{-\Omega(\lambda)}$. Honest preparation and evaluation use only this block, so Lemma~\ref{lem:vbb-correctness} with relative-entropy budget $4\lambda$ gives expected correctness error $2^{-\Omega(\lambda)}$.

Let $e(\mathcal O)$ denote either absolute security gap (maximized over points for obfuscation) or either correctness error. For each positive integer $c$, Markov's inequality gives
\[
 \sum_{\lambda\geq1}\Pr_{\mathcal O\leftarrow\mathcal Q}
 [e(\mathcal O)>\lambda^{-c}]
 \leq\sum_{\lambda\geq1}\lambda^c\,
 \mathbb E_{\mathcal O\leftarrow\mathcal Q}e(\mathcal O)<\infty.
\]
Borel--Cantelli and a countable intersection over uniform adversaries and positive integers $c$ show that all these errors are negligible with probability one. Intersect this event with the probability-one absence of both $\textsf{qOWF}_1$s and $\textsf{qOWF}_2$s from Theorem~\ref{thm:oracle-selection}, and call the resulting event $\mathcal H$. Thus $\mathcal Q(\mathcal H)=1$, with the single obfuscation simulator of Section~\ref{sec:point-obf} working for all points. Independent bitwise encryption extends \textsf{qPKE} to polynomial-length messages by a polynomial hybrid and a union bound.

For the fixed oracle's exact \textsf{qPKE} correctness, let $C_n$ be the event that some pair of rows at length $n$ has $\mathcal A_j\cap\mathcal B_j\ne\varnothing$, and put $\mathcal G=\bigcap_{n\geq1}C_n^c$. At selection step $i$, conditioning increases probabilities by at most $\theta_i^{-1}\leq e^{a_i}$. Using the construction of $\mathcal Q$ and $\mathcal P(C_n)\leq4\cdot2^{-7n}$, we obtain
\[
 \begin{aligned}
 \mathcal Q(C_n)&\leq
 \exp\!\left(\sum_{i:a_i\leq n}a_i\right)\mathcal P(C_n)
 \leq4(e^2/128)^n,\\
 \mathcal Q(\mathcal G^c)&\leq4\sum_{n\geq1}(e^2/128)^n<1.
 \end{aligned}
\]
Hence $\mathcal Q(\mathcal G\cap\mathcal H)=\mathcal Q(\mathcal G)>0$. The support of $\mathcal Q$ has full measure, so choose $\mathcal O\in\mathcal G\cap\mathcal H\cap\supp(\mathcal Q)$. Observation~\ref{obs:oracle-properties} gives exact honest \textsf{qPKE} correctness there, completing the proof.
\end{proof}

\begin{corollary}[Access to unitary implementations]
\label{cor:unitary-access}
There is no uniform relativizing fully black-box construction of a $\textsf{qOWF}_1$ or $\textsf{qOWF}_2$ from either primitive, even with access to full unitary implementations of its algorithms and their inverse, conjugate, transpose, and controlled versions. For point-function obfuscation, the reduction must be valid for the single-program guarantee of Definition~\ref{vbb:def}.
\end{corollary}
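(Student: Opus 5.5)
The plan is to derive Corollary~\ref{cor:unitary-access} from the fixed oracle $\mathcal O$ of Corollary~\ref{cor:selected-oracle-separation} by the standard argument that a relativizing fully black-box construction, instantiated with a primitive that exists relative to $\mathcal O$, would give a $\textsf{qOWF}_1$ or $\textsf{qOWF}_2$ relative to $\mathcal O$, contradicting Theorem~\ref{thm:oracle-selection}. I will fix the notion of fully black-box construction: a uniform QPT construction $E^{(\cdot)}$ and reduction $R^{(\cdot),(\cdot)}$ such that for every implementation $\Pi$ of the source primitive, given as unitaries with inverse, conjugate, transpose and controlled access, $E^{\Pi}$ satisfies the determinism requirement, and for every inverter $I$ breaking $E^{\Pi}$, the machine $R^{I,\Pi}$ breaks $\Pi$. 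Relativization means this holds even when $\Pi$, $I$ and $R$ are all allowed access to an additional oracle, here $\mathcal O$ including $U_{\PSPACE}$.

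First, I will take $\Pi$ to be the \textsf{qPKE} scheme (or the point-function obfuscator) of Section~\ref{sec:primitive-constructions} relative to $\mathcal O$. Its algorithms are QPT circuits calling $O_{\mathrm{samp}},O_{\mathrm{enc}},O_{\mathrm{dec}}$. By Observation~\ref{obs:oracle-properties}, each oracle operation is a real symmetric involution, so its inverse, transpose and complex conjugate coincide with the operation itself, and controlled versions are available by the gate conventions. Hence the unitary purification of every algorithm of $\Pi$, together with its inverse, conjugate, transpose and controlled forms, is implementable by a QPT machine with forward queries to $\mathcal O$. Consequently $E^{\Pi}$ is a QPT algorithm relative to $\mathcal O$, and it is a candidate $\textsf{qOWF}_1$ or $\textsf{qOWF}_2$ relative to $\mathcal O$. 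Its determinism holds because the construction's correctness guarantee applies to the valid implementation $\Pi$. For point-function obfuscation, I need the reduction's guarantee to require only the single-program VBB security of Definition~\ref{vbb:def}, which is exactly the property Corollary~\ref{cor:selected-oracle-separation} establishes; this is why the statement carries that restriction.

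Second, by Theorem~\ref{thm:oracle-selection} and the choice of $\mathcal O$, $E^{\Pi}$ is not one-way. There is therefore a uniform QPT inverter $I^{\mathcal O}$ succeeding with non-negligible probability, in fact at least $2/3$ at infinitely many parameters, for the relevant input sampler. Plugging in, $R^{I^{\mathcal O},\Pi}$ is a uniform QPT algorithm relative to $\mathcal O$ that breaks \textsf{IND-CPA} security of $\Pi$, or the VBB security of the obfuscator, with non-negligible advantage. This contradicts Corollary~\ref{cor:selected-oracle-separation}, which asserts security against all uniform QPT adversaries relative to $\mathcal O$.

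The main obstacle is the fine print of the security quantifiers. The inverter from Theorem~\ref{thm:oracle-selection} succeeds only on the infinitely many scheduled parameters, so I must make sure the notion of breaking being used is the usual infinitely-often non-negligible one, which the reduction then transfers to an infinitely-often break of $\Pi$. For $\textsf{qOWF}_1$ I also need the inverter to work for the construction's own quantum sampler, and Lemma~\ref{lem:strict-inversion} handles this because it inverts every image. A further point is that the reduction's queries to $I$ are oracle calls to a classical-input, classical-output QPT machine. These can be simulated coherently with $\mathcal O$ access, including inverse queries, since $I$ itself uses only self-inverse oracles.
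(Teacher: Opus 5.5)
Your proposal is correct and follows essentially the same argument as the paper. You purify the primitive's algorithms into unitary circuits over the real, self-inverse oracle, so inverse, conjugate, transpose and controlled access come for free; you then instantiate the construction at the fixed oracle of Corollary~\ref{cor:selected-oracle-separation} and combine the inverter from Theorem~\ref{thm:oracle-selection} with the reduction to contradict the primitive's security. The only cosmetic difference is in how you obtain correctness of the instantiated evaluator: you build it into your definition of a fully black-box construction, whereas the paper treats it as a case split (either \textsf{qOWF} correctness fails, or the inverter exists).
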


\begin{proof}
Both the construction and its security reduction must remain valid relative to a common additional oracle. Defer measurements, retain discarded registers, and store random coins in workspace to obtain a uniform polynomial-size unitary implementation $W^{\mathcal O}$ of each primitive algorithm. Reverse its gates for the inverse, conjugate them for the entrywise conjugate, use $(W^{\mathcal O})^T=(\overline{W^{\mathcal O}})^\dagger$ for the transpose, and control each gate for controlled access. The required oracle operations are available. These implementations act correctly on arbitrary inputs and workspace, while honest challenges keep their private workspace hidden.

Instantiate the proposed construction at the oracle of Corollary~\ref{cor:selected-oracle-separation} and expand its implementation queries into these circuits. The result is a uniform QPT evaluator with polynomial overhead. Either its \textsf{qOWF} correctness fails, or the inversion argument in Theorem~\ref{thm:oracle-selection} supplies an inverter succeeding at infinitely many parameters. The latter contradicts the reduction and primitive security. The inversion guarantee covers every challenge sampler for $\textsf{qOWF}_1$ and the prescribed uniform sampler for $\textsf{qOWF}_2$.
\end{proof}

\section{Constructing \textsf{qPKE} from \textsf{qiO} and \textsf{EV-OWPuzz}}
\label{chap:constructions}
In this section, we construct \textsf{qPKE} from \textsf{EV-OWPuzz} and \textsf{qiO}.
We use the \textsf{qiO} of Definition~\ref{def:semantic-qio}, which obfuscates quantum circuits computing classical functions and produces classical programs. Let $(O,E)$ be its obfuscation and evaluation algorithms, respectively. For brevity, we remove the security parameter $1^\lambda$ from their input syntax. We amplify its correctness as in Remark~\ref{rem:qio-simultaneous}, so that, except for a negligible fraction of obfuscated programs, evaluation is correct with negligible error on every input. Throughout this section, all samplers, verifiers, obfuscation algorithms, and adversaries are QPT unless stated otherwise, with classical inputs and outputs. All lengths are bounded by fixed polynomials in $\lambda$, and programs compared under \textsf{qiO} are padded if required to have the same circuit size along with input and output lengths.

\subsection{Suppression lemma}
\label{sec:ev-qio}
\label{sec:suppression}

For an \textsf{EV-OWPuzz} scheme $(\textsf{Samp},V)$ with correctness error $\eta$, write $v_p(a):=\Pr[V(p,a)=1]$ for any $p,a$, as in Definition~\ref{def:ev-puzzle}. A puzzle verifier may accept many answers, including answers with intermediate acceptance probabilities. The goal of this section is to design a procedure that accepts only one accepting answer and suppresses the acceptance probability of every other input, including other previously valid answers. We do this so that the resulting circuit implements a point function, which can then be used to prove the point-hiding property for \textsf{qiO} that we need to prove security of our \textsf{qPKE} construction given in \Cref{sec:pke-construction}. We follow the threshold selection and hashing argument of Aharonov, Ben-Or, Brand\~ao, and Sattath~\cite[Lemma~39 and Section~5]{ABBS22}, and then use independent repetition to reduce the error.

\begin{lemma}[Suppressing all but one accepting answer]\label{lem:ev-isolation}
For any \textsf{EV-OWPuzz} with answers padded to a fixed length $m=m(\lambda)\geq\lambda+5$, and for any sampled puzzle $p$ that has an answer $a$ satisfying $v_p(a)\geq2/3$ (which happens with probability at least $1-3\eta$ over the sampling of $p$), there is an efficient randomized procedure that with probability at least inverse polynomial in $m$, 
constructs a polynomial-size quantum circuit $W_p$ such that there is an answer $a^\star$ satisfying 
$\Pr[W_p(a^\star)=1]\geq1-2^{-\lambda}$, $
\Pr[W_p(a)=1]\leq2^{-\lambda}$ for every $a\ne a^\star,$\footnote{i.e., $W_p$ computes a point function at the point $a^\star$} and $v_p(a^\star)>1/3$. 
\end{lemma}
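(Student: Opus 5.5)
We follow the hashing-and-threshold strategy of \cite{ABBS22}, combined with repetition. First, we amplify. Let $V_k(p,a)$ run $k=O(\lambda+m)$ independent copies of $V(p,a)$ and accept iff the fraction of acceptances is at least $1/2$. By Hoeffding, answers with $v_p(a)\geq 2/3$ are accepted with probability $1-2^{-\Omega(k)}$, and answers with $v_p(a)\leq1/3$ with probability $2^{-\Omega(k)}$. This alone does not give uniqueness, because many answers may have $v_p(a)$ in $(1/3,2/3)$ or above $2/3$.

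To isolate an answer, we first pick a threshold $\theta$ uniformly from a polynomial-size grid $\{1/3+i/(3R): 1\le i< R\}$ with $R=\poly(m)$. Let $S_\theta=\{a: v_p(a)\geq\theta\}$. We then pick an $\ell\in\{0,\ldots,m\}$ uniformly and a pairwise independent hash $h:\bits^m\to\bits^{\ell}$. We keep only answers with $h(a)=0$. The key combinatorial claim is the following. With probability $1/\poly(m)$ over $(\theta,\ell,h)$, exactly one answer $a^\star$ with $h(a^\star)=0$ has $v_p(a^\star)\geq\theta$, and every other bucket element has $v_p(a)\leq\theta-1/(3R)$.

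To justify the claim, we use the gap in the grid. Since $S_{\theta}$ is nonempty for every grid value (the answer $a$ with $v_p(a)\ge 2/3$ lies in all of them), a pigeonhole argument over the $R$ grid points shows the following. For at least a constant fraction of $\theta$, we have $|S_{\theta-1/(3R)}|\le 2|S_\theta|$, since the sizes are at most $2^m$ and can double at most $m$ times; it therefore suffices to take $R\geq 4m$. For such $\theta$, choose $\ell$ with $2^{\ell}\approx 4|S_{\theta-1/(3R)}|$. The standard Valiant--Vazirani second-moment computation then gives the following: with constant probability over $h$, exactly one element of $S_\theta$ hashes to $0$ and no other element of $S_{\theta-1/(3R)}$ does. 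The overall success probability is $\Omega(1/m)$.

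Conditioned on this event, we define $W_p(a)$ to accept iff $h(a)=0$ and, among $k'=\poly(R,\lambda)$ independent runs of $V(p,a)$, the fraction of acceptances is at least $\theta-1/(6R)$. Hoeffding with gap $1/(6R)$ and $k'=O(R^2\lambda)$ gives $\Pr[W_p(a^\star)=1]\ge1-2^{-\lambda}$ and $\Pr[W_p(a)=1]\le2^{-\lambda}$ for all other $a$, whether they are outside the bucket or below the lower threshold. Finally, $v_p(a^\star)\geq\theta>1/3$. The bound $1-3\eta$ follows from Markov's inequality on $\mathbb E[1-v_p(a)]\le\eta$.

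The main obstacle is the isolation step. We must choose $\theta$ so that a genuine inverse-polynomial gap separates $a^\star$ from every other retained answer, while keeping bucket sizes controlled even though the acceptance probabilities form an arbitrary profile. The doubling/pigeonhole choice of $\theta$ is the point where we expect to have to track constants carefully.
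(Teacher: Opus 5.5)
Your proof is correct and follows essentially the same route as the paper: guess a threshold, use a pigeonhole argument over nested level sets, isolate one answer with a pairwise-independent hash in the Valiant--Vazirani manner, then apply a Hoeffding midpoint test. The only real difference is that you put a fine grid of thresholds directly on $v_p$ inside $(1/3,2/3)$, whereas the paper first amplifies and then thresholds at $j/m$; this makes your initial amplification step unnecessary, and indeed it is never used. One small fix is needed: let $\ell$ range over $\{0,\ldots,m+3\}$ (the paper uses hash output length $k+2$ with $k\le m$), because $|S_{\theta-1/(3R)}|$ may be close to $2^m$, and then no $\ell\le m$ gives $2^\ell\approx 4|S_{\theta-1/(3R)}|$, so your union bound breaks down.
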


\begin{proof}
We first repeat $V(p,\cdot)$ independently $O(\log m)$ times and take a majority. Write $q(a)$ for the resulting acceptance probability, and choose the repetition count so that
\[
v_p(a)\leq1/3\Longrightarrow q(a)\leq\frac{1}{8m},\qquad
v_p(a)\geq2/3\Longrightarrow q(a)\geq1-\frac{1}{8m}.
\]
Note that this does not give any guarantee on answers with intermediate acceptance probabilities, previously. Next, we guess $j$ uniformly from $\{1,\ldots,m-2\}$ and $k$ uniformly from $\{0,\ldots,m\}$, and sample a pairwise independent hash $h:\{0,1\}^m\to\{0,1\}^{k+2}$ using a uniform binary matrix and an independent uniform offset. Thus, distinct inputs have independent uniform hash values. We want the zero bucket, i.e., the set of answers that hash to zero, to keep one answer with acceptance at least $(j+1)/m$ and no other answer with acceptance greater than $j/m$. Define $W_p$ to reject inputs whose hash is nonzero; otherwise, it repeats the amplified verifier independently $O(m^2\lambda)$ times and accepts when the fraction of accepting trials exceeds $(j+1/2)/m$.

We now bound the probability that these guesses succeed. For $j=1,\ldots,m-1$, let $S_j=\{a:q(a)\geq j/m\}$. These sets are nested and $S_{m-1}$ is nonempty. Some $j\in\{1,\ldots,m-2\}$ satisfies $|S_j|\leq4|S_{j+1}|$; otherwise $|S_1|>4^{m-2}|S_{m-1}|>2^m$. For this $j$, the sets
\[
A=\{a:q(a)\geq(j+1)/m\},\qquad B=\{a:q(a)>j/m\}
\]
satisfy $\varnothing\ne A\subseteq B$ and $|B|\leq4|A|$.

Fix this guess of $j$ and a guess satisfying $2^k\leq|B|<2^{k+1}$. Write $M=2^{k+2}$. For each $a\in A$, pairwise independence and a union bound give
\[
\begin{aligned}
&\Pr[h(a)=0\text{ and }h(b)\ne0\text{ for all }b\in B\setminus\{a\}]\\
&\hspace{2em}\geq\frac1M-\frac{|B|-1}{M^2}\geq\frac{1}{2M}.
\end{aligned}
\]
These events are disjoint for distinct $a\in A$. Since $M\leq4|B|$, their total probability is at least $|A|/(8|B|)\geq1/32$. The guesses of $j$ and $k$ contribute factors $1/(m-2)$ and $1/(m+1)$, giving success probability at least $1/[32(m-2)(m+1)]$, which is inverse polynomial in $m$.

On this event, the zero bucket (i.e., the set of answers that hash to zero) contains one answer $a^\star$ with acceptance at least $(j+1)/m$, and every other retained answer has acceptance at most $j/m$. Hoeffding's inequality applied to the midpoint test therefore gives error at most $2^{-\lambda}$ on every input. Moreover, $q(a^\star)\geq2/m>1/(8m)$ implies $v_p(a^\star)>1/3$ by the initial amplification.

Finally, whenever the lemma's hypothesis fails, the verifier rejects every answer with probability greater than $1/3$. Since it rejects an honestly sampled puzzle-answer pair with probability at most $\eta$, such puzzles occur with probability at most $3\eta$.
\end{proof}

\subsection{From \textsf{EV-OWPuzz} to Hiding a Random Point under \textsf{qiO}}
\label{sec:point-hiding}

We now use the suppression lemma to prove the point-hiding property needed for the security of our \textsf{qPKE} scheme, described in \Cref{sec:pke-construction}. Write $P_u(x)=\mathbf{1}\{x=u\}$ for the point function at $u\in\{0,1\}^m$, where $m$ is the answer length fixed in Lemma~\ref{lem:ev-isolation}, and write $Z(x)=0$ for the zero function on the same domain. To apply obfuscation security, we pad all the programs that we will compare to a common size $s=s(\lambda)$, which is a fixed polynomial large enough for every circuit size obfuscated in the proof below. 

\begin{lemma}[Hiding a random point]
\label{lem:ev-point-hiding}
Assume \textsf{EV-OWPuzz} exists and let $(O,E)$ be a \textsf{qiO} scheme satisfying
Definition~\ref{def:semantic-qio}.
For every polynomial $t$, $t$ independent obfuscations of $P_u$ under $O$, for a uniformly sampled $u\leftarrow\bits^m$, is computationally
indistinguishable from $t$ independent obfuscations of $Z$ under $O$.
\end{lemma}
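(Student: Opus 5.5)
We prove the contrapositive. Suppose a QPT distinguisher $D$ separates $t$ obfuscations of $P_u$ (for uniform $u$) from $t$ obfuscations of $Z$ with non-negligible advantage $\epsilon$. We will turn $D$ into a solver for the \textsf{EV-OWPuzz}.

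First, a hybrid over the $t$ copies reduces to a single-copy distinguisher with advantage $\epsilon/t$. The remaining copies are generated by the reduction itself. For $P_u$ copies this needs $u$ in the clear, so we keep the copies indexed below the switched one as obfuscations of $Z$ and the copies above it as obfuscations of $P_u$, with the circuit for $P_u$ supplied as classical auxiliary input where needed.

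Given a challenge puzzle $p$, the reduction proceeds as follows.
\begin{enumerate}
\item Run the suppression procedure of Lemma~\ref{lem:ev-isolation} to obtain a circuit $W_p$. With inverse-polynomial probability, and for all but a $3\eta$ fraction of puzzles, $W_p$ admissibly implements $P_{a^\star}$ with error $2^{-\lambda}$, and $v_p(a^\star)>1/3$.
\item Sample a uniform mask $r$ and form the circuit $W'_p(x)=W_p(x\oplus r)$, which admissibly implements $P_{a^\star\oplus r}$.
\item Note that $u:=a^\star\oplus r$ is uniform. Moreover, by \textsf{qiO} security, obfuscating $W'_p$ is indistinguishable from obfuscating a canonical circuit for $P_u$, even given $(p,r)$ as auxiliary information.
\end{enumerate}
Hence $D$ distinguishes $O(W'_p)$ from $O(Z)$ with advantage $\epsilon/t-\negl$, conditioned on successful suppression. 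Averaging shows the advantage is non-negligible on a noticeable fraction of these runs.

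Next comes the extraction step, adapted from the single-point \textsf{diO} argument of \cite[Section~6.1]{BCP14}. Both $W'_p$ and $Z$ are admissible, and they differ only at the point $u$. We binary-search for $u$ bit by bit. For each prefix $\pi$, consider the circuit $C_\pi(x)=W'_p(x)\cdot\mathbf 1\{x\text{ has prefix }\pi\}$. Then $C_\pi$ equals $W'_p$ functionally when $u$ has prefix $\pi$, and equals $Z$ otherwise, in each case up to $2^{-\lambda}$ pointwise error. By \textsf{qiO} security:
\begin{itemize}
\item when $u$ has prefix $\pi$, obfuscations of $C_\pi$ and $W'_p$ are indistinguishable;
\item when $u$ does not, obfuscations of $C_\pi$ and $Z$ are indistinguishable.
\end{itemize}
We estimate $D$'s acceptance probability on $O(C_{\pi0})$ and on $O(C_{\pi1})$ by sampling, and extend the prefix with whichever bit behaves like $W'_p$, i.e.\ is far from the $Z$ baseline. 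Estimating the two baselines to accuracy $\epsilon/(8t)$ with polynomially many samples makes each step correct except with negligible failure plus a negligible \textsf{qiO} loss. After $m$ steps we recover $u$ and output $a^\star=u\oplus r$.

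Since $v_p(a^\star)>1/3$, this answer is accepted with probability above $1/3$. The overall success probability is therefore at least inverse-polynomial times $\epsilon/t$ times $1/3$, minus $3\eta$, which contradicts the security of the \textsf{EV-OWPuzz}.

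The main obstacle is that in the extraction, \textsf{qiO} security is only invoked between circuits that are admissible implementations of the same function. This holds only on the suppression success event, and the reduction cannot tell whether it occurred. We handle this by fixing the randomness $(p,r,h,j,k)$ in advance, so that the reduction ``behaves as if'' suppression succeeded. Any guarantee is then claimed only on the success event, and the required indistinguishability statements are conditioned on it, with $(p,r,h,j,k)$ treated as auxiliary input. On failure runs the output may be arbitrary; this is harmless, because the final bound only needs the success runs, which occur with inverse-polynomial probability.

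A secondary issue is that $D$'s advantage may be signed differently across runs. We take care of this by estimating the baseline acceptance on $O(Z)$ and on $O(W'_p)$ directly within each run, instead of assuming a fixed sign.
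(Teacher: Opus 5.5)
Your proposal is correct and follows essentially the same route as the paper. It uses suppression via Lemma~\ref{lem:ev-isolation}, masks by a uniform $r$ so that $u=a^\star\oplus r$ is uniform, and replaces $P_u$ by $W_p(\cdot\oplus r)$ under \textsf{qiO} on the success event, justified as in the paper by treating each fixed successful realization as auxiliary information. It then runs a \cite{BCP14}-style bitwise extraction against estimated baselines, and $v_p(a^\star)>1/3$ yields the contradiction with puzzle security.

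The differences are cosmetic. Your hybrid over the $t$ copies is unnecessary, and its ``circuit for $P_u$ as auxiliary input'' phrasing is confusing: the reduction holds $W'_p$ and can generate all $t$ obfuscations of $W'_p$, $Z$, and the test circuits itself, which also avoids your $1/t$ loss. The paper also uses non-adaptive per-coordinate circuits $Q_{r,i}$ instead of your adaptive prefix search, but either works.
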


We adapt the single-point extraction argument of Boyle, Chung, and Pass~\cite[Section~6.1, Theorem~6.2]{BCP14} to the quantum circuit obtained from suppression.

\begin{proof}
Suppose a QPT distinguisher $D$ has absolute distinguishing gap at least $\delta(\lambda)=1/q(\lambda)$ for a polynomial $q$ and infinitely many security parameters. Given a puzzle $p$, our inverter runs suppression once to obtain $W$, chooses $r\leftarrow\{0,1\}^m$, and forms the padded circuit $Q_r(x)=W(x\oplus r)$.

By \Cref{lem:ev-isolation}, conditioned on the suppression procedure being successful with answer $a^\star$, $Q_r$ computes the point function at $u=a^\star\oplus r$ with error at most $2^{-\lambda}$ on every input. Since $r$ is uniform, so is $u$. By the indistinguishability security of \textsf{qiO}, conditioned on the successful trial of the suppression procedure, replacing the $t$ independent obfuscations of $P_u$ by obfuscations of $Q_r$, one at a time, changes $D$'s average acceptance probability by a negligible additive term. Moreover, the negligible change in acceptance probability holds for every $W$ in the support of a successful trial and all $r$. Otherwise, select $W,r,a^\star$ for which the change is at least inverse-polynomial at infinitely many security parameters, and compare the acceptance probabilities for the obfuscations of $Q_r,P_u$, i.e., output $1$ on acceptance. This gives a distinguisher that violates the indistinguishability security of \textsf{qiO} (see Definition~\ref{def:semantic-qio}).

Put $\alpha_r=\Pr[D(O(Q_r)_1,\ldots,O(Q_r)_t)=1]$ and $\beta=\Pr[D(O(Z)_1,\ldots,O(Z)_t)=1]$. For all sufficiently large parameters at which $D$ has gap at least $\delta$, we have $|\mathbb E_r[\alpha_r-\beta]|\geq\delta/2$. Since $|\alpha_r-\beta|\leq1$, at least a $\delta/4$ fraction of choices of $r$ satisfy $|\alpha_r-\beta|\geq\delta/4$.

For each input position $i$, the inverter forms
\[
Q_{r,i}(x)=
\begin{cases}
Q_r(x),&x_i=1,\\
0,&x_i=0,
\end{cases}
\]
and pads it to size $s$. If $u_i=1$, this circuit computes the same point function as $Q_r$; if $u_i=0$, it computes zero, in both cases with error at most $2^{-\lambda}$ on every input. Let $\alpha_{r,i}$ be $D$'s acceptance probability on $t$ independent obfuscations of $Q_{r,i}$. By the indistinguishability security of \textsf{qiO}, $\alpha_{r,i}$ is negligibly close to $\alpha_r$ when $u_i=1$ and to $\beta$ when $u_i=0$.

The inverter estimates $\alpha_r$, $\beta$, and all $\alpha_{r,i}$ to accuracy $\delta/100$, using fresh independent obfuscations and independent runs of $D$, and then taking the fraction of accepting runs. Taking $O(\delta^{-2}\log(m+2))$ samples per estimate makes all estimates accurate with probability at least $3/4$ by standard Chernoff bounds and a union bound over the $m+2$ estimates. Next, the inverter sets $\widehat u_i=1$ if the estimate of $\alpha_{r,i}$ is closer to that of $\alpha_r$, and sets $\widehat u_i=0$ otherwise. For any $r$ satisfying $|\alpha_r-\beta|\geq\delta/4$ and sufficiently large $\lambda$, the negligible differences above are less than $\delta/100$. When all estimates are accurate (which conditioned on the suppression procedure being successful, happens with probability at least $3/4$), the distance to the correct estimate is at most $3\delta/100$, while the distance to the other is at least $\delta/4-3\delta/100$. Thus, conditioned on successful suppression and a choice of $r$ satisfying $|\alpha_r-\beta|\geq\delta/4$, $\widehat u=u$, and hence the inverter outputs $\widehat u\oplus r=a^\star$, with probability at least $3/4$ over the independent obfuscations and runs of $D$. 
Note that an honestly sampled puzzle satisfies the suppression hypothesis with probability at least $1-3\eta$. For each such puzzle, suppression succeeds with probability at least $\gamma=\gamma(\lambda)$, an inverse-polynomial bound supplied by Lemma~\ref{lem:ev-isolation}. Moreover, conditioned on successful suppression, there are at least a $\delta/4$ fraction of choices of $r$  such that $|\alpha_r-\beta|\geq\delta/4$, and hence, permits the inverter to recover $a^\star$ with probability at least $3/4$. Finally, we know that $a^\star$ is accepted by the puzzle verifier with probability at least $1/3$. Hence the probability that the inverter outputs an answer that $V(p,\cdot)$ (the puzzle verifier's verification with respect to $p$) accepts, is at least
\[
(1-3\eta)\,\gamma\,\frac{\delta}{4}\,\frac34\,\frac13
=\frac{(1-3\eta)\gamma\delta}{16}
\]
on infinitely many security parameters, contradicting the \textsf{EV-OWPuzz} security since $\frac{(1-3\eta)\gamma\delta}{16}$ is non-negligible in $\lambda$.

Finally, the inverter is clearly a uniform QPT algorithm, since suppression and obfuscation run in polynomial time on every trial, which completes the proof of the lemma.
\end{proof}

\subsection{\textsf{qPKE} construction}
\label{sec:pke-construction}

We can now state the construction and prove security of it using Lemma~\ref{lem:ev-point-hiding}. As noted in the technical overview, the \textsf{EV-OWPuzz} assumption only shows up in the security proof to prove point hiding; the \textsf{qPKE} algorithms themselves only use \textsf{qiO}.

\begin{theorem}[Public-key encryption from puzzles and quantum obfuscation]\label{thm:ev-qio-pke}
There exists a construction of \textsf{IND-CPA} public-key encryption with classical keys and ciphertexts (as defined in \Cref{def:pke}) given an \textsf{EV-OWPuzz} scheme (\Cref{def:ev-puzzle}) and a \textsf{qiO} scheme (as defined in Definition~\ref{def:semantic-qio}).
\end{theorem}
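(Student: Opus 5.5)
The plan is to instantiate the Sahai--Waters style template described in the technical overview, using only \textsf{qiO}, and to derive security from Lemma~\ref{lem:ev-point-hiding}. Concretely, $\mathsf{Gen}(1^\lambda)$ samples a uniform $u\leftarrow\bits^m$, where $m$ is the padded answer length of Lemma~\ref{lem:ev-isolation}. It outputs $\mathsf{pk}=O(P_u)$, an obfuscation of the point function padded to the common size $s$, and $\mathsf{sk}=u$. To encrypt a bit $b$ under $\mathsf{pk}$, we obfuscate the circuit $C_{\mathsf{pk},b}(x)$, which runs $E(\mathsf{pk},x)$ and outputs $b$ if the result is $1$ and $0$ otherwise. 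This circuit is a quantum circuit with a classical description, and we pad it to a fixed size. The ciphertext is $O(C_{\mathsf{pk},b})$. Decryption outputs $E(\mathsf{ct},u)$. Multi-bit messages are handled bitwise, with a standard hybrid.

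For correctness, we use the amplified correctness of Remark~\ref{rem:qio-simultaneous}. Except with probability $2^{-\lambda}$ over the obfuscation, $E(\mathsf{pk},\cdot)$ computes $P_u$ with error at most $2^{-\lambda}$ on every input. Hence $C_{\mathsf{pk},b}$ is an admissible implementation of the total function $x\mapsto b\cdot P_u(x)$. Its obfuscation therefore evaluates correctly at $u$, except with negligible probability, so $\mathsf{Dec}$ returns $b$. One subtlety must be checked: admissibility is only needed for honestly generated public keys, and these are good with overwhelming probability.

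For \textsf{IND-CPA} security, we argue by hybrids on the view $(\mathsf{pk},\mathsf{ct})$. In $\mathsf H_0$ the adversary sees $(O(P_u),O(C_{O(P_u),b}))$. In $\mathsf H_1$ we replace $\mathsf{pk}$ by $O(Z)$. This step requires care because the ciphertext also depends on $\mathsf{pk}$. The reduction therefore receives a single obfuscation $c^*$ (the case $t=1$ of Lemma~\ref{lem:ev-point-hiding}), computes $O(C_{c^*,b})$ itself, since encryption uses only public data, and runs the adversary. A distinguishing gap between $\mathsf H_0$ and $\mathsf H_1$ then contradicts Lemma~\ref{lem:ev-point-hiding}. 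In $\mathsf H_1$, with overwhelming probability $E(O(Z),\cdot)$ admissibly computes the zero function, so $C_{\mathsf{pk},b}$ admissibly computes the zero function for both $b=0$ and $b=1$. In $\mathsf H_2$ we replace the ciphertext by an obfuscation of a padded zero circuit of equal size. This step uses \textsf{qiO} security, with $\mathsf{pk}$ given as classical auxiliary information. To apply it, we fix $\mathsf{pk}$ in the good set and average over $\mathsf{pk}$, folding the negligible bad set into the error. Hybrid $\mathsf H_2$ is independent of $b$, and running the same hybrids for $b=0$ and $b=1$ gives a negligible advantage.

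The main obstacle is the step from $\mathsf H_1$ to $\mathsf H_2$, specifically admissibility. \textsf{qiO} security as defined applies only to fixed circuit families that admissibly implement the same function. Here, however, the circuit $C_{\mathsf{pk},b}$ is randomized by $\mathsf{pk}$. I would handle this by noting that the security definition allows arbitrary polynomial classical auxiliary information. I would then argue nonuniformly per good $\mathsf{pk}$, or, better, use a uniform reduction that samples $\mathsf{pk}\leftarrow O(Z)$ itself and treats it as auxiliary input. The error accounting for bad $\mathsf{pk}$ would use the $2^{-\lambda}$ bound from Remark~\ref{rem:qio-simultaneous}. A secondary check is that padding makes all the compared circuits equal in size.
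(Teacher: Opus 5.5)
Your proposal is correct and follows the paper's proof essentially verbatim: the same scheme, correctness via Remark~\ref{rem:qio-simultaneous}, and security via Lemma~\ref{lem:ev-point-hiding} with $t=1$, where the reduction generates the ciphertext from the given public key. Your only deviation is routing through an obfuscated padded zero circuit rather than comparing $O(C_{\mathsf{pk},0})$ and $O(C_{\mathsf{pk},1})$ directly, which is inessential. The admissibility issue you flag is handled in the paper just as in your first option: fix a worst good $\mathsf{pk}$ per $\lambda$, give it as auxiliary information, and absorb the $2^{-\lambda}$ mass of bad keys.
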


\begin{proof}
It suffices to encrypt one bit. Let $m,s$ be as in Lemma~\ref{lem:ev-point-hiding}. Choose a polynomial size $s'$ large enough for a circuit containing an obfuscation of a size-$s$ program and running the evaluator $E$. Such a bound exists because $O$ and $E$ run in polynomial time. We fix $s$ first, then $s'$, and pad all ciphertext programs to size $s'$.
\begin{enumerate}
\item $\KeyGen(1^\lambda)$: choose $u\leftarrow\{0,1\}^m$, set $\mathsf{sk}=u$, and output $\mathsf{pk}\leftarrow O(P_u)$, using the padded point circuit.
\item $\Enc(\mathsf{pk},b)$: construct the quantum circuit $C_{\mathsf{pk},b}(x)=b\wedge E(\mathsf{pk},x)$, padded to size $s'$, and output $c\leftarrow O(C_{\mathsf{pk},b})$.
\item $\Dec(u,c)$: output $E(c,u)$.
\end{enumerate}
All keys and ciphertexts are classical. The ciphertext circuit runs the quantum evaluator $E$, which is why we need obfuscation for quantum circuits.

For correctness, Remark~\ref{rem:qio-simultaneous} ensures that, except with probability $2^{-\lambda}$ over the public key, $E(\mathsf{pk},x)$ computes $P_u(x)$ with error at most $2^{-\lambda}$ on every input. For each such key, $C_{\mathsf{pk},b}$ computes $b\wedge P_u$ with the same error bound. The correctness of \textsf{qiO} then gives $E(c,u)=b$ with negligible error. This bound holds uniformly over these keys because Definition~\ref{def:semantic-qio} requires correctness for every family of such circuits. The exceptional keys contribute only negligible error.

For security, first replace the public key with $\mathsf{pk}\leftarrow O(Z)$. Lemma~\ref{lem:ev-point-hiding} with $t=1$ makes this change computationally indistinguishable. The reduction can run the adversary and generate its challenge ciphertext from the given public key, so the lemma applies to the full encryption experiment.

By Remark~\ref{rem:qio-simultaneous}, except with probability $2^{-\lambda}$, this public key evaluates to zero with error at most $2^{-\lambda}$ on every input. Both $C_{\mathsf{pk},0}$ and $C_{\mathsf{pk},1}$ then compute zero with negligible error on every input, and they also have the same size $s'$ due to padding. Hence, their obfuscations are indistinguishable by \textsf{qiO}, even given $\mathsf{pk}$ as classical auxiliary information. As above, the guarantee holds uniformly over these keys, and exceptional keys contribute negligible probability. Finally, replace the public key by an obfuscation of a random point again. These changes give indistinguishable encryptions of zero and one under the real key distribution.

For polynomially many ciphertexts, replace them one at a time; public encryption lets the reduction generate all other ciphertexts itself. Encrypting each bit of a polynomial-length message separately and changing the encrypted bits one at a time gives \textsf{IND-CPA} security for polynomial-length messages.
\end{proof}

\bibliographystyle{alpha}
\bibliography{references}
\end{document}